\newcommand{\authorname}{\\ \\ \\$p(\omega)$ \\ $1-p(\omega)$}
\DeclareMathAlphabet{\pazocal}{OMS}{zplm}{m}{n}
\let\oldReturn\Return
\renewcommand{\Return}{\State\oldReturn}
\pgfplotsset{compat=1.5}
\newtheorem{theorem}{Theorem}
\newtheorem{lemma}[theorem]{Lemma}
\newtheorem{defn}[theorem]{Definition}
\newtheorem{rem}[theorem]{Remark}
\newtheorem{prop}[theorem]{Proposition}
\newcommand{\Y}{\pazocal{Y}}
\newcommand{\W}{\pazocal{W}}
\newcommand{\A}{\pazocal{A}}
\renewcommand{\S}{\pazocal{S}}
\renewcommand{\P}{\pazocal{P}}
\newcommand{\T}{\pazocal{T}}
\newcommand{\E}{\pazocal{E}}
\newcommand{\U}{\pazocal{U}}
\newcommand{\C}{\pazocal{C}}
\newcommand{\D}{\pazocal{D}}
\newcommand{\R}{\mathbb{R}}
\newcommand{\bS}{\pazocal{\bar{S}}}
\newcommand{\bs}{{\bar{s}}}
\newcommand{\sA}{{\scriptscriptstyle{A}}}
\newcommand{\sD}{{\scriptscriptstyle{D}}}
\newcommand{\G}{{\pazocal{G}}}
\newcommand{\F}{\pazocal{F}}
\newcommand{\betA}{\beta^{\sA}_{j}}
\newcommand{\betD}{\beta^{\sD}_{j}}
\newcommand{\alphA}{\alpha^{\sA}}
\newcommand{\alphD}{\alpha^{\sD}}
\numberwithin{theorem}{section}
\newcommand{\remove}[1]{}
\def \cN{{\cal N}}
\def \*{\star}
\def \10n{\!\!\!\!\!\!\!\!\!\!}
\def \sF {{\scriptscriptstyle F}}
         \DeclareMathAlphabet{\mathscr}{U}{BOONDOX-cal}{m}{n}
         \SetMathAlphabet{\mathscr}{bold}{U}{BOONDOX-cal}{b}{n}
         \DeclareMathAlphabet{\mathbscr} {U}{BOONDOX-cal}{b}{n}
\begin{document}
\title{\LARGE \bf A Game Theoretic Approach for Dynamic Information Flow Tracking  to Detect Multi-Stage Advanced Persistent Threats}
\author{Shana Moothedath, Dinuka Sahabandu,  Joey Allen, Andrew Clark,\\ Linda Bushnell, Wenke Lee, and Radha Poovendran
    \thanks{S. Moothedath, D. Sahabandu, L. Bushnell, and R. Poovendran are with the Department of Electrical and Computer Engineering, University of Washington, Seattle, WA 98195, USA. \texttt{\{sm15, sdinuka, lb2, rp3\}@uw.edu}.}
    \thanks{A. Clark is with the Department of Electrical and Computer Engineering, Worcester Polytechnic Institute, Worcester, MA 01609, USA. \texttt{aclark@wpi.edu}.}
    \thanks{J. Allen and W. Lee are with the College of Computing, Georgia Institute of Technology, Atlanta, GA 30332 USA.
    \texttt{jallen309@gatech.edu, wenke@cc.gatech.edu}.}
    }

\maketitle

\begin{abstract}
Advanced Persistent Threats (APTs) infiltrate cyber systems and compromise specifically targeted data and/or resources through a sequence of stealthy attacks consisting of multiple stages. Dynamic information flow tracking has been proposed to detect APTs.   In this paper, we develop a dynamic information flow tracking game for resource-efficient  detection of APTs via multi-stage dynamic games. The game evolves on an information flow graph, whose nodes are processes and objects (e.g. file, network end points) in the system and the edges capture the interaction between different 
processes and objects.  Each stage of the game has pre-specified targets which are characterized by a set of nodes of the graph and the goal of the APT is to evade detection and reach a target node of that stage. The goal of the defender is to maximize the detection probability while minimizing  performance overhead on the system.  
 The resource costs of the players are different and information structure is asymmetric resulting in a {\em nonzero-sum imperfect information game}.   We  first calculate the best responses of  the players and characterize the set of Nash equilibria for single stage attacks. Subsequently, we provide a polynomial-time algorithm to compute a correlated equilibrium for the multi-stage attack case. Finally, we experiment our model and algorithms on real-world nation state attack data obtained from Refinable Attack Investigation system.
\end{abstract}

\begin{IEEEkeywords}
Multi-stage attacks, Multi-stage dynamic game, Advanced Persistent Threats (APTs), Information flow tracking
\end{IEEEkeywords}
\IEEEpeerreviewmaketitle
\section{Introduction}
Advanced Persistent Threats (APTs) are long-term stealthy attacks mounted by intelligent and resourceful adversaries with the goal of sabotaging critical infrastructures  and/or exfiltrating critical information. Typically, APTs target companies and organizations that deal with high-value information and intellectual property. APTs monitor the system for long time and  perform tailored attacks that consist of multiple stages. In the first stage of the attack, APTs start with an initial reconnaissance step followed by an initial compromise. Once the attacker establishes a foothold in the system, the attacker tries to elevate the privileges in the subsequent stages and proceed to the target through more internal compromises. The attacker then performs data exfiltration at an ultra-low-rate. 

Detecting APTs is a challenging task as these attacks are stealthy and customized. However, APTs introduce information flows, as data-flow and control- flow commands, while interacting with the system.  Dynamic Information Flow Tracking (DIFT) is a promising detection mechanism against APTs as DIFT detects adversaries in a system by tracking the traces of the information flows introduced in the system \cite{NewSon-05}. 
DIFT taints or tags sensitive information flows across the system as suspicious and tracks the propagation of  tagged flows through the system and generates security analysis referred as {\em traps}, which are based on certain pre-specified security rules, for any unauthorized usage of tagged data \cite{SuhLeeZhaDev-04}. 

Our objective in this paper is to obtain a resource-efficient  analytical model of DIFT to detect multi-stage APTs by an optimal tagging and trapping procedure. There is an inherent trade-off between the effectiveness of DIFT and the resource costs incurred  due to memory overhead for tagging and tracking non-adversarial information flows. Adversarial interaction makes game theory a promising framework to characterize this trade-off and develop an optimal DIFT defense, which is the contribution of this paper. 
Each  stage of the APT attack is a stage in our multi-stage game model which is characterized by a unique set of critical locations and critical infrastructures of the system, referred to as {\em destinations}.  Note that, the intermediate stages in the attack hold critical information to the adversary for achieving its goals in the final stage.

\noindent The contributions of this paper are the following:
\begin{enumerate}[leftmargin=*]
\item[$\bullet$] We model the  interaction of APTs and DIFT with the system as a  two-player multi-stage nonzero-sum game with imperfect information structure.  The adversary strategizes in each stage of the game to reach a destination node of that specific stage and the defender strategizes  to detect the APT in a resource-efficient manner.   A solution to this game gives an optimal  policy for DIFT that performs selective tagging that minimize both overtagging and undertagging, a tag propagation rule with tag sanitization, and an optimal selection of security rules and trap locations to conduct security analysis to maximize the  probability of APT detection while minimizing memory and performance overhead on the system. 
\item[$\bullet$] We provide algorithms to compute  best responses  of  the adversary and the  defender. The best response of the adversary is obtained by reducing it to a {\em shortest path problem} on a directed graph such that a shortest path gives a sequence of transitions  of the attacker that has  maximum probability of reaching the final target. The best response of the defender, which is a subset of nodes,  is obtained by using the {\em submodularity} property of its  payoff function. 
\item[$\bullet$] We consider a special case of the problem where the attack is a single-stage attack. For this case, we characterize the set of {\em Nash equilibrium} of the game. This characterization is obtained by proving the equivalence of the sequential game to a  suitably defined  bimatrix-game formulation.
\item[$\bullet$] We provide a polynomial-time iterative algorithm to compute a {\em correlated equilibrium} of the game for the multi-stage attack. The correlated equilibrium provide an algorithm to obtain locally optimal equilibrium strategies for both the players  by transforming the two-player game to an $N(M+2)+| \Lambda| + 1$-player game, where $N$ denotes the number of processes and objects in the system, $M$ denotes the number of stages of the APT attack, and $| \Lambda|$ denotes the cardinality of the set of security rules.
\item[$\bullet$] We perform experimental analysis of our model on the real-world multi-stage attack data obtained using Refinable Attack INvestigation (RAIN) framework \cite{JiLeeDowWanFazKimOrsLee-17},  \cite{JiLeeFazAllDowKimOrsLee-18} for a three day nation state attack.
\end{enumerate}

\subsection*{Related Work}\label{sec:rel}
\begin{table*}[t]\caption{An overview of the DIFT architectures for data-flow and control-flow based tracking for different choices of tag sources and security analyses}\label{tb:DIFT}
\begin{center}
\begin{tabularx}{0.98\textwidth}{|p{25mm}|p{5 cm}|p{8.95 cm}|} 
\hline
\diaghead{\theadfont Diag ColumnmnHea}%
{{\bf Reference}}{{\bf DIFT}} &  {\bf Tag source and Tag propagation}   &    {\bf Security analysis} \\
\hline
Newsome et al. \cite{NewSon-05}& \vspace*{-2.5 mm}  \begin{enumerate}[wide, labelwidth=!, labelindent=0pt]
\item[$\bullet$] inputs from  network sockets  
\item[$\bullet$]  data-flow-based  \vspace*{-4.5 mm}  
\end{enumerate} & attacks altering  jump targets,  format string attacks, attacks using  system call arguments, and attacks targeted at specific libraries      \\
\hline
Clause et al. \cite{ClaLiOrs:07}&\vspace*{-2.5 mm}   \begin{enumerate}[wide, labelwidth=!, labelindent=0pt]
\item[$\bullet$] data from  network hosts 
\item[$\bullet$]  data- and control-flow-based    \vspace*{-4.5 mm} 
\end{enumerate} & each instance of call, return, or branch instruction    \\
\hline
Suh et al. \cite{SuhLeeZhaDev-04} &\vspace*{-2.5 mm}  \begin{enumerate}[wide, labelwidth=!, labelindent=0pt]
\item[$\bullet$] all I/O except initial program 
\item[$\bullet$] data-flow-based   \vspace*{-4.5 mm}  
\end{enumerate} &    use of tainted data as load addresses, store addresses, jump targets, and branch conditions \\
\hline
Yin et al. \cite{YinSonEgeChrEng-07} & \vspace*{-2.5 mm}
 \begin{enumerate}[wide, labelwidth=!, labelindent=0pt]
\item[$\bullet$] text, password, HTTP, ICMP, FTP, document, and directory
\item[$\bullet$]  data-flow-based \vspace*{-4.5 mm} 
\end{enumerate}  
  &    anomalous information access, anomalous information leakage, and excessive information access\\
\hline
Vogt et al. \cite{VogNenJovKirKruVig:07} & \vspace*{-2.5 mm}   \begin{enumerate}[wide, labelwidth=!, labelindent=0pt]
\item[$\bullet$] all inputs specified by Netscape 
\item[$\bullet$] data-flow-based \vspace*{-4.5 mm} 
\end{enumerate}    & whenever tainted data is transferred to a third party   \\
\hline
Dalton et al. \cite{DalKanChr-07} & \vspace*{-2.5 mm}   \begin{enumerate}[wide, labelwidth=!, labelindent=0pt]
\item[$\bullet$] every word of memory 
\item[$\bullet$] data-flow-based   \vspace*{-4.5 mm} 
\end{enumerate}&   high level semantic attacks, memory corruption, low-overhead security exceptions \\
\hline
\end{tabularx}
\end{center}
\end{table*}
There are different architectures for DIFT available in the literature to prevent a wide range of attacks  \cite{ClaLiOrs:07}. The fundamental concepts in these architectures remain same, however, they differ in the choice of tagging units, tag propagation rules: data- and control-flow dependencies based rules, and the set of  security rules used for verification of the authenticity of the information flows \cite{SchAvgBru:10}. Table~\ref{tb:DIFT} gives a  brief overview of the different DIFT architectures used in some representative papers. While the papers in Table~\ref{tb:DIFT} gave software modeling of DIFT architecture, we provide an analytical model of DIFT. Specifically, we model DIFT to detect APTs by tracking  information flows using data-flow dependencies.
 
Game theory has been widely used  in the literature to analyse and design  security in cyber systems against different types of adversaries \cite{Tam-11},  \cite{Alp-10}.  For instance, the FlipIt game  modeled in \cite{van2013flipit} captures  the interaction between APTs and the  defender when both the players are trying to take control of a cyber system. In  \cite{van2013flipit},  both APT and defender  take actions periodically and pay a cost for each of their action.  Lee {\em et al.} in \cite{LeeClaAloBusPoo-15} introduced a control-theoretic approach to model competing  malwares in  FlipIt game. 
Game models are available for  APT attacks in  cloud storage \cite{MinXiaXieHajMan-2018} and  cyber systems \cite{RasKonSch-17}. Interaction between an APT and a defender that allocate Central
Processing Units (CPUs) over multiple storage devices in a cloud
storage system is formulated as a Colonel Blotto (zero-sum) game in \cite{MinXiaXieHajMan-2018}. Another zero-sum game model is given in \cite{RasKonSch-17} to model the competition between the APT and the defender in a cyber system.

Often in practice, the resource costs for the defender and the adversary are not  the same, hence the game model is nonzero-sum. In this direction, a nonzero-sum game model is given in \cite{HuLiFuCanMoh-15} to capture the interplay between the defender, the APT attacker,  and the insiders  for joint  attacks.   The approach in \cite{HuLiFuCanMoh-15}  models  the incursion stage of the APT attack,  while our model in this paper captures the different stages of an APT attack. More precisely, we  provide a multi-stage game model that detect APTs by implementing a data-flow-based DIFT detection mechanism while minimizing resource costs. 

A DIFT-based  game model for single-stage attack  is given  in the recent work \cite{SahXiaClaLeePoo-18}. Later, \cite{MooSahClaLeePoo-18} extended the model in \cite{SahXiaClaLeePoo-18}  to  the case of multi-stage attack. The approaches in \cite{SahXiaClaLeePoo-18} and \cite{MooSahClaLeePoo-18} consider a DIFT architecture in which the locations in the system to perform security analysis, called as {\em traps} or tag sinks, are pre-specified and the defender will select the data channels that are to be  tagged. In this paper, we provide an analytical model for data-flow based DIFT architecture that select not only the data channels to be tagged but also the locations to conduct security analysis and also the security rules that are to be verified. The proposed model, hence captures a general model of data-flow based DIFT.
\subsection*{Organization of the Paper}
The rest of the paper is organized as follows:  Section~\ref{sec:pre} describes the preliminaries of DIFT and the system. Section~\ref{sec:formulation} introduces the notations used in the paper and  then presents the game formulation. Section~\ref{sec:equilibria} discusses the solution concept  for the game we consider. Section~\ref{sec:results1} presents a solution approach to the game  for the single-stage attack. Section~\ref{sec:results2} presents a  solution to the game  for the multi-stage attack.  Section~\ref{sec:sim} explains the experimental results of the model and results on real-world data. Finally, Section~\ref{sec:conclu} gives the concluding remarks.

\section{Preliminaries}\label{sec:pre}
In this section, we discuss the detection mechanism DIFT and the graphical representation of the system referred to as {\em information flow graph}.

\subsection{Dynamic Information Flow Tracking}\label{sec:detection_system}
DIFT detection system has three major components: 1)~tag sources, 2)~tag propagation rules, and 3)~tag sinks or traps. 
Tag is a single or multiple bit marking, depending on the level of granularity,  that denotes the sensitivity of a data flow. Data channels, such as keyboards, network interface, and hard disks, are considered as sensitive and hence tagged  by DIFT when it holds information that could be exploited by an APT \cite{VogNenJovKirKruVig:07}.  All information flows emanating from a tagged channel are  tagged flows. 
The tag status of the information flows propagate through the system based on the  pre-specified propagation rules which are either data-flow-based or data- and control-flow-based. Hence, whenever a tagged flow mixes with a benign flow, the resulting flow gets tagged \cite{ClaLiOrs:07}.


Tagged flows are inspected at specific locations called tag sinks also referred as {\em traps} in order to determine the runtime behavior of the system. Tag sinks are specified either using the memory and code locations (like tag sources) or using types of instructions where the users want to analyze a tagged flow before executing certain types of instructions  \cite{ClaLiOrs:07}. Tag sinks are generated in the system when an unusual usage of a tagged information  is detected. The system then obtains the details of the associated flow,  like terminal points of the flow, the path traversed, and concludes if the flow is spurious or not based on the system's or program's security rules.  In case if the system concludes that the flow is spurious, it terminates the system operation. On the other hand, if the flow is  found to be not spurious, then the system continues its operation.  

Conventional DIFT will tag all the sensitive  channels in the system.   
This, however,  results in tagging of numerous authentic flows referred as {\em overtagging} \cite{SchAvgBru:10}   which leads to  false alarms and  performance overhead resulting in system slowdown. On the other hand,  untagged spurious flows due to  {\em undertagging} are security threats to the system. Moreover, conventional DIFT only adds tag and never removes tag leading to {\em tag spread} \cite{SchAvgBru:10}.  
To reduce tag spread and the overhead caused by tagging,  the notion of {\em tag sanitization}  was introduced  in \cite{SchAvgBru:10}. The output of constant operations (where the output is independent of the source data) and a tagged flow successfully passing all security rules can be untagged.   An efficient tagging policy must incorporate tag sanitization and perform selective  tagging  in such a way that both  overtagging and undertagging are minimized.  Also, the selection of security rules and the locations of the  tag sinks  must be optimal to reduce performance and memory  overhead on the system. 
  \subsection{Information Flow Graph}
  Information flow graph $\G= (V_{\G}, E_{\G})$ is a graphical representation of the system in which the node set $V_{\G} =  \{s_1,\ldots, s_N\}$ corresponds to the processes, objects, and files in the system and  edge set $E_{\G} \subseteq V_{\G} \times V_{\G}$ represents  interactions between different nodes. More precisely, the edges of the graph represent  information flows captured using  system log data of the system, for the whole-system execution and workflow during the entire period of logging.  The node set $\D \subset V_{\G}$ denote the  subset of nodes that correspond to critical data centers and the  critical infrastructure sites of the system known as destinations.   We consider multi-stage attacks consisting of, say 
 $M$ stages, where each stage is characterized by a unique set of  destinations.
 The set $\D_j:=\{d_1^j, \ldots, d_{n_j}^j \}$ denotes  the set of destinations in the $j^{\rm th}$ stage of the attack and hence $\D := \cup_{j=1}^M \D_j$. The interaction of DIFT and APTs, which we formally model in Section~\ref{sec:formulation}, evolves through $\G$.
\section{Problem Formulation: Game Model}\label{sec:formulation}
In this section, we model a two player multi-stage game between APTs and DIFT. We model the different stages of the game in such a way that each stage of the APT attack translates to a stage in the game.

\subsection{System Model} 
We denote the adversarial player of the game by $\P_{\sA}$  and the defender player by  $\P_{\sD}$. In the $j^{\rm th}$ stage of the attack, the objective of $\P_{\sA}$ is to evade detection and reach a destination node in stage $j$, given by $\D_j$. The objective of $\P_{\sD}$ is to detect $\P_{\sA}$  before $\P_{\sA}$  reaches a node in $\D_j$. In order to detect $\P_{\sA}$,
$\P_{\sD}$ identifies a set of processes   $\Y := \{y_1,\ldots,y_{h} \} \subseteq V_{\G}$  as the tag sources such that any information flow passing through a process $y_i \in \Y$ is  marked as sensitive.  
$\P_{\sD}$ tracks the  traversal of a tagged  flow through the system  and   generates security analysis  at tag sinks denoted as $\T := \{t_1,\ldots,t_{h'} \} \subset V_{\G}$ using pre-specified rules. 

Let  $\Lambda$ be the set of  security rules.
We consider  security policy that are based on the terminal points of the flow. Therefore, $\Lambda: V_{\G}\times V_{\G} \rightarrow \{0, 1\}$, where $1$ represents that the pair of terminal points of the flow violate the security policy of the system and $0$ otherwise. Here,  $|\Lambda| \leqslant N^2$, since not all node pairs in $V_{\G}$ have a directed path between  them. Hence the number of security rules that are relevant to a node is atmost $N$. Without loss of generality, we assume that each node in $\G$ is associated with $N$ security rules.  As $N$ is large,  applying all $N$ security rules at every tag sink is not often required. In our game model,  DIFT selects a subset of rules  at every tag sink to perform security analysis.  

\subsection{State Space of the Game}
Let $\lambda \subset V_{\G}$ denote the subset of nodes in the information flow graph that are susceptible (vulnerable) to attacks.  In order to characterize the entry point of the attack by a unique node, we introduce a  {\em pseudo-process}  $s_0$ such that $s_0$  is connected to all the processes in the set $\lambda$. 
Let $\mathcal{S} := V_\G \cup \{s_0\}$,  $E_{\lambda} := \{s_0\} \times \lambda$, and $\E := E_{\G} \cup  E_{\lambda}$.
Note that, $s_0$ is the root node  of the modified graph  and hence transitions are allowed {\em from} $s_0$ and no transition is allowed {\em into} $s_0$. 

Now we define the state space of the game. Each decision point in the game is a state of the state space and is defined by the source of the flow in set $\lambda$, the stage of the  attack, the current process $s_i$ along with its tag status, trap status,  and the status of the $N$ security rules applicable at $s_i$.  We use $s_i^j$ to denote the process $s_i$ at the $j^{\rm th}$ stage of the attack. Then the state space of the game is denoted by $\bS  := \{V_{\G} \times \lambda \times \{1, \ldots, M\} \times \{0, 1\}^{2+N}\} \cup \{(s_0^1, \underbrace{0, \ldots, 0}_{2+N\mbox{~times}})\}$, where $\bS= \{\bs_1, \ldots, \bs_{T} \}$ with $T=(2^{(2+N)}NM|\lambda|)+1$. Here  $\bs_1 =(s_0^1, 0, \ldots, 0)$ is the state in $\bS$ corresponding to the pseudo-node $s_0$. The remaining states are given by $\bs_i = (s_i^j, \lambda_i,  k_i^1, \ldots, k_i^{(2+N)})$, for $i=2,\ldots, T$, where  $s_i \in V_{\G}$, $j \in \{1, \ldots, M\}$, $\lambda_i \in \{1, \ldots, |\lambda|\}$,  and $k_i^1, \ldots, k_i^{2+N} \in \{0,1\}$.  Here, $k_i^1=1$ if $s_i$ is tagged and $k_i^1=0$ otherwise. Similarly, $k_i^2=1$ if $s_i$ is a tag sink and $k_i^2 = 0$ otherwise, and $k_i^3, \ldots, k_i^{2+N}$  denotes the selection of security rules (bit $1$ denotes that a rule is selected and bit $0$ denotes that the rule is not selected). Note that $\bS$ has exponential cardinality. Tagging $s_0$ means tagging all sensitive flows which is not desirable on account of the performance overhead. Therefore,  $s_0$ is neither a tag source nor a tag sink and it is always in stage~1 with origin at $s_0$ itself as denoted by state $\bs_1$.  
We give the following definition for an adversarial flow in the state space $\bS$ originating at the state  $(s_0^1, 0, \ldots, 0)$. 
\begin{defn}\label{def:stage-constraint}
An information flow in  $\bS$ that originates at  state $(s^1_0 , 0, \ldots, 0)$ and terminates at  state $(s^{j}_{i}, \lambda_i,  k_i^1, \ldots, k_i^{2+N})$   is said to satisfy the \underline {stage-constraint}  if the flow passes through some destinations  in $\D_1, \D_{2}, \ldots,\D_{ j-1}$ in order.
\end{defn}
\subsection{Actions of the Players}
The players $\P_{\sA}$ and $\P_{\sD}$ have finite action sets over the state space $\bS$ denoted by sets $\A_{\sA}$ and $\A_{\sD}$, respectively.  The action set of $\P_{\sA}$ is  a subset of  $V_{\G}$ and represents the next node in $\G$ that is reached by the flow.  $\P_{\sA}$ can also end the game  by dropping the information flow at any point of time by  transitioning to a {\em null} state $\phi$. 
Thus $\A_{\sA} =\{s_i^j: s_i \in \mathcal{S}, j \in \{1, \ldots, M \}\}\cup \{\phi\}$. Note that, $\lambda_i$ for a state $(s_i^j, \lambda_i,  k_i^1, \ldots, k_i^{(2+N)})$ in $\bS$ is decided by the process in $\lambda$ to which the adversary transitions from $s_0$, i.e., the transition from $(s_0, 0, \ldots, 0)$ in the state space. Further, $\lambda_i$ for a particular adversarial flow remains fixed for all states in $\bS$ that the flow traverses.   As the tag propagation rules are pre-specified by the user, the action set of $\P_{\sD}$ includes selection of tag sources, tag sinks, security check rules, and tag sanitization.  Hence the action set of defender at  $s^j_i$  is a binary tuple, $(k_i^1, \ldots, k_i^{2+N})$, and  $\A_{\sD} = \{0, 1\}^{NM(2+N)}$.  While the objective of $\P_{\sA}$ is to exploit the vulnerable processes $\lambda$  of the system to successfully launch an attack, the objective of $\P_{\sD}$ is to select an optimal set of tagged nodes, say $\Y^\* \subset \Y$, and an optimal set of tag sinks, say $\T^\* \subset \T$,  and a set of security rules  such that any spurious information flow in the system is detected at some tag sink before reaching the destination.

\subsection{Information of the Game}
Both the adversary and the defender know the graph $\G$. At any state $\bs_i$ in the game, the defender has the information about the  tag source status of $\bs_i$,  the tag sink status of $\bs_i$, and the set of security rules chosen at $\bs_i$.  However, the adversary is unaware of  the tag source status, the tag sink status, and the  security rules chosen at that state. 
On the other hand, while the adversary knows the  stage of the attack,  the defender does not know the  stage of the attack and hence the unique set of destinations targeted by $\P_{\sA}$ in that particular stage. Thus,  the players  $\P_{\sA}$ and $\P_{\sD}$  have asymmetric knowledge resulting in an {\em imperfect information} game.

\subsection{Strategies of the Players}
Now we define the strategies of both the players.  A strategy is a rule that the player uses to  select actions at every step of the game. 
Since the action sets of the players are lower level processes with memory constraints and computational limitations, we consider  stationary strategies which are defined below for both the players.
\begin{defn}\label{def:stationary}
A player strategy is stationary if it depends only on the current state.
\end{defn}

Additionally, we consider {\em mixed strategies} and hence there are probability distributions over  the action sets $\A_{\sA}$ and $\A_{\sD}$.   The defender strategy at a process $s_i$, ${\bf p}_{\sD}(s_i)$ is a tuple of length $2+N$, $({{\bf p}^1_{\sD}}(s_i), \ldots, {{\bf p}^{2+N}_{\sD}}(s_i))$, that consists of the probability that $s_i$ is tagged ${{\bf p}^1_{\sD}}(s_i)$, the probability that $s_i$ is a tag sink ${{\bf p}^2_{\sD}}(s_i)$,  and the probability of selecting each rule in $\Lambda$ corresponding to $s_i$, $({{\bf p}^3_{\sD}}(s_i), \ldots, {{\bf p}^{2+N}_{\sD}}(s_i))$. The pseudo-process $s_0$ has ${{\bf p}^{i'}_{\sD}}(s_0) = 0$ for $i' \in \{1, \ldots, 2+N\}$.   Note that the defender strategy does not depend on the stage, as  the defender is unaware of the stage of the attack. 
The adversary on the other hand knows the stage of the attack and hence the strategy of $\P_{\sA}$, i.e., the transition probability distribution ${{\bf p}_{\sA}}: \mathcal{S} \times \{1, \ldots, M\} \rightarrow [0, 1]^{\A_{\sA}}$, depends on the attack stage. Consider a process $s_i$ and let $\cN(s_i)$ denotes the set of neighbors of $s_i$ defined as  $\cN(s_i) := \{s_{i'}: (s_i, s_{i'}) \in \E\} \cup  \{\phi\}$. Then, ${{\bf p}_{\sA}}(s_i^j,  s^{j'}_{i'}) \neq 0$ implies that one of the following cases hold: 1)~$j = j'$ and $ s_{i'} \in \cN(s_i)$, and~2)~$j' = j+1$ and $s_i = s_{i'} \in \D_j$. Here, case~1) corresponds to transition  in the same stage to a neighbor node or dropping out of the game and  case~2) corresponds to transition at a destination from one stage to the next stage. Note that, in case~2) (i.e., $j' = j+1$ and $s_i = s_{i'} \in \D_j$) ${{\bf p}_{\sA}}(s_i^j,  s^{j+1}_{i}) = 1$.   Also,  $\sum_{\substack{s_{i'} \in \cN(s_i)}}{{\bf p}_{\sA}}(s^j_i, s^{j'}_{i'})=1$.
Taken together,  the strategies of $\P_{\sA}$ and $\P_{\sD}$ are given by the vectors ${{\bf p}_{\sD}}=\{({{\bf p}^1_{\sD}}(s_i), \ldots, {{\bf p}^{2+N}_{\sD}}(s_i)): s_i \in \mathcal{S} \}$ and ${{\bf p}_{\sA}}=\{{{\bf p}_{\sA}}(s^j_i, s^{j'}_{i'}): s_i \in \mathcal{S}, j, j' \in \{1, \ldots, M\}, \mbox{~and~} s_{i'} \in \cN(s_i) \}$, respectively. Note that, ${\mathbf{p}}_{\sA}$ is a vector whose length equals the number of edges in the state space $\bS$, say $\hat{|\E|}$, while ${\mathbf{p}}_{\sD}$ is a vector of length $|\mathcal{S}|$ with each entry of length $2+N$.   Notice that ${\bf p}_{\sA}$ is defined in such a way that  a flow that originate at $(s_0^1, 0, \ldots, 0)$ in the state space $\bS$  reaches a state $(s^j_i, \lambda_i,  k_i^1, \ldots, k_i^{2+N})$, for some $\lambda_i \in \{1, \ldots, |\lambda|\}$ and for some $k_i^1, \ldots, k_i^{2+N} \in \{0, 1\}$,  after passing through some destinations of stages~$1, \ldots, j-1$. By this definition of state space and strategies of the game, all  information flows  in $\bS$  satisfy the stage-constraints, given in Definition~\ref{def:stage-constraint}, and can affect the performance of the system and even result in system breakdown, if  malicious.

\subsection{Payoffs to the Players}
Now we define the payoff functions of the players $\P_{\sA}$ and $\P_{\sD}$, denoted by $U_{\sA}$ and $U_{\sD}$, respectively. The payoff function for both the players include penalties and rewards  at every stage of the attack. If the adversarial flow reaches a destination in the $j^{\rm th}$ stage satisfying the stage-constraint, then the adversary earns an {\em intermediate reward} and the defender incurs an {\em intermediate penalty}. On the other hand, if the adversary gets detected at some stage~$j$, then the adversary incurs a {\em penalty}, the defender receives a {\em reward}, and the game terminates. 
 In addition to this, the defender is also associated with  costs for tagging the nodes,  setting tag sinks at the nodes, and selecting  security rules from the set $\Lambda$, as tagging and security analysis of information flows lead to resource overhead such as memory and storage.
   
More precisely, $U_{\sA}$ consists of: (i)~reward $\betA > 0$ for successfully reaching a destination in the $j^{\rm th}$ stage satisfying the stage-constraints, and (ii)~cost $\alphA < 0$ if the adversary is detected by the defender.  Similarly,  $U_{\sD}$ consists of: (a)~memory cost $\C_{\sD}(s_i) < 0$ for tagging node $s_i \in V_{\G}$, (b)~memory cost $\W_{\sD}(s_i) < 0$ for setting tag sink at  node $s_i \in V_{\G}$, (c)~cost $\gamma_i$, for $i \in \{1, \ldots, N\}$, for selecting the $i^{\rm th}$ security check rule at a tag sink, (d)~cost $\betD < 0$ if the adversary reaches a destination in the $j^{\rm th}$ stage satisfying the stage-constraint, and (e)~reward $\alphD > 0$ for detecting the adversary. We assume that the cost of tagging a node and the cost of setting tag sink at a node, $\C_{\sD}(s_i)$ and $\W_{\sD}(s_i)$, respectively,  are independent of the attack stage. However, $\C_{\sD}(s_i)$ and $\W_{\sD}(s_i)$ depends on the average traffic at process $s_i$ and hence $\C_{\sD}(s_i) := c_1\, B(s_i)$ and $\W_{\sD}(s_i) := c_2\, B(s_i)$. Here, $c_1 \in \R_-$ is a fixed tagging cost and $c_2 \in \R_-$ is a fixed cost for setting tag sink, where $\R_-$ is the set of negative real numbers, and $B(s_i)$ denotes the average    traffic at node $s_i$.  

Recall that, the origin of any adversarial information flow in the state space $\bS$ is  $(s_0^1, 0, \ldots, 0)$. 
  For a flow originating at  state $(s_0^1, 0, \ldots, 0)$ in $\bS$,  let $p_{T} (j)$ denotes the probability that  the flow will get detected at stage $j$ and $p_{R}(j)$ denotes the probability that the flow will  reach some destination  in set $\D_{j}$. Note that $p_{T} (j)$ and $p_{R} (j)$ depends on the tag source status,  the tag sink status and also the set of security rules selected.  For a given strategy, ${\bf p}_{\sD}$ and $ {\bf p}_{\sA}$,  the payoffs  $U_{\sD}$ and $U_{\sA}$ are given by,  
\vspace*{-2 mm}
\begin{eqnarray}
U_{\sD}({\bf p}_{\sD}, {\bf p}_{\sA})  &=&  \sum_{s_i \in \S}\Big( {\bf p}^1_{\sD}(s_i)\, \C_{\sD}(s_i) + {\bf p}^2_{\sD}(s_i)\, \W_{\sD}(s_i)+  \nonumber \\
&&\hspace*{-14 mm} \sum_{r=1}^{N} {\bf p}^{2+r}_{\sD}(s_i)\, \gamma_r \Big)+ \sum_{j =1}^M \Big(p_{T} (j)\alpha^{\sD}+ p_{R}(j)  \beta^{\sD}_{j}\Big), \label{eq:Ud}\\
U_{\sA} ({\bf p}_{\sD}, {\bf p}_{\sA}) &=& \sum_{j =1}^M \Big(p_{T} (j)\alpha^{\sA} + p_{R}(j)  \beta^{\sA}_{j}\Big)\label{eq:Ua}.
\end{eqnarray}
\subsection{Preliminary Analysis of the Model}
In this subsection, we perform an initial analysis of our model.
A multi-stage attack consisting of $M$ stages belongs to one of the following $M+2$ scenarios. 
\begin{enumerate}[leftmargin=*]
\item[1)] The adversary drops out of the game before reaching some destination in $\D_1$.
\item[2)] The adversary reaches some destination each in $\D_1, \ldots, \D_{j}$  and then drops out of the game, for $j=1, \ldots, M-1$ ($M-1$ possibilities).
\item[3)] The adversary reaches some destination each in $\D_1, \ldots, \D_{M}$.
\item[4)] The defender detects the adversary at  some stage.
\end{enumerate}
The utility of the game is different for each of the cases listed above. In scenario~1), $\P_{\sA}$ and $\P_{\sD}$  incurs zero payoff. In scenario~2), adversary earns  rewards for reaching stages~$1, \ldots, j$, respectively,   defender incurs penalty for not detecting the adversary at stages $1, \ldots, j$, respectively, and the game terminates. In scenario~3),  the adversary earns  rewards for reaching destinations in all stages and wins the game and the defender incurs a total penalty for not detecting the adversary at all the stages.    In the last scenario,  adversary incurs the penalty for getting detected and the defender earns  the reward for detecting the adversary and wins the game.

\begin{figure*}[!h]
\normalsize
\begin{eqnarray}
\U^{\sD}(s_i^{j'}, \lambda_i,  \bar{k}_i) &=& 
\sum_{s_b \in \S} \Big( p^1_{F, b}(s_i^{j'}, \lambda_i,  \bar{k}_i)\,\C_{\sD}(s_b) + p^2_{F, b}(s_i^{j'}, \lambda_i,  \bar{k}_i)\,\W_{\sD}(s_b) +    \sum_{r=1}^{N} p^{2+r}_{F, b}(s_i^{j'}, \lambda_i, \bar{k}_i)\,\gamma_r \Big)+\nonumber  \\ 
& &   \sum_{j=1}^M\Big( p_{R, j}(s_i^{j'}, \lambda_i, \bar{k}_i) ( \sum_{v=1}^{j}\beta^{\sD}_{v})   + P_{T}(s_i^{j'},\lambda_i,  \bar{k}_i)\alpha^{\sD}\Big), \label{eq:u1} \\
\U^{\sA}(s_i^{j'}, \lambda_i,  \bar{k}_i)&=& 
\sum_{j=1}^M \Big(p_{R, j}(s_i^{j'}, \lambda_i, \bar{k}_i) ( \sum_{v=1}^j  \beta^{\sA}_{v}) +  P_{T}(s_i^{j'}, \lambda_i,  \bar{k}_i)\alpha^{\sA}\Big).\label{eq:u2}
\end{eqnarray}
\hrulefill
\end{figure*}

For calculating the payoffs of $\P_{\sD}$ and $\P_{\sA}$ at a decision point in the game (i.e., at a state in $\bS$), we define  {\em utility functions} $\U^{\sA}: \bS \rightarrow \mathbb{R}$ and $\U^{\sD}: \bS \rightarrow \R$ for the adversary and defender, respectively, at every state in the state space $\bS$. 
Let $q(s_i^{j'})$ denotes the probability with which the adversary  drops out of the game at state $(s_i^{j'}, \lambda_i, k_i^1, \ldots,  k_i^{2+N})$, for any $\lambda_i \in \lambda$ and $k_i^1, \ldots, k_i^{2+N} \in \{0, 1\}$. Let $P_{R,j}(s_{i}^{j^{\prime}}, \lambda_i,  k_i^1, \ldots, k_i^{2+N})$ denotes the probability that an information flow originating at $(s_0^1, 0, \ldots, 0)$ reaches a destination in $\D_{j}$ and then drops out before reaching a destination in $\D_{j+1}$, without getting detected by the defender, when the current state is $(s_{i}^{j^{\prime}}, \lambda_i,  k_i^1, \ldots, k_i^{2+N})$.   Also let $P_{T}(s_{i}^{j'},k_i^1, \ldots, k_i^{2+N})$ denote the probability that an information flow  is detected by the defender when the current state is $(s_i^{j'}, \lambda_i,  k_i^1, \ldots, k_i^{2+N})$. 
To characterize the utility of the players at a state in $\bS$, we now introduce few notations. For notational brevity, let us  denote  $ k_i^1, \ldots, k_i^{2+N}$  by $\bar{k}_i$, for $i=1, \ldots, N$.   For state $(s_{i}^{j^{\prime}}, \lambda_i,  \bar{k}_i)$, define \begin{eqnarray*}{\scalebox{0.85}{\mbox{$Q_r(s_{i}^{j^{\prime}}) \hspace*{-1 mm} := 
\hspace*{-4 mm} \sum\limits_{\substack{s_{\ell} \in \cN(s_i) \\{k_{\ell}^g \in \{0, 1\}}}} \hspace*{-2 mm} {{\bf p}_{\sA}(s_{i}^{j^{\prime}},s_{\ell}^{r})}  \Big[ \prod\limits_{g=1}^{2+N}  \Big({\bf p}^g_{\sD}(s_{\ell})\Big)^{k_{\ell}^g}{\Big(1-\bf p}^g_{\sD}(s_{\ell})\Big)^{(1-k_{\ell}^g)}   \Big]  P_{R,j}(s_{\ell}^{r}, \lambda_i, \bar{k}_{\ell}),$}}}
\end{eqnarray*} 
\begin{eqnarray*}{\scalebox{0.85}{\mbox{$\overline{Q}_r(s_{i}^{j^{\prime}}) \hspace*{-1 mm} :=\hspace*{-4 mm}  \sum\limits_{\substack{s_{\ell} \in \cN(s_i) \\{k_{\ell}^g \in \{0, 1\}}}}\hspace*{-2 mm}   
{{\bf p}_{\sA}(s_{i}^{j^{\prime}},s_{\ell}^{r})}  \Big[ \prod\limits_{g=1}^{2+N}  \Big({\bf p}^g_{\sD}(s_{\ell})\Big)^{k_{\ell}^g}{\Big(1-\bf p}^g_{\sD}(s_{\ell})\Big)^{(1-k_{\ell}^g)}   \Big]  P_{T}(s_{\ell}^{r}, \lambda_i, \bar{k}_{\ell})$}}}.
\end{eqnarray*} 
Then, 
\begin{eqnarray*}
P_{R,j}(s_{i}^{j^{\prime}}, \lambda_i, \bar{ k}_i) \hspace*{-2 mm}&=&\hspace*{-2 mm}
 \left\{
\begin{array}{ll}
0, &  k_i^1=\cdots = k_i^{2+N}=1\\
q(s_{i}^{j^{\prime}}) +  Q_{j'+1}(s_{i}^{j^{\prime}}),  &  s_{i} \in \D_j,~j^{\prime} = j
\\
0, &  s_{i} \in \D_{j^{\prime}}, j^{\prime} = j+1\\
0,  &  j^{\prime} > j+1 \\
 Q_{j'}(s_{i}^{j^{\prime}}), &  j^{\prime} \leq j \\
q(s_{i}^{j^{\prime}}) + Q_{j'}(s_{i}^{j^{\prime}}), 
&  j^{\prime} = j+1\\
\end{array}
\right.\\
P_{T}(s_{i}^{j^{\prime}}, \lambda_i, \bar{k}_i) \hspace*{-2 mm}&=&\hspace*{-2 mm}
\left\{
\begin{array}{ll}
1, & \hspace*{15 mm} k_i^1=\cdots = k_i^{2+N}=1 \\
0, & \hspace*{15 mm}  j'=M,   s_{i} \in \D_{M} \\
\overline{Q}_{j'}(s_{i}^{j^{\prime}}), & \mbox{otherwise.} \\
\end{array}
\right.
\end{eqnarray*}

Using the definitions of $P_{R, j}(\cdot)$ and $P_{T}(\cdot)$ at a state in $\bS$, the payoffs of the defender and the adversary at a state $(s_i^{j'}, \lambda_i,  k_i^1, \ldots,  k_i^{2+N})$ is given by Eqs.~\eqref{eq:u1} and~\eqref{eq:u2} respectively.

In Eqs.~\eqref{eq:u1} and~\eqref{eq:u2}, $p^1_{F, b} (s_i^{j'}, \lambda_i, \bar{k}_i)$ denotes  the probability that node $s_b \in V_{\G}$ is tagged  in a flow whose current state is  $(s_i^{j'}, \lambda_i, \bar{k}_i)$ and $p^2_{F, b} (s_i^{j'},\lambda_i,  \bar{k}_i)$ denotes  the probability that node $s_b \in V_{\G}$ is a tag sink  in a flow whose current state is  $(s_i^{j'}, \lambda_i, \bar{k}_i)$. Similarly, $p^{2+r}_{F, b} (s_i^{j'}, \lambda_i, \bar{k}_i)$ denotes  the probability that the $r^{\rm th}$ security rule is selected for inspecting authenticity of a flow whose current state is $(s_i^{j'}, \lambda_i, \bar{k}_i)$.  Eqs.~\eqref{eq:u1} and~\eqref{eq:u2}   give a system of $2^{(2+N)}NM|\lambda|+1$ linear equations each for the utility vectors  $\U^{\sD}$ and $\U^{\sA}$, where $\U_{\sA}(b)$, $\U_{\sD}(b)$ denote the utilities at the $b^{\rm th}$ state in $\bS$. 
Now we give the following result, which  relates global payoffs  $U_{\sD}, U_{\sA}$ with local payoffs $\U_{\sD}, \U_{\sA}$, respectively.

\begin{lemma}\label{lem:s_0}
Consider the defender and adversary strategies ${\bf p}_{\sD}$ and ${\bf p}_{\sA}$, respectively. Then, the following hold: (i)~$U_{\sA}({\bf p}_{\sD}, {\bf p}_{\sA}) = \U_{\sA}(s_0^1, 0, \ldots, 0)$,
and (ii)~$U_{\sD}({\bf p}_{\sD}, {\bf p}_{\sA}) = \U_{\sD}(s_0^1, 0, \ldots, 0)$.
\end{lemma}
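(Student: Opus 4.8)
The plan is to evaluate the local utilities $\U_{\sD}$ and $\U_{\sA}$ of Eqs.~\eqref{eq:u1}--\eqref{eq:u2} at the root state $(s_0^1, 0, \ldots, 0)$ and match them term-by-term against the global payoffs $U_{\sD}, U_{\sA}$ of Eqs.~\eqref{eq:Ud}--\eqref{eq:Ua}. The two expressions share an identical algebraic skeleton: a cost block (present only for the defender), a reward block built from the stage quantities, and a detection block. So the entire lemma reduces to establishing a handful of probabilistic identities at the root state and then performing a single interchange in the order of summation.

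First I would pin down the meaning of the recursively-defined quantities $P_{T}$ and $P_{R,j}$ when they are initialized at $(s_0^1,0,\ldots,0)$. The defining equations for $P_{R,j}$ and $P_{T}$ (through $Q_r$ and $\overline{Q}_r$) are precisely the first-step / law-of-total-probability decompositions for the Markov chain that the stationary mixed strategies ${\bf p}_{\sA}, {\bf p}_{\sD}$ induce on $\bS$, with absorption at detection and at drop-out. Read this way, $P_{T}(s_0^1,0,\ldots,0)$ is the total probability that a flow launched at $s_0$ is ever detected, which is the disjoint sum $\sum_{j=1}^{M} p_{T}(j)$; and $P_{R,j}(s_0^1,0,\ldots,0)$ is the probability that the flow reaches $\D_j$ and then drops out before $\D_{j+1}$ without detection, i.e.\ the probability that it \emph{stops at exactly stage} $j$. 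Because the stage-constraint of Definition~\ref{def:stage-constraint} forces a flow to pass through $\D_1,\ldots,\D_{j-1}$ before it can reach $\D_j$, the event ``reaches $\D_j$'' is the disjoint union over $v\geq j$ of the events ``stops at exactly stage $v$''; hence $p_{R}(j)=\sum_{v=j}^{M} P_{R,v}(s_0^1,0,\ldots,0)$. For the cost block I would argue that, since the defender applies its stationary mixed strategy independently at each node, the flow-level probabilities collapse at the root to the strategy itself: $p^1_{F,b}(s_0^1,0,\ldots,0)={\bf p}^1_{\sD}(s_b)$, $p^2_{F,b}(s_0^1,0,\ldots,0)={\bf p}^2_{\sD}(s_b)$, and $p^{2+r}_{F,b}(s_0^1,0,\ldots,0)={\bf p}^{2+r}_{\sD}(s_b)$.

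With these identifications the detection and cost blocks match immediately, so only the reward block requires work. Substituting $P_{R,j}(s_0^1,0,\ldots,0)$ into Eq.~\eqref{eq:u2} and interchanging the order of the double sum gives
\begin{equation*}
\sum_{j=1}^{M} P_{R,j}(s_0^1,0,\ldots,0)\Big(\sum_{v=1}^{j}\betAa_{v}\Big)=\sum_{v=1}^{M}\betAa_{v}\sum_{j=v}^{M}P_{R,j}(s_0^1,0,\ldots,0)=\sum_{v=1}^{M}\betAa_{v}\,p_{R}(v),
\end{equation*}
which is exactly the reward block of $U_{\sA}$ in Eq.~\eqref{eq:Ua}; the identical computation with $\betDd_{v}$ in place of $\betAa_{v}$ handles the defender, and the matched detection and cost blocks complete the identity. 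Assembling the three blocks yields $\U_{\sA}(s_0^1,0,\ldots,0)=U_{\sA}({\bf p}_{\sD},{\bf p}_{\sA})$ and $\U_{\sD}(s_0^1,0,\ldots,0)=U_{\sD}({\bf p}_{\sD},{\bf p}_{\sA})$, establishing (i) and (ii).

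The summation swap is routine; the genuine obstacle is the second step, namely justifying rigorously that the recursions for $P_{R,j}$ and $P_{T}$, together with their boundary cases, have a unique solution whose value at the root equals the claimed total probabilities, and likewise for the flow-level probabilities $p^{g}_{F,b}$ in the cost block. I would discharge this by viewing the induced dynamics as an absorbing Markov chain and invoking the standard uniqueness of absorption probabilities, or, for a self-contained argument, by inducting over states along the (stage-monotone) transition structure so that each $P_{R,j}$ and $P_{T}$ is expressed as a finite sum over flow realizations. The care needed there is in bookkeeping the ``without detection'' and ``drop before $\D_{j+1}$'' qualifiers so that the reach-events nest correctly and produce the relation $p_{R}(j)=\sum_{v\geq j}P_{R,v}$ on which the reward-block computation rests.
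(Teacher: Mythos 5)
Your proposal is correct and follows essentially the same route as the paper's own proof: identifying $P_{T}(s_0^1,0,\ldots,0)=\sum_{j}p_{T}(j)$ and $p_{R}(j)=\sum_{v\geq j}P_{R,v}(s_0^1,0,\ldots,0)$, interchanging the order of the double sum in the reward block, and matching the cost block via $p^{g}_{F,b}(s_0^1,0,\ldots,0)={\bf p}^{g}_{\sD}(s_b)$. The only difference is that you explicitly flag the need to justify the probabilistic identifications from the first-step recursions (via absorption-probability uniqueness), a point the paper simply asserts.
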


\begin{proof}
\noindent{\bf (i)}: By definition, 
$\U^{\sA}(s_0^1, 0, \ldots, 0) = 
\sum_{j=1}^M \Big(P_{R, j}(s_0^1, 0, \ldots, 0)  ( \sum_{v=1}^j  \beta^{\sA}_{v} )+  P_{T}(s_0^1, 0, \ldots, 0)\alpha^{\sA}\Big)$.  Here,
\begin{multline}\label{eq:ua}
\sum_{j=1}^M P_{R, j}(s_0^1, 0, \ldots, 0)  ( \sum_{v=1}^j  \beta^{\sA}_{v} )= \beta^{\sA}_1 \, \sum_{j=1}^M P_{R, j}(s_0^1, 0, \ldots, 0)  + \\
 \beta^{\sA}_2 \, \sum_{j=2}^M P_{R, j}(s_0^1, 0, \ldots, 0) + \ldots + \beta^{\sA}_M \, P_{R, M}(s_0^1, 0, \ldots, 0),
\end{multline}
Where, $\sum_{j=1}^M p_{R, j}(s_0^1, 0, \ldots, 0)$ is the total probability that a flow originating at $(s_0^1, 0, \ldots, 0)$ reach some destination in $\D_1$. Similarly, $\sum_{j=2}^M p_{R, j}(s_0^1, 0, \ldots, 0)$ is the total probability that a flow originating at $(s_0^1, 0, \ldots, 0)$ reach some destination in $\D_2$. Thus
\begin{multline}\label{eq:uaaaa}
\sum_{j=1}^M P_{R, j}(s_0^1, 0, \ldots, 0) = p_R(1), \sum_{j=2}^M P_{R, j}(s_0^1, 0, \ldots, 0) = p_R(2), \ldots, \\
P_{R, M}(s_0^1, 0, \ldots, 0) = p_R(M).
\end{multline}
From Eqs.~\eqref{eq:ua} and~\eqref{eq:uaaaa}, we get
\begin{equation}\label{eq:uaaa}
\sum_{j=1}^M \Big( P_{R, j}(s_0^1, 0, \ldots, 0)   (\sum_{v=1}^j  \beta^{\sA}_{v})\Big)  =  \sum_{j=1}^M p_R(j) \beta^{\sA}_j.
\end{equation}
Since  $P_{T}(s_0^1, 0, \ldots, 0)\ = \sum_{j=1}^M p_T(j)$,
\begin{equation}
P_{T}(s_0^1, 0, \ldots, 0)\alpha^{\sA} = \sum_{j=1}^M p_{T}(j) \alpha^{\sA}.  \label{eq:uaa}
\end{equation}
From Eqs.~\eqref{eq:uaaa} and~\eqref{eq:uaa}, we get $\U^{\sA}(s_0^1, 0, \ldots, 0) = \sum_{j=1}^M \Big(p_R(j) \beta^{\sA}_j +p_T(j) \alpha^{\sA}\Big) = U_{\sA}({\bf p}_{\sD}, {\bf p}_{\sA})$. 

\noindent{\bf (ii)}: Notice that $p^1_{F, i}(s_0^1, 0, \ldots, 0)$ is the probability that the process $s_i$ is a tag source in a flow originating at $(s_0^1, 0, \ldots, 0)$. Thus $p^1_{F, i}(s_0^1, 0, \ldots, 0) = {\bf p}^1_{\sD}(s_i)$. Similarly, we get $p^2_{F, i}(s_0^1, 0, \ldots, 0) = {\bf p}^2_{\sD}(s_i)$ and $p^{2+r}_{F, i}(s_0^1, 0, \ldots, 0) = {\bf p}^{2+r}_{\sD}(s_i)$ for $r=1, \ldots, N$. This along with Eqs.~\eqref{eq:uaaa} and~\eqref{eq:uaa} implies that  $\U^{\sD}(s_0^1, 0, \ldots, 0) = \sum_{s_i \in \S} \Big( {\bf p}^1_{\sD}(s_i) \C_{\sD}(s_i) +{\bf p}^2_{\sD}(s_i) \W_{\sD}(s_i) + \sum_{r=1}^{N} {\bf p}^{2+r}_{\sD}(s_i) \gamma_r\Big) + \sum_{j=1}^M \Big(p_R(j) \beta^{\sD}_j +p_T(j) \alpha^{\sD}\Big) = U_{\sD}({\bf p}_{\sD}, {\bf p}_{\sA})$. This completes the proof of (i)~and~(ii).
\end{proof}

\section{Game Model: Solution Concept}\label{sec:equilibria}
This section presents an overview of the notions of equilibrium considered in this work. We first describe the concept of a player's best response to a given mixed policy of an opponent.

\begin{defn}
\label{def:BR}
Let ${\mathbf{p}}_{\sA}: \{\S \times \{1,\ldots,M\}\}\cup \{s_0^1\} \rightarrow [0,1]^{|\hat{\E}|}$ denote an adversary strategy (transition probabilities) and ${\mathbf{p}}_{\sD}: \S \rightarrow [0,1]^{(2+N)|\mathcal{S}|}$ denote a defender strategy (probabilities of tagging, tag sink selection, and security rule selection at every node in the  graph). The set of best responses of the defender  given by $$\mbox{BR}({\mathbf{p}}_{\sA}) = \arg\max_{{\mathbf{p}}_{\sD}}{\{U^{\sD}({\mathbf{p}}_{\sD}, {\mathbf{p}}_{\sA}) :{\mathbf{p}}_{\sD} \in [0,1]^{(2+N)|\mathcal{S}|}\}}.$$ Similarly, the best responses of the adversary are given by $$\mbox{BR}({\mathbf{p}}_{\sD}) = \arg\max_{{\mathbf{p}}_{\sA}}{\{U^{\sA}({\mathbf{p}}_{\sD}, {\mathbf{p}}_{\sA}) : {\mathbf{p}}_{\sA} \in [0,1]^{|\hat{\E}|}\}}.$$
\end{defn}
Intuitively, the best responses of the defender are the set of tagging strategies, the set of tag sink selection strategies, and the set of security rule selection strategies that jointly maximize the defender's utility for a given adversary strategy. At the same time, the best responses of the adversary are the sets of transition probabilities that maximize the adversary's utility for a given defender (tagging, tag sink selection, and security rule selection) strategy. A mixed policy  profile is a {\em Nash equilibrium} (NE) if the mixed policy  of each player is a best response to the fixed mixed policy  of the rest of the players.  Formal definition of Nash equilibrium is as follows.
\begin{defn}
\label{def:NE}
A pair of mixed policies $({\mathbf{p}}_{\sD}, {\mathbf{p}}_{\sA})$ is a \emph{Nash equilibrium} if 
$${\mathbf{p}}_{\sD} \in \mbox{BR}({\mathbf{p}}_{\sA}) \mbox{~and}  \quad{\mathbf{p}}_{\sA} \in \mbox{BR}({\mathbf{p}}_{\sD}).$$
\end{defn}
A Nash equilibrium captures the notion of a stable solution as it occurs when neither player can improve its payoff by unilaterally changing its strategy. Unilateral deviation of  the adversary's strategy is  a change in one of the transition probabilities for fixed defender's strategy and unilateral deviation of  the defender's strategy is  a change in either tagging probability, or tag sink selection probability, or the probability of selecting a security rule at a node, for fixed adversary strategy.  Kuhn's  equivalence result \cite{Kuh-53} between mixed and stochastic policies along with Nash's result in  \cite{Nash-50} that prove the existence of  a Nash equilibrium (NE) for a finite game with mixed strategy, guarantees the existence of NE for the game we consider in this paper.
While there exists a Nash equilibrium for games with rational, noncooperative players, it is NP-hard to compute it in general, especially for nonzero-sum dynamic games of the type considered in this paper.  Also note that, for the game considered in this paper, the utility functions for the  players  are nonlinear in the probabilities. A weaker solution concept which is a relaxation of the Nash equilibrium is the \emph{correlated equilibrium}  defined as follows.

\begin{defn}
\label{def:correlated}
Let $P$ denote a joint probability distribution over the set of defender and adversary actions. The distribution $P$ is a \emph{correlated equilibrium} if for all strategies ${\mathbf{p}}_{\sA}^{\prime}$ and ${\mathbf{p}}_{\sD}^{\prime}$, 
\begin{eqnarray*}
\displaystyle \mathbf{E}_{({\mathbf{p}}_{\sD}, {\mathbf{p}}_{\sA} )\sim P}(U^{\sD}({\mathbf{p}}_{\sD}, {\mathbf{p}}_{\sA})) &\geq& \mathbf{E}_{({\mathbf{p}}^{\prime}_{\sD}, {\mathbf{p}}_{\sA}) \sim P}(U^{\sD}({\mathbf{p}}_{\sD}^{\prime}, {\mathbf{p}}_{\sA}) \\
\mathbf{E}_{({\mathbf{p}}_{\sD}, {\mathbf{p}}_{\sA}) \sim P}(U^{\sA}({\mathbf{p}}_{\sD}, {\mathbf{p}}_{\sA})) &\geq& 
\mathbf{E}_{({\mathbf{p}}_{\sD}, {\mathbf{p}}^{\prime}_{\sA}) \sim P}(U^{\sA}({\mathbf{p}}_{\sD}, {\mathbf{p}}_{\sA}^{\prime})
\end{eqnarray*}
\end{defn}

Here, $\mathbf{E}(\cdot)$ denotes the expectation.  We next consider a simpler version of the correlated equilibrium that models the local policies at each process. 

\begin{defn}
\label{def:local_correlated}
Let $P$ denote a joint probability distribution over the set of defender and adversary actions.  The distribution $P$ is a \emph{local correlated equilibrium} if for all states $s_{i} \in \mathcal{S}$, $j \in \{1,\ldots,M\}$, and strategies $p_{\sD}^{\prime}(s_{i})$ and $p_{\sA}^{\prime}(s_{i}^{j}, \cdot)$, we have  
\begin{eqnarray*}
\mathbf{E}_{({\mathbf{p}}_{\sD}, {\mathbf{p}}_{\sA}) \sim P}(U^{\sD}({\mathbf{p}}_{\sD}, {\mathbf{p}}_{\sA})) &\geq& \mathbf{E}_{({\mathbf{p}}^{\prime}_{\sD}, {\mathbf{p}}_{\sA}) \sim P}(U^{\sD}({\mathbf{p}}_{\sD}^{\prime}, {\mathbf{p}}_{\sA}) \\
\mathbf{E}_{({\mathbf{p}}_{\sD}, {\mathbf{p}}_{\sA}) \sim P}(U^{\sA}({\mathbf{p}}_{\sD}, {\mathbf{p}}_{\sA})) &\geq& 
\mathbf{E}_{({\mathbf{p}}_{\sD}, {\mathbf{p}}^{\prime}_{\sA}) \sim P}(U^{\sA}({\mathbf{p}}_{\sD}, {\mathbf{p}}_{\sA}^{\prime})
\end{eqnarray*}
where $\mathbf{p}_{\sD}^{\prime}$ denotes a strategy with ${\bf p'}_{\sD}^{x}(s_{i}) = {p'}_{\sD}^{x}(s_{i})$, for some $x \in \{1, \ldots, 2+N\}$,  ${\bf p'}_{\sD}^{y}(s_i) = {\bf p}_{\sD}^{y}(s_i)$ for $y\in \{1, \ldots, 2+N\}, y \neq x$, and ${\bf p}_{\sD}^{\prime}(s_{i^{\prime}}) = {\bf p}_{\sD}(s_{i^{\prime}})$ for $i \neq i^{\prime}$, and $\mathbf{p}_{\sA}^{\prime}$ denotes a strategy with $\mathbf{p}_{\sA}^{\prime}(s_{i}^{j},\cdot) = p_{\sA}^{\prime}(s_{i}^{j},\cdot)$ and $\mathbf{p}_{\sA}^{\prime}(s_{i^{\prime}}^{j^{\prime}},\cdot) = {\bf p}_{\sA}(s_{i^{\prime}}^{j^{\prime}}, \cdot)$ for $(i,j) \neq (i^{\prime},j^{\prime})$.
\end{defn}
 \section{Best Response of the Players}\label{sec/;BR}
 In this section, we calculate the best responses of both the players, $\P_{\sA}$ and $\P_{\sD}$.
 \subsection{Best Response for the Adversary}\label{subsec:BR_a}
The best response of the adversary to a given defender strategy is described here. Firstly, we present 
the following preliminary lemma.
\begin{lemma}
\label{lemma:adversary_BR}
Consider a defender  policy ${\bf p}_{\sD}$. For each destination $d_{b}^{j} \in \D_{j}$, let $\Omega_{d_{b}^{j}}$ denote the set of paths in $\bS$ that originate at $(s_0,0, \ldots, 0)$ and terminate at some state that correspond to node $d_{b}^{j}$.  For any path $\omega$, let $p(\omega)$ denote the probability that a flow reaches the destination without getting detected by the adversary. Finally, for every $d_{b}^{j}$, choose a path $\omega_{d_{b}^{j}}^{\ast} \in \arg\max{\{p(\omega) : \omega \in \Omega_{d_{b}^{j}}\}}$. Let $\omega^{\ast} \in \arg\max{\{p(\omega_{d_{b}^{j}}) : d_{b}^{j} \in \D_{j}, j=1,\ldots,M\}}.$ Finally, define the policy ${\bf p}_{\sA}^{\ast}$ by
\begin{displaymath}
{\bf p}_{\sA}^{\ast}(s_{i}^{j},s_{i^{\prime}}^{j^{\prime}}) = \left\{
\begin{array}{ll}
 1, & (s_{i}^{j}, s_{i^{\prime}}^{j^{\prime}}) \in \omega^{\ast} \\
 0, & \mbox{else}
\end{array}
\right.
\end{displaymath}
Then,  $\omega^{\ast} \in \mbox{BR}({\bf p}_{\sD})$.
\end{lemma}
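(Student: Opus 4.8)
The plan is to exploit the \emph{linearity} of the adversary's payoff $U_{\sA}$ in the distribution over trajectories induced by $\mathbf{p}_{\sA}$, and thereby reduce the search for a best response to a search over pure single-path strategies. First I would fix the defender policy $\mathbf{p}_{\sD}$ and observe that, once $\mathbf{p}_{\sD}$ is fixed, the adversary faces a single-agent problem: a flow launched at $(s_0^1,0,\ldots,0)$ performs a random walk on $\bS$ governed by $\mathbf{p}_{\sA}$ (moving to a neighbor or transitioning to $\phi$), and the defender's tagging, trapping, and rule-selection randomness determines, for each realized trajectory, the probability of detection. The key observation is that the detection probability along a trajectory $\omega$ depends only on the nodes of $\omega$ and on $\mathbf{p}_{\sD}$, not on the remainder of $\mathbf{p}_{\sA}$; hence the number $p(\omega)$ (probability of reaching the terminal destination of $\omega$ undetected) is well defined for each path independently of the adversary's global strategy.

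Second, I would write the path distribution induced by $\mathbf{p}_{\sA}$ as $\mu_{\mathbf{p}_{\sA}}$ and express the payoff as a convex combination
\begin{equation*}
U_{\sA}(\mathbf{p}_{\sD},\mathbf{p}_{\sA}) \;=\; \sum_{\omega} \mu_{\mathbf{p}_{\sA}}(\omega)\, u(\omega),
\end{equation*}
where $u(\omega)$ is the payoff obtained when the adversary commits to the single path $\omega$. Using Lemma~\ref{lem:s_0} to identify $U_{\sA}$ with the local utility at $(s_0^1,0,\ldots,0)$ defined by Eqs.~\eqref{eq:u1}--\eqref{eq:u2}, the payoff of committing to a path $\omega\in\Omega_{d_b^j}$ that reaches $d_b^j$ and then exits is $u(\omega)=p(\omega)\sum_{v=1}^{j}\beta^{\sA}_v+(1-p(\omega))\alpha^{\sA}$. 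Since any $\mathbf{p}_{\sA}$ yields a convex combination of such pure-path payoffs, we get $U_{\sA}(\mathbf{p}_{\sD},\mathbf{p}_{\sA})\le \max_{\omega}u(\omega)$, with the bound attained by the deterministic policy that follows an arg-max path. This already shows that \emph{some} pure single-path strategy is a best response.

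Third, I would identify which path is optimal. For paths terminating at a fixed destination $d_b^j$ the stage reward $\sum_{v=1}^j\beta^{\sA}_v$ is constant, so $u(\omega)=p(\omega)\big(\sum_{v=1}^j\beta^{\sA}_v-\alpha^{\sA}\big)+\alpha^{\sA}$ is affine and strictly increasing in $p(\omega)$ (because $\beta^{\sA}_v>0$ and $\alpha^{\sA}<0$); hence the maximum-probability path $\omega^{\ast}_{d_b^j}$ dominates every other path in $\Omega_{d_b^j}$, and the best response must lie in the finite family $\{\omega^{\ast}_{d_b^j}\}$. It also suffices to consider paths that drop out \emph{at} a destination, since delaying the exit only lengthens the trajectory and can lower the undetected-reach probability without raising the reward. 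Comparing the finitely many candidates $\omega^{\ast}_{d_b^j}$ and retaining the maximizer then yields the policy $\mathbf{p}^{\ast}_{\sA}$ of the statement, giving $\omega^{\ast}\in\mbox{BR}(\mathbf{p}_{\sD})$.

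The step I expect to be the main obstacle is the rigorous decomposition $U_{\sA}=\sum_\omega \mu_{\mathbf{p}_{\sA}}(\omega)\,u(\omega)$: one must argue that the stationary transition probabilities $\mathbf{p}_{\sA}$ genuinely induce a well-defined probability measure over the (possibly unbounded-length) trajectories on $\bS$, that the detection events decouple path-by-path, and that the per-path payoff coincides with the expression derived from Eqs.~\eqref{eq:u1}--\eqref{eq:u2}. A secondary subtlety is the cross-destination comparison: since deeper destinations carry larger cumulative reward $\sum_{v=1}^{j}\beta^{\sA}_v$ but typically smaller $p(\omega)$, the selection of $\omega^{\ast}$ must weigh reach probability against stage reward, and I would need to confirm that, under the paper's reward normalization, the arg-max over the candidate family $\{\omega^{\ast}_{d_b^j}\}$ is indeed the $\omega^{\ast}$ defined in the statement.
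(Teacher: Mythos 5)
Your proposal is correct and follows essentially the same route as the paper's proof: both decompose $U_{\sA}$ as a path-probability-weighted convex combination of per-path payoffs that are affine and increasing in $p(\omega)$, bound the utility above by the maximizing path, and conclude that the deterministic policy concentrating on $\omega^{\ast}$ attains the bound. The only differences are matters of care rather than substance --- you write the per-path reward as the cumulative $\sum_{v=1}^{j}\beta^{\sA}_{v}$ (consistent with Eqs.~\eqref{eq:u1}--\eqref{eq:u2}, where the paper's proof abbreviates it as $\beta^{\sA}_{j(\omega)}$) and you explicitly flag the measure-theoretic decomposition and the dominance of destination-terminating paths, which the paper takes for granted.
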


\begin{proof}
Let ${\bf p}_{\sA}$ be any adversary policy, and let $\Omega$ denote the set of paths that are chosen by the policy with nonzero probability such that the termination of the path is at some destination in $\D = \cup_{j=1}^M \D_j$. The utility of the adversary can be written as 
\begin{eqnarray*}
U^{\sA} &=& \sum_{\omega \in \Omega}{\pi(\omega)(p(\omega)\beta_{j(\omega)}^{\sA} + (1-p(\omega))\alpha^{\sA})} \\
&=& \sum_{j=1}^{M}{\sum_{d_{b}^{j} \in \D_{j}}{\sum_{\omega \in \Omega_{d_{b}^{j}}}{\pi(\omega)(p(\omega)\beta^{\sA}_{j} + (1-p(\omega))\alpha^{\sA}}}},
\end{eqnarray*}
 where $j(\omega)$ is equal to the stage where the path terminates and $\pi(\omega)$ is the probability that the path is chosen under this policy. The utility $U^{\sA}$ is then bounded above by the path that maximizes $p(\omega)(\beta^{\sA}_{j}-\alpha^{\sA})$, which is exactly the path $\omega^{\ast}$. 
\end{proof}

Using Lemma \ref{lemma:adversary_BR}, we present the following approach to select a best response to the adversary for a given defender policy. For each destination in $\bigcup_{j=1}^{M}{\D_{j}}$, we first choose a path $\omega$ to that destination such that the probability of reaching that destination, $p(\omega)$, is maximized while traversing destinations of all intermediate stages. From those paths, we then select  a path that maximizes $p(\omega)(\beta^{\sA}_{j}-\alpha^{\sA})$. 

\begin{prop}\label{prop:A_BR_opt}
The path $\omega^{\ast}$ returned by a shortest path algorithm on the state space   graph with edge weights of each incoming edge to states that correspond to node $s_i$ equal to $\Big( {\bf p}^1_{\sD}(s_i){\bf p}^2_{\sD}(s_i)\prod_{r=3}^{2+N}{\bf  p}^r_{\sD}(s_{i})\Big)$ is a best response to the defender strategy ${\bf p}_{\sD}$.
\end{prop}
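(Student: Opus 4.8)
The plan is to build directly on Lemma~\ref{lemma:adversary_BR}, which reduces the adversary's best-response problem to selecting, for each target, a maximum-survival path and then comparing across targets. First I would fix a destination $d_b^j$ and observe that, once the terminating stage $j$ is fixed, the reward factor $(\beta^{\sA}_{j}-\alpha^{\sA})$ is constant, so the inner maximization $\arg\max\{p(\omega):\omega\in\Omega_{d_b^j}\}$ of Lemma~\ref{lemma:adversary_BR} is exactly a maximum-probability (maximum-reliability) path problem on the state-space graph. The remaining task is to argue that a shortest-path computation with the stated edge weights returns such a path, and that combining the per-destination winners by the reward factor recovers $\omega^{\ast}$, which is a best response by the lemma.

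The key structural step is to factorize the survival probability over the nodes of the path. For a path $\omega=(s_0,s_{i_1},\ldots,s_{i_k})$ I would write $p(\omega)=\prod_{s_i\in\omega}\big(1-D(s_i)\big)$, where $D(s_i):={\bf p}^1_{\sD}(s_i){\bf p}^2_{\sD}(s_i)\prod_{r=3}^{2+N}{\bf p}^r_{\sD}(s_i)$ is precisely the probability that $s_i$ is simultaneously a tag source, a tag sink, and has all $N$ security rules selected, i.e.\ the event $\bar{k}_i=(1,\ldots,1)$ that forces $P_{T}=1$ in the piecewise definition of $P_{T}(\cdot)$. This factorization uses the model's per-node independence of the defender's tagging, tag-sink, and rule-selection draws, so that the no-detection events at distinct nodes multiply; I would verify this carefully against the definitions of $\overline{Q}_r$ and $P_{T}$ so that $1-D(s_i)$ is indeed the correct per-node survival factor.

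Given the product form, the reduction to a shortest path is the standard log-transform: maximizing $\prod_{s_i\in\omega}(1-D(s_i))$ is equivalent to minimizing $\sum_{s_i\in\omega}\big(-\log(1-D(s_i))\big)$, an additive objective. Assigning to every incoming edge of a state corresponding to $s_i$ the weight $-\log(1-D(s_i))$, a strictly increasing and hence order-preserving transform of the quantity $D(s_i)$ named in the statement, makes the minimum-weight $s_0$-to-$d_b^j$ path the maximum-survival path. Because the state space $\bS$ permits transitions only within a stage or from a destination of stage $j$ to stage $j+1$, every such path automatically satisfies the stage-constraint of Definition~\ref{def:stage-constraint}, so the intermediate destinations are traversed in order without any extra bookkeeping. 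Running this computation to each destination and then selecting the winner that maximizes $p(\omega)(\beta^{\sA}_{j}-\alpha^{\sA})$ reproduces $\omega^{\ast}$.

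The main obstacle I anticipate is reconciling the literal edge weight $D(s_i)$ in the statement with the additive shortest-path objective: minimizing $\sum_{s_i\in\omega}D(s_i)$ is not in general equivalent to maximizing $\prod_{s_i\in\omega}(1-D(s_i))$, so the proof must route through the monotone weight $-\log(1-D(s_i))$ rather than $D(s_i)$ itself (the two agree only to first order when the $D(s_i)$ are small). I would therefore make explicit that the shortest-path instance uses the log-transformed weights, with $D(s_i)$ serving only to define them, and confirm that the optimum is invariant under this transform. A secondary point to discharge is that restricting attention to a single deterministic path loses nothing: since $U^{\sA}$ is an expectation over the paths chosen with nonzero probability (as in the proof of Lemma~\ref{lemma:adversary_BR}), concentrating all mass on one maximizing path is optimal, which justifies the $0/1$ policy ${\bf p}_{\sA}^{\ast}$.
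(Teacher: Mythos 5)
Your proposal follows essentially the same route as the paper's proof: invoke Lemma~\ref{lemma:adversary_BR}, factor the survival probability of a path as $\prod_{s_i\in\omega}\bigl(1-D(s_i)\bigr)$ with $D(s_i)={\bf p}^1_{\sD}(s_i)\,{\bf p}^2_{\sD}(s_i)\prod_{r=3}^{2+N}{\bf p}^r_{\sD}(s_i)$, and convert the multiplicative objective into an additive one via logarithms before appealing to a shortest-path computation. The one point where you diverge is precisely where you are more careful than the paper: the paper's displayed chain asserts $\max\sum_{s_i\in\omega}\log\bigl(1-D(s_i)\bigr)=\min\sum_{s_i\in\omega}D(s_i)$ and then uses $D(s_i)$ itself as the edge weight, exactly as in the proposition statement. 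As you note, minimizing $\sum_{s_i\in\omega}D(s_i)$ is not in general equivalent to maximizing $\prod_{s_i\in\omega}\bigl(1-D(s_i)\bigr)$ --- a path with node weights $0.9$ and $0$ beats one with weights $0.5$ and $0.5$ under the sum but loses under the product --- so the correct additive weight is $-\log\bigl(1-D(s_i)\bigr)$, of which $D(s_i)$ is only a first-order approximation. Your anticipated ``obstacle'' is therefore a genuine imprecision in the paper's own argument (and in the literal wording of the proposition), and your repair, routing the shortest-path instance through the monotone log-transformed weights, is the right one. The remaining elements of your write-up (per-destination maximum-reliability paths combined via the factor $\beta^{\sA}_j-\alpha^{\sA}$, the stage-constraint being enforced automatically by the transition structure of $\bS$, and the justification for concentrating all probability mass on a single optimal path) match the paper's treatment.
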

\begin{proof}
Consider a path $\omega \in \Omega_{d_{b}^{j}}$, i.e., a path that originate at $(s_0, 0, \ldots, 0)$ and terminate at some state that correspond to node $d_b^j$. Let the first  node, that belongs to the vulnerable set $\lambda$,  through which  $\omega$ traverse be denoted by $\lambda_{\omega}$. 
Then for all nodes of $\G$ that lie in $\omega$, i.e.,  $s_i \in \omega$, let $\Lambda^{\omega}(s_i)$ denote the set of indices of the security rules in $\Lambda$ that are based on  $s_i$ and the pair $(\lambda_{\omega}, s_i)$.

By Lemma \ref{lemma:adversary_BR}, it suffices to show that a shortest path in the state space with a suitably defined weight function will return  a path with maximum probability of reaching some state in $\bS$  corresponding to node $d_{b}^{j}$ of $\G$ without getting detected by the adversary. For any path $\omega \in \Omega_{d_{b}^{j}}$, the probability that the flow reaches $d_{b}^{j}$ without getting detected by the adversary is equal to $\prod_{s_{i} \in \omega}{\Big(1- {\bf p}^1_{\sD}(s_i){\bf p}^2_{\sD}(s_i)\prod_{r=3}^{2+N}{\bf  p}^r_{\sD}(s_{i})\Big)}.$
\begin{eqnarray*}
&&\max \prod_{s_{i} \in \omega}{\Big(1- {\bf p}^1_{\sD}(s_i){\bf p}^2_{\sD}(s_i)\prod_{r=3}^{2+N}{\bf  p}^r_{\sD}(s_{i})\Big)} \\
&&= \max \sum_{s_{i} \in \omega}\log\, {\Big(1-{\bf p}^1_{\sD}(s_i){\bf p}^2_{\sD}(s_i)\prod_{r=3}^{2+N}{\bf  p}^r_{\sD}(s_{i})\Big)} \\
&&= \min \sum_{s_{i} \in \omega} {\Big( {\bf p}^1_{\sD}(s_i){\bf p}^2_{\sD}(s_i)\prod_{r=3}^{2+N}{\bf  p}^r_{\sD}(s_{i})\Big)}
\end{eqnarray*}

The problem of finding best response to the adversary is equivalent to finding the shortest path from $s_{0}$ to $d_{b}^{j}$ in a graph where the edge weights are equal to $\Big( {\bf p}^1_{\sD}(s_i){\bf p}^2_{\sD}(s_i)\prod_{r=3}^{2+N}{\bf  p}^r_{\sD}(s_{i})\Big)$ for each edge incoming to $s_{i}^{j^{\prime}}$, for $j' \in \{1, \ldots, M\}$. 
\end{proof}
 \subsection{Best Response for the Defender}\label{subsec:BR_d}
We now present an approach for approximating the best response of the defender. In this approach, the set of possible responses at $s_{i}$ is discretized. Define $$V_r = \{s_{i}^{z_r} : s_{i}  \in \S, z_r = 1,\ldots,Z_r\}$$ for integers $Z_r > 0$, where $r=1, \ldots, 2+N$. For any $V_r^{\prime} \subseteq V_r$, $r=1, \ldots, 2+N$, define $p_{\sD}(s_{i} ; V_r^{\prime}) = \frac{1}{Z_r}|\{s_{i}^{z_r} : z_r=1,\ldots,Z_r\} \cap V_r^{\prime}|$, and define $p_{\sD}(V_1^{\prime})$ to be the resulting vector of probabilities for tag source selection, $p_{\sD}(V_2^{\prime})$ to be the resulting vector of probabilities for tag sink selection, and $p_{\sD}(V_3^{\prime}), \ldots, p_{\sD}(V_{2+N}^{\prime})$ to be the resulting vectors of probabilities for security rule selection.  Then $p_{\sD}(V^{\prime}) = \{p_{\sD}(V_r^{\prime})\}_{r=1}^{2+N} \}$ is the resulting vector of defender strategy, where $V' = \{V'_1, \ldots, V'_{2+N} \}$.  For a given adversary strategy, say $p_{\sA}$, let $f(V^{\prime}) = U^{\sD}(p_{\sD}(V^{\prime}), {p}_{\sA})$.


\begin{prop}
\label{prop:BR_submod}
The function $f(V^{\prime})$ is submodular as a function of $V^{\prime}$, that is, for any $V_r^{\prime}$, $V_r^{\prime\prime}$ with $V_r^{\prime} \subseteq V_r^{\prime\prime}$ and any $s_{i}^{z_r} \notin V_r^{\prime\prime}$, for  $r\in \{1, \ldots, 2+N\}$, $$f\Big(\Big\{V_r^{\prime} \cup \{s_{i}^{z_r}\}\Big\}_{r=1}^{2+N}\Big) - f(V^{\prime}) \geq f\Big(\Big\{V_r^{''} \cup \{s_{i}^{z_r}\}\Big\}_{r=1}^{2+N}\Big) - f(V^{\prime\prime}).$$
\end{prop}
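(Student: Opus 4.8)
The plan is to reduce the set‑function inequality to a statement about the monotonicity of marginal gains, and then to read that monotonicity off the sign structure of $U^{\sD}$.

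First I would record what a single discretization element does. Adding $s_i^{z_r}$ to $V_r^{\prime}$ increases exactly one defender coordinate, ${\bf p}^r_{\sD}(s_i)$, by the fixed amount $1/Z_r$, and leaves every other coordinate untouched; and $V^{\prime}\subseteq V^{\prime\prime}$ implies $p_{\sD}(V^{\prime})\le p_{\sD}(V^{\prime\prime})$ coordinatewise. Hence the claimed inequality is exactly the statement that the discrete marginal gain $f(V^{\prime}+s_i^{z_r})-f(V^{\prime})$ is non‑increasing as the defender's probability vector grows coordinatewise, so it suffices to control how this marginal gain depends on the current strategy.

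Second, I would expose the structural form of $U^{\sD}$ in terms of per‑node detection probabilities. Put $x_i := {\bf p}^1_{\sD}(s_i){\bf p}^2_{\sD}(s_i)\prod_{r=3}^{2+N}{\bf p}^r_{\sD}(s_i)$, which by Proposition~\ref{prop:A_BR_opt} is the probability a flow is caught at $s_i$. Unrolling the recursions for $P_{R,j}$ and $P_{T}$ over the source‑to‑destination paths of $\bS$, both $p_{R}(j)$ and $p_{T}(j)$ are nonnegative combinations, weighted by the fixed adversary path probabilities, of survival products $\prod_{s_\ell\in\omega}(1-x_\ell)$. Substituting into $U^{\sD}$ and collecting terms, each path contributes $\alpha^{\sD}$ plus a survival product scaled by $(\sum_v \beta^{\sD}_v-\alpha^{\sD})$, and since $\alpha^{\sD}>0$ while every $\beta^{\sD}_v<0$ this scaling coefficient is negative. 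I would then verify two sign facts: $U^{\sD}$ is monotone increasing in each $x_i$ (a negative coefficient times a survival product that decreases in $x_i$), and every pairwise second difference of $U^{\sD}$ in $x$ is $\le 0$ (the second difference of a product of $(1-x_\ell)$ factors is nonnegative, multiplied by the negative coefficient). Thus $U^{\sD}$ is monotone and submodular as a function of $x=(x_i)$; the additive cost terms are linear in the component probabilities and contribute nothing to the cross differences.

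The hard part — and the step that dictates how the proposition must be read — is passing from $x$ back to the defender coordinates. Because $x_i$ is the product ${\bf p}^1_{\sD}(s_i)\cdots{\bf p}^{2+N}_{\sD}(s_i)$, the tag source, tag sink, and security‑rule ingredients are complementary at a node: raising one raises the marginal value of raising another, so across different components at the same node the raw map is supermodular. The resolution, which matches the single‑coordinate deviations of the local correlated equilibrium in Definition~\ref{def:local_correlated}, is to hold the other $2+N-1$ components fixed and vary one component $r$. Then $x_i=\bigl(\prod_{g\neq r}{\bf p}^g_{\sD}(s_i)\bigr){\bf p}^r_{\sD}(s_i)$ is affine in ${\bf p}^r_{\sD}(s_i)$ with a fixed nonnegative slope, so adding $s_i^{z_r}$ raises $x_i$ by the constant $\tfrac{1}{Z_r}\prod_{g\neq r}{\bf p}^g_{\sD}(s_i)$, independent of $V_r^{\prime}$. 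The marginal gain of adding $s_i^{z_r}$ is therefore this constant times the discrete derivative of $U^{\sD}$ in $x_i$, which by the submodularity and monotonicity of $U^{\sD}$ in $x$ is non‑increasing as the remaining coordinates grow with $V_r^{\prime}$; this yields $f(V^{\prime}+s_i^{z_r})-f(V^{\prime})\ge f(V^{\prime\prime}+s_i^{z_r})-f(V^{\prime\prime})$. A minor technicality to dispatch is that a flow may traverse a node more than once, in which case $U^{\sD}$ is only concave (rather than affine) in each $x_i$; but concavity in each coordinate together with the $\le 0$ cross differences still delivers the required inequality, so the argument goes through unchanged.
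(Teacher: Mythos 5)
Your overall architecture matches the paper's: the three cost terms are modular, the remaining term is decomposed over adversary sample paths, and the sign condition $\alpha^{\sD}-\beta^{\sD}_{j(\omega)}\geq 0$ reduces everything to submodularity of the per-path detection probability. You go one step further than the paper by isolating the real obstruction explicitly --- the node-level detection probability $x_i=\prod_{g=1}^{2+N}\mathbf{p}^{g}_{\sD}(s_i)$ is a \emph{product} of the $2+N$ component probabilities, hence supermodular across components at the same node --- whereas the paper's proof passes over this silently when it asserts that $(1-\prod_r p^r_{\sD}(s_i;V''))^{c}-(1-\prod_r(p^r_{\sD}(s_i;V'')+\tfrac{1}{Z_r}))^{c}$ is dominated by the corresponding expression at $V'$.

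The gap is that your proposed resolution does not actually close this. First, the hypothesis of the proposition only requires $V'_g\subseteq V''_g$ for every $g$, so $V'$ and $V''$ may differ in the components $g\neq r$ \emph{at the node $s_i$ itself}; the slope $\tfrac{1}{Z_r}\prod_{g\neq r}\mathbf{p}^{g}_{\sD}(s_i)$ is therefore not a common constant but is weakly \emph{larger} at $V''$ than at $V'$. The marginal gain is (increasing step size) $\times$ (non-increasing discrete derivative of $U^{\sD}$ in $x_i$), and such a product need not be non-increasing: take a single path through a single unrepeated node ($c=1$, $\delta=1$), so the relevant term of $U^{\sD}$ is affine increasing in $x_i$ with a slope independent of $V$; if some $\mathbf{p}^{g}_{\sD}(s_i;V')=0$ for $g\neq r$ but $\mathbf{p}^{g}_{\sD}(s_i;V'')>0$ for all $g\neq r$, the marginal gain of $s_i^{z_r}$ is zero at $V'$ and strictly positive at $V''$, and the modular cost terms cancel in the difference of differences. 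Second, the displayed inequality in the proposition adds $s_i^{z_r}$ to \emph{all} $2+N$ components simultaneously, so the single-coordinate deviation you analyze is not even the marginal element being compared; for the simultaneous increment the step $\prod_r(p^r+\tfrac{1}{Z_r})-\prod_r p^r$ is again increasing in the base point, and the same difficulty reappears. Your appeal to concavity in the final paragraph does not rescue this, since concavity of $\phi$ gives no comparison between $\phi(a+\Delta'')-\phi(a)$ and $\phi(a'+\Delta')-\phi(a')$ when both the base point and the step size grow. In short, the step ``pass from submodularity in $x$ back to submodularity in the defender coordinates'' is exactly where the argument must do real work, and as written it fails there; to make it go through you would need an additional hypothesis pinning down the components $g\neq r$ at $s_i$ (or a reformulation of the ground set so that each node contributes a single scalar detection probability), which is also the weak point of the paper's own proof.
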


\begin{proof}
Consider $U^{\sD}$ as defined in Eq. (\ref{eq:Ud}).  The first and second terms of $U^{\sD}$ are equal to $$\sum_{s_{i} \in \S}{\frac{\C^{\sD}(s_{i})}{Z_1}|\{s_{i}^{z_1} : z_1 = 1,\ldots,Z_1\} \cap V_1^{\prime}|} \mbox{~and~}$$ $$ \sum_{s_{i} \in \S}{\frac{\W^{\sD}(s_{i})}{Z_2}|\{s_{i}^{z_2} : z_2 = 1,\ldots,Z_2\} \cap V_2^{\prime}|},$$ respectively, both of which are modular as a function of $V^{\prime}$. The third term of $U^{\sD}$ equals $$\sum_{s_{i} \in \S}\, \sum_{r=1}^{N}{\frac{\gamma_r}{Z_{2+r}}|\{s_{i}^{z_{2+r}} : z_{2+r} = 1,\ldots,Z_{2+r}\} \cap V_{2+r}^{\prime}|},$$ which  is also modular as a function of $V^{\prime}$.
The last term can be written as $$\sum_{\omega}{\pi(\omega)\sum_{j=1}^{M}{(p_{T}(j;\omega)\alpha^{\sD} + p_{R}(j;\omega)\beta_{j}^{\sD})}},$$ where $p_{T}(j;\omega)$ (resp. $p_{R}(j;\omega)$) denotes the probability that the adversarial flow is detected by the defender at the $j^{\rm th}$ stage (resp. reaches some destination in stage~$j$) when the sample path is $\omega$ and the defender strategy is $p_{\sD}(V^{\prime})$ (the $V^{\prime}$ is omitted from the notation for simplicity). $\pi(\omega)$ denotes the probability of selecting the path $\omega$. Since the last destination  that is reached before dropping out is determined by the choice of path (denote this destination $j(\omega)$), we have 
\begin{eqnarray*}
\sum_{j=1}^{M}\hspace*{-1 mm}{(p_{T}(j;\omega)\alpha^{\sD} \hspace*{-1 mm}+\hspace*{-1 mm} p_{R}(j;\omega)\beta_{j}^{\sD})}\hspace*{-3 mm} &=& \hspace*{-3 mm} g(\omega ; V^{\prime})\alpha^{\sD} \hspace*{-1 mm}+ \hspace*{-1 mm} (1\hspace*{-0.5 mm}-\hspace*{-0.5 mm}g(\omega ; V^{\prime}))\beta^{\sD}_{j(\omega)}\\
& =&\hspace*{-2.8 mm} g(\omega ; V^{\prime})(\alpha^{\sD}-\beta_{j(\omega)}^{\sD}) + \beta^{\sD}_{j(\omega)}. \end{eqnarray*}
Since $\alpha^{\sD}-\beta_{j(\omega)}^{\sD} \geq 0$ and $\beta_{j(\omega)}^{\sD}$ is independent of $p_{\sD}(V')$, it suffices to show that $g(\omega ; V^{\prime})$ is submodular as a function of $V^{\prime}$. Let $V^{\prime} \subseteq V^{\prime\prime}$  with $V_r^{\prime} \subseteq V_r^{\prime\prime}$  and $s_{i}^{z_r} \notin V_r^{\prime\prime}$, for any $r\in \{1, \ldots, 2+N\}$. We can write $g(\omega ; V^{\prime\prime})$
\begin{eqnarray*}
 &=& 1 - \left[\prod_{\stackrel{s_{i_{k}} \in \omega:}{i_{k} = i}}{(1-\prod_{r=1}^{2+N}p^r_{\sD}(s_{i_{k}}))}\right]\left[\prod_{\stackrel{s_{i_{k}} \in \omega:}{i_{k} \neq i}}{(1-\prod_{r=1}^{2+N}p^r_{\sD}(s_{i_{k}}))}\right] \\
&=& 1 - \delta(V^{\prime\prime})(1-\prod_{r=1}^{2+N}p^r_{\sD}(s_{i}))^{c(s_{i} ; \omega)},
\end{eqnarray*}
where $ p^r_{\sD}(s_{i_k})$, for $r=1, \ldots, 2+N$, denotes the probabilities of selecting node $s_{i_k}$ as tag source, as tag sink, and selecting security rules under the policy $p_{\sD}(V')$, respectively, and  $$\delta(V^{\prime\prime}) = \prod_{\stackrel{s_{i_{k}} \in \omega:}{i_{k} \neq i}}{(1-\prod_{r=1}^{2+N}p^r_{\sD}(s_{i_{k}}))}, \quad c(s_{i} ; \omega) = |\{s_{i_{k}} \in \omega : i_{k} = i\}|.$$ Hence,  $g(\omega ; \Big\{V_r^{\prime\prime} \cup \{s_{i}^{z_r}\}\Big\}_{r=1}^{2+N}) - g(\omega ; V^{\prime\prime})$
\begin{eqnarray*}
& = & \scalebox{1}{\mbox{$1 - \delta(V^{\prime\prime})(1-\prod_{r=1}^{2+N}(p^r_{\sD}(s_{i} ; V^{\prime\prime}) + \frac{1}{Z_r}))^{c(s_{i} ; \omega)} -$}}\\
&& (1-\delta(V^{\prime\prime})(1-\prod_{r=1}^{2+N}p^r_{\sD}(s_{i} ; V^{\prime\prime}))^{c(s_{i};\omega)} )\\
& = & \delta(V^{\prime\prime})\Big[(1-\prod_{r=1}^{2+N}p^r_{\sD}(s_{i} ; V^{\prime\prime}))^{c(s_{i} ; \omega)}-\\
&& (1-\prod_{r=1}^{2+N}(p^r_{\sD}(s_{i} ; V^{\prime\prime}) - \frac{1}{Z_r}))^{c(s_{i} ; \omega)}\Big]
\end{eqnarray*}
When $V_r^{\prime} \subseteq V_r^{\prime\prime}$, $p^r_{\sD}(s_{i} ; V^{\prime}) \leq p^r_{\sD}(s_{i} ; V^{\prime\prime})$, and hence 
\begin{eqnarray*}
(1-\prod_{r=1}^{2+N}p^r_{\sD}(s_{i};V^{\prime\prime}))^{c(s_{i} ; \omega)} - (1-\prod_{r=1}^{2+N}(p^r_{\sD}(s_{i} ; V^{\prime\prime}) - \frac{1}{Z_r}))^{c(s_{i} ; \omega)} \\
\leq (1-\prod_{r=1}^{2+N}p^r_{\sD}(s_{i};V^{\prime}))^{c(s_{i} ; \omega)} - (1-\prod_{r=1}^{2+N}(p^r_{\sD}(s_{i} ; V^{\prime}) - \frac{1}{Z_r}))^{c(s_{i} ; \omega)}.
\end{eqnarray*}
  Furthermore, $V_r^{\prime} \subseteq V_r^{\prime\prime}$ for $r=\{1, \ldots, 2+N\}$ implies $\delta(V^{\prime}) \geq \delta(V^{\prime\prime})$. Hence $$g(\omega ; \Big\{V_r^{\prime} \cup \{s_{i}^{z_r}\}\Big\}_{r=1}^{2+N}) - g(\omega ; V^{\prime})$$ $$ \geqslant g(\omega ; \Big\{V_r^{\prime\prime} \cup \{s_{i}^{z_r}\}\Big\}_{r=1}^{2+N}) - g(\omega ; V^{\prime\prime}),$$ completing the proof of submodularity.
\end{proof}

Submodularity of $f(V^{\prime})$ implies the following.

\begin{prop}
\label{prop:submod_bound}
There exists an algorithm that is guaranteed to select a set $V^{\*}$ satisfying $f(V^{\*}) \geq \frac{1}{2}\max{\{f(V^{\prime}) : V^{\prime} \subseteq V\}}$ within $O(NZ)$ evaluations of $U^{\sD}$.
\end{prop}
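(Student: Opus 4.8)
The plan is to read the statement as an instance of \emph{unconstrained submodular maximization}: we must approximately solve $\max\{f(V') : V' \subseteq V\}$ over the ground set $V = \bigcup_{r=1}^{2+N} V_r$, where $|V| = \sum_{r=1}^{2+N}|V_r| = |\S|\sum_{r=1}^{2+N} Z_r$, i.e. $O(NZ)$ in the paper's notation with $Z := \max_r Z_r$. The only structural input required is already supplied by Proposition~\ref{prop:BR_submod}, which tells us that $f$ is submodular; note that $f$ is in general neither monotone (adding tag sources, sinks, or rules raises the detection term but also the negative cost terms $\C_{\sD}, \W_{\sD}, \gamma_r$) nor modular, so we genuinely need a non-monotone submodular-maximization guarantee rather than the standard greedy-for-monotone argument. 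Thus the proof collapses to exhibiting one linear-query algorithm for this problem and bookkeeping its oracle calls against $U^{\sD}$.

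The algorithm I would invoke is the linear-time double-greedy procedure for unconstrained submodular maximization (Buchbinder, Feldman, Naor, and Schwartz). It makes a single pass over the elements of $V$, maintaining a growing lower set $X$ initialized to $\emptyset$ and a shrinking upper set $Y$ initialized to $V$; at each element it compares the marginal gain of \emph{adding} it to $X$ against that of \emph{removing} it from $Y$, and commits accordingly, so that $X$ and $Y$ coincide at termination and return $V^{\*}$. Each element costs a constant number of evaluations of $f$, hence of $U^{\sD}$, giving $2|V| = O(NZ)$ evaluations and establishing the complexity claim. In its randomized form (where the inclusion decision is taken with probability proportional to the rectified marginal gains) this algorithm satisfies $\mathbf{E}[f(V^{\*})] \geq \tfrac12 f(V_{\mathrm{opt}})$ for every submodular objective, which is exactly the factor $\tfrac12$ asserted (read in expectation); its deterministic variant yields only the weaker factor $\tfrac13$, so the $\tfrac12$ statement commits us to the randomized version and the ``guaranteed'' should be understood as guaranteed in expectation.

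The step I expect to be the real obstacle is the \emph{nonnegativity} hypothesis: the $\tfrac12$ guarantee is stated for nonnegative submodular functions, whereas $f = U^{\sD}$ can be negative — already at $V' = \emptyset$ the defender tags nothing, is never credited the reward $\alpha^{\sD}$, and accrues only the penalties $\sum_j p_R(j)\beta^{\sD}_j \leq 0$. The naive fix of passing to $\tilde f(V') = f(V') - \min_{V'} f(V')$ preserves submodularity and the maximizer but degrades the bound, since $\tilde f(V^{\*}) \geq \tfrac12 \tilde f(V_{\mathrm{opt}})$ only gives $f(V^{\*}) \geq \tfrac12 f(V_{\mathrm{opt}}) + \tfrac12 \min_{V'} f(V')$, and the shift term is negative; so the shift does \emph{not} by itself recover the clean $\tfrac12\max$ statement. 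Making the argument airtight therefore requires establishing that the relevant objective is effectively nonnegative — e.g. restricting to the operating regime in which $\max_{V'} f \geq 0$ (the detection reward $\alpha^{\sD}$ is large enough that some defense strategy is profitable), or arguing directly that the maximizing $V^{\*}$ lies in the nonnegative part of the landscape — after which the cited double-greedy guarantee applies verbatim. Everything outside this nonnegativity reduction is routine: identifying the ground set, quoting submodularity from Proposition~\ref{prop:BR_submod}, and counting the $O(NZ)$ oracle calls.
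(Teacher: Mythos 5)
Your proposal takes essentially the same route as the paper: the paper's entire proof is a one-line appeal to the submodularity established in Proposition~\ref{prop:BR_submod} together with the double-greedy algorithm of \cite{BucFelSefSch-15}, which is exactly the algorithm you describe and whose oracle-call count gives the $O(NZ)$ bound. Your additional observation about the nonnegativity hypothesis is well taken and is a gap in the paper's own argument as much as in yours: the $\tfrac12$ guarantee of the randomized double-greedy procedure is proved for nonnegative submodular functions, $f = U^{\sD}$ here can be negative (the cost terms $\C_{\sD}, \W_{\sD}, \gamma_r$ and the penalties $\betD$ are all nonpositive), and as you correctly note an additive shift does not rescue the multiplicative bound, so a rigorous version of this proposition would need the regime restriction you suggest or a reformulated (relative) approximation guarantee.
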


\begin{proof}
The proof follows from submodularity of $V^{\prime}$ and \cite{BucFelSefSch-15}.
\end{proof}
\section{Results: Solution to Flow Tracking Game for Single-Stage Attacks}
\label{sec:results1}
In this section, we focus on the  case where there is only a single attack stage and provide a solution for the DIFT game.  Recall that in single-stage attack, the attacker's objective is to choose transitions in the information flow graph so as to reach a target node. Our approach to solve the game is based on a minimum capacity cut-set formulation on a flow network constructed for the information flow graph of the system followed by  solving a bimatrix game. Here, $M=1$ and hence we drop the notation for stage in this section.

\noindent For a flow-network $\F$, a {\em cut} is defined below.
\begin{defn}\label{def:feasible_flow}
In a flow-network $\F$ with vertex and directed edge sets $V_{\sF}$ and $E_{\sF}$ respectively,  for  a subset $\hat{\S} \subset V_{\sF}$ the cut induced by $\hat{\S}$ is a subset of edges $\kappa(\hat{\S}) \subset E_{\sF}$ such that for every $(u,v) \in \kappa(\hat{\S})$, $|\{u,v\} \cap \hat{\S}| = 1$.
\end{defn}
 The set $\kappa(\hat{\S})$ consists of all edges whose one end point is in $\hat{\S}$. Given a flow-network $\F = (V_{\sF}, E_{\sF})$ with source-sink pair $(s_{\sF}, t_{\sF})$ and edge capacity vector $c_{\sF}: E_{\sF} \rightarrow \R_{+}$, the cost of a cut $\kappa(\hat{\S})$, is defined  as the sum of the costs of the edges in the cut
\begin{align} \label{eq:flow}
c_{\sF}(\kappa(\hat{\S})) = \sum_{e \in \kappa(\hat{\S})} c_{\sF}(e).
\end{align}
The objective of the  {\em (source-sink)-min-cut problem} is to find a cut $\kappa(\hat{\S}^\*)$  of  $\hat{\S}^\*$ such that $c_{\sF}(\kappa(\hat{\S}^\*)) \leqslant c_{\sF}(\kappa(\hat{\S}))$
for any cut  $\kappa(\hat{\S})$ of $\hat{\S}$ satisfying $s_{\sF} \in \hat{\S}$ and $t_{\sF} \notin \hat{\S}$. The (source-sink)-min-cut problem is  well studied and there exist different algorithms that find the maximum flow $f^\*$ in  polynomial time (polynomial in $|V_{\sF}|$ and $|E_{\sF}|$)  \cite{Orl:93}. Given an information flow graph $\G$, we first construct the flow-network $\F$.

Pseudo-code describing the construction of $\F=(V_{\sF}, E_{\sF})$ is given in Algorithm~\ref{alg:NE}. The vertex set of $\F$ consists of two nodes $s_i$ and $s'_i$ corresponding to each node $s_i$ in the information flow graph $\G$ and additional vertices $s_{\sF}, t_{\sF}$ (Step~\ref{step:vertex}). Thus $|V_{\sF}|=2N+2$. 
The directed edge set of $\F$ consists of all edges in the information flow graph ($\bar{E}_{\G}$), edges corresponding to the nodes in the information flow graph ($\hat{E}_{\G}$), edges connecting source node $s_{\sF}$ to all nodes in the set $\lambda$ ($E_{\lambda}$), and edges connecting all destination nodes to the sink node  $t_{\sF}$ ($E_{\sD}$) (Step~\ref{step:edge}).
The capacity vector $c_{\sF}$ is defined in such a way that all edges except the edges corresponding to nodes in $\G$ have infinite capacity. The capacity of the edges in $\hat{E}_{\G}$ are defined as the sum of  cost for selecting those nodes as tag source and tag sink, since the costs of selecting the security rules do not depend on the node (Step~\ref{step:capa}). Hence, a minimum capacity edge in $\F$ corresponds to a node in $\G$ that has minimum cost of tagging and trapping.  Let  $\kappa(\hat{\S}^\*)$ denotes  an optimal solution to the (source-sink)-min-cut problem on $\F$. Since $\hat{E}_{\G}$ is a cut and $\sum_{e \in \hat{E}_{\G}}c_{\sF}(e) < \infty$,  $\kappa(\hat{\S}^\*) \subset \hat{E}_{\G}$ Then,  the min-cut nodes is given by
\begin{equation}\label{eq:mincut}
 \hat{\S}^\* := \{s_i: (s_i, s'_i) \in \kappa(\hat{\S}^\*)\}.
 \end{equation}
\begin{algorithm}[t]
\setstretch{1.05}
  \caption{Pseudo-code for constructing the flow-network $\F$ and defender payoff function $U_{\sD}(\cdot)$
  \label{alg:NE}}
  \begin{algorithmic}
\State \textit {\bf Input:} Information flow graph $\G$, costs $\C_{\sD}, \W_{\sD}, \gamma_1, \ldots, \gamma_{N}$
\State \textit{\bf Output:} Flow-network $\F$, source, sink nodes: $s_{\sF},t_{\sF}$, capacity vector $c_{\sF}$
\end{algorithmic}
  \begin{algorithmic}[1]
  \State Construct flow-network $\F$ with vertex set $V_{\sF}$ and edge set $E_{\sF}$ as follows:  \label{step:graph} 
  \State  $V_{\sF} \leftarrow V_{\G} \cup V'_{\G} \cup \{s_{\sF},  t_{\sF}\}$, where $V_{\G}=\{s_1, \ldots, s_N\}$, $V'_{\G} = \{s'_1, \ldots, s'_N\}$, and $s_{\sF} = s_0$\label{step:vertex}  
  \State $E_{\sF} \leftarrow \bar{E}_{\G} \cup \hat{E}_{\G} \cup E_{\lambda} \cup E_{\sD}$, where $\bar{E}_{\G} = \{(s'_i, s_j): (s_i, s_j) \in E_{\G}\}$, $\hat{E}_{\G}=\{(s_i, s'_i): i=1,\ldots, N\}$, $E_{\lambda} = \{(s_{\sF}, s_i): s_i \in \lambda$, and $E_{\sD}=\{(s'_i, t_{\sF}): s_i \in \D\}$\label{step:edge}    
\State $c_{\sF} (e)  \leftarrow \begin{cases}
\begin{array}{ll}
\infty, & e \in \bar{E}_{\G} \cup  E_{\lambda} \cup E_{\sD}\\
\C_{\sD}(s_i)+\W_{\sD}(s_i), & e \in \hat{E}_{\G} 
\end{array}
\end{cases}$ \label{step:capa}
\end{algorithmic}
\end{algorithm}
The objective of the defender is to optimally select a defense policy such that no adversarial flow reaches from $s_0$ to some node in $\D$. In other words, defender ensures that no flow from  $s_{\sF}$ reaches $t_{\sF}$ without getting detected. For achieving this the defender's policy must have a nonzero probability of tag and trap for at least  one node in all possible paths from $s_{\sF}$ to $t_{\sF}$. For any adversary policy, the best possible choice for the defender is to tag and trap at a node that has the minimum total cost $\C_{\sD}(\cdot)+ \W_{\sD}(\cdot)$.  An attack path is a directed path  from $s_0$ to some node in $\D$ formed by a sequence of transitions of the adversary. The probability of an attack path under an adversary strategy is the product of the probabilities of all transitions along that path. The adversary plans its transitions to obtain an attack path with least probability of detection. The result below characterizes Nash equilibria of the single-stage.

\begin{theorem}\label{thm:mincutNE}
Let $ \hat{\S}^\*$ be a min-cut of the flow-network $\F=(V_{\sF}, E_{\sF})$ constructed in Algorithm~\ref{alg:NE}. Then, at Nash equilibrium for the single-stage attack case, the  defender's policy is to tag and trap with equal probability all the nodes in $ \hat{\S}^\*$. Further, the adversary's strategy is such that each attack path passes through exactly one node in $ \hat{\S}^\*$. 
\end{theorem}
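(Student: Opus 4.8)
The plan is to reduce the single-stage game to a bimatrix game and then read off the equilibrium from max-flow/min-cut duality. First, using Lemma~\ref{lem:s_0} together with the best-response characterizations in Lemma~\ref{lemma:adversary_BR} and Proposition~\ref{prop:A_BR_opt}, I would argue that the adversary's effective pure strategies are attack paths $\omega$ from $s_0$ to $\D$, its payoff being affine and increasing in the survival probability $p(\omega)=\prod_{s_i\in\omega}\big(1-d(s_i)\big)$, where $d(s_i):=\mathbf{p}^1_{\sD}(s_i)\mathbf{p}^2_{\sD}(s_i)\prod_{r=3}^{2+N}\mathbf{p}^r_{\sD}(s_i)$ is the per-node detection probability. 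The defender's payoff is the opposing detection term $\alphDd-p(\omega)(\alphDd-\betDd)$, with $\alphDd-\betDd>0$, minus the \emph{modular} tagging, trapping and rule-selection costs (the modularity is exactly what was exploited in Proposition~\ref{prop:BR_submod}). Since the detection term is zero-sum-like in $p(\omega)$ while the costs are additive, the defender's equilibrium behaviour reduces to a distribution over node placements played against the adversary's distribution over paths, i.e. to the bimatrix game announced in the introduction, in which the adversary is detected exactly when its path meets a covered node.

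Next I would identify the equilibrium support with the min-cut. By Algorithm~\ref{alg:NE}, the node-edges $(s_i,s_i')$ carry the finite capacities of Step~\ref{step:capa}, so a min-cut $\hat{\S}^\*$ defined through \eqref{eq:mincut} is a minimum-cost vertex set meeting every directed $s_0$-to-$\D$ path (Menger's theorem / max-flow--min-cut). I would then establish both inclusions of the defender's NE support: (a) the defender's placement must cover every attack path, for otherwise the adversary has an undetected path with $p(\omega)=1$ and the defender strictly prefers shifting coverage onto it, contradicting equilibrium; and (b) because $\C_{\sD},\W_{\sD}$ enter the utility as penalties, covering any transversal more expensive than $\hat{\S}^\*$ strictly lowers the defender's payoff, so at equilibrium coverage is concentrated on a \emph{minimum}-cost cut. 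Together these give that the defender's detection effort is supported exactly on $\hat{\S}^\*$, and a max-flow path decomposition supplies, for each node of $\hat{\S}^\*$, an $s_0$-to-$\D$ path crossing the cut only at that node, which will serve as the adversary's candidate support.

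Finally I would pin down the two quantitative claims via the mixed-NE indifference conditions. Since the adversary randomizes only over best responses, every path in its support must yield the same $p(\omega)$; as each such path crosses $\hat{\S}^\*$, equalizing $p(\omega)$ across the distinct crossing nodes forces the per-node detection probability $d(s_i)$ to be equal on all of $\hat{\S}^\*$, which is realized precisely by tagging and trapping every node of $\hat{\S}^\*$ with equal probability. For the adversary, I would observe that any path crossing $\hat{\S}^\*$ two or more times strictly decreases $p(\omega)$, since each crossed node contributes a factor strictly less than one, while nodes outside $\hat{\S}^\*$ contribute factor one; hence multi-crossing paths are never best responses and every attack path in the equilibrium support passes through exactly one node of $\hat{\S}^\*$. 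Checking that this equal-probability defender mixture and this single-crossing path distribution are mutual best responses then closes the argument.

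I expect the main obstacle to be the second step: showing rigorously that the genuinely nonzero-sum, node-heterogeneous cost terms do not distort the clean min-cut and equal-probability characterization. One must demonstrate that the trade-off between the opposing detection payoffs and the defender's heterogeneous $\C_{\sD}+\W_{\sD}$ costs balances so that the support is a minimum-cost cut while the detection probabilities remain equalized. I anticipate handling this by LP duality between the adversary's flow-like distribution over paths and the defender's cut (as in classical zero-sum network interdiction), and then accounting for the cost terms separately because they are modular in the defender's selection, so that they determine \emph{which} cut is played without affecting the indifference-driven equal-probability structure on it.
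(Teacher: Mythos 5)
Your proposal is correct and follows essentially the same route as the paper: the paper likewise establishes that the defender's equilibrium support is the min-cut via a cost-comparison/contradiction argument (its Lemma~\ref{lem:detection_equal}), that the adversary's best response crosses the cut exactly once (its Lemma~\ref{lem:NEdisj}), and that the adversary's indifference under unilateral deviations forces all path-detection probabilities, and hence the per-node detection probabilities on $\hat{\S}^\*$, to be equal. The LP-duality/interdiction framing you add is not used explicitly in the paper, but it is packaging rather than a different argument; the substance coincides with the paper's lemma decomposition.
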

Before giving the proof of Theorem~\ref{thm:mincutNE}, we present the following lemma that establishes the first main argument in proving the theorem.
\begin{lemma}\label{lem:NEdisj}
Let $\Omega_{\D}$ be the set of all paths in $\G$ from $s_0$ to some node in $\D$ under any adversary policy.  Then, for a defender policy that assign tag and trap at all nodes in the min-cut $ \hat{\S}^\*$ and does not tag and trap any other node, the best response of the adversary is a sequence of transitions such that any attack path (or set of paths if mixed policy)  passes through exactly one node which is tagged and also a trap.
\end{lemma}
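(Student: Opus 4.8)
The plan is to combine two facts: that $\hat{\S}^{\*}$, being a cut of $\F$, must be crossed by every $s_0$-to-$\D$ path \emph{at least once}, and that crossing a tagged-and-trapped node can only lower the probability of reaching a destination undetected, so an optimal adversary crosses the cut \emph{at most once}. For a node $s_i$ I write $d_i := {\bf p}^1_{\sD}(s_i){\bf p}^2_{\sD}(s_i)\prod_{r=3}^{2+N}{\bf p}^r_{\sD}(s_i)$ for its detection probability under the given defender policy; by hypothesis $d_i > 0$ exactly when $s_i \in \hat{\S}^{\*}$ and $d_i = 0$ otherwise. By Lemma~\ref{lemma:adversary_BR} (equivalently Proposition~\ref{prop:A_BR_opt}) the adversary's best response maximizes $p(\omega) = \prod_{s_i \in \omega}(1-d_i)$ over paths $\omega \in \Omega_{\D}$.

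First I would record the cut property. Since $\kappa(\hat{\S}^{\*}) \subset \hat{E}_{\G}$ separates $s_{\sF}$ from $t_{\sF}$ in $\F$, and every $s_0$-to-$\D$ path in $\G$ corresponds to an $s_{\sF}$-to-$t_{\sF}$ path in $\F$ that uses the edge $(s_a,s'_a)$ whenever it passes through $s_a$, every $\omega \in \Omega_{\D}$ contains at least one node of $\hat{\S}^{\*}$. Hence in the product $p(\omega) = \prod_{s_i \in \omega}(1-d_i)$ only the cut nodes of $\omega$ contribute factors $(1-d_i) < 1$, while all other factors equal $1$, so $p(\omega) = \prod_{s_i \in \omega \cap \hat{\S}^{\*}}(1-d_i)$.

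Next comes the domination step, which is the heart of the argument. If $\omega$ meets $\hat{\S}^{\*}$ in two or more nodes, then $p(\omega) = \prod_{s_i \in \omega \cap \hat{\S}^{\*}}(1-d_i)$ is strictly smaller than the single factor $(1-d_a)$ for any one cut node $s_a \in \omega$, since every omitted factor lies strictly below $1$. Thus a path crossing the cut more than once is strictly dominated by one crossing it once---\emph{provided such a single-crossing path exists}. Establishing existence is the step I expect to be the main obstacle, and it is where minimality of the cut enters: because $\hat{\S}^{\*}$ is a \emph{minimum} cut and the finite capacities $\C_{\sD}(\cdot)+\W_{\sD}(\cdot)$ of the edges in $\hat{E}_{\G}$ are strictly positive, for each $s_a \in \hat{\S}^{\*}$ the reduced set $\hat{\S}^{\*}\setminus\{s_a\}$ cannot be a cut (else $\kappa(\hat{\S}^{\*}\setminus\{s_a\})$ would have strictly smaller cost, contradicting optimality). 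Consequently $\hat{\S}^{\*}\setminus\{s_a\}$ fails to separate $s_{\sF}$ from $t_{\sF}$, so there is an $s_0$-to-$\D$ path avoiding every cut node except $s_a$; by the cut property this path still meets $\hat{\S}^{\*}$, hence meets it in exactly $s_a$.

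Finally I would assemble the conclusion. Combining the lower bound ``at least one crossing'' with strict domination of multi-crossing paths, the adversary's best response---whether pure or, in the mixed case, every path carrying positive weight---uses only paths meeting $\hat{\S}^{\*}$ in exactly one node. Indeed, if a best response placed positive probability on a path $\omega$ crossing $\hat{\S}^{\*}$ at least twice, one could shift that weight to the dominating single-crossing path (which exists by the minimality argument), strictly increasing $p(\omega)$ and therefore, by the payoff expression in Lemma~\ref{lemma:adversary_BR} (whose per-path contribution is increasing in $p(\omega)$ because $\beta^{\sA}-\alpha^{\sA}>0$ and all terminal rewards coincide in the single-stage case), strictly increasing the adversary's utility---contradicting optimality. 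Hence no such multi-crossing path can occur, which is exactly the claim of Lemma~\ref{lem:NEdisj}.
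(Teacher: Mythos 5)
Your proof is correct and follows essentially the same route as the paper's: every attack path must cross the min-cut at least once, each additional crossing through a tagged-and-trapped node strictly lowers the success probability $p(\omega)$, and minimality of the cut guarantees that for each cut node there exists a path meeting the cut only there, so any best response places zero weight on multi-crossing paths. Your write-up is in fact slightly more careful than the paper's on the existence step (the paper only asserts a path through $s_i$ but not $s_r$, whereas one needs --- and you prove --- a path avoiding \emph{all} other cut nodes); the only nit is that you describe the capacities $\C_{\sD}(\cdot)+\W_{\sD}(\cdot)$ as strictly positive while the paper defines these costs as negative, a sign convention the paper itself is loose about.
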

\begin{proof}
Consider a  policy of the defender where all the nodes in the min-cut, i.e., $\hat{\S}^\*$,  are tagged and assigned as traps. Note that, all paths in $\Omega_{\D}$ passes through some node in $\hat{\S}^\*$. We prove the argument through a contradiction. Suppose that there exists  a path $ \omega \in \Omega_{\D}$ such that there are two nodes, say $s_i, s_r$,  in path $\omega$ with nonzero probability of tag and trap. Without loss of generality, assume that in $\omega$ there exists a directed path from  $s_i$ to $s_r$.  Now we show that $\pi(\omega) = 0$, where $\pi(\omega)$ is the probability with which adversary chooses path $\omega$. Note that $s_i, s_r \in \hat{\S}^\*$. Since $\hat{\S}^\*$ corresponds to a min-cut, there exists paths in $\Omega_{\D}$ that has node $s_i$ in it but not $s_r$, and vice-versa.  Hence for an adversary whose current state is $s_i$, there exists a path from $s_i$ to some node in $\D$ that guarantees the win of adversary. The transition probability from $s_i$ to a node in $\omega$ that will lead to  some node in $\D$ through $s_r$ is  zero as this path has lower adversary payoff. This gives $\pi(\omega) = 0$  and completes the proof.
\end{proof}

The following result proves  that, for any adversary policy  the best response of the defender is to tag and trap at one node in every  attack path under that adversary policy.
\begin{lemma}\label{lem:NEBR}
Let $\Omega_{\D}$ denote the set of all  paths from $s_0$ to some node in $\D$ under any adversary policy. If the defender's policy is such that the probability of detecting the adversary is the same for all  $\omega \in \Omega_{\D}$, then the best response of the defender is always to tag at exactly one node in every $\omega \in \Omega_{\D}$. 
\end{lemma}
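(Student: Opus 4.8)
The plan is to argue by a cost-reducing exchange, paralleling the contradiction used in Lemma~\ref{lem:NEdisj}. First I would specialize the defender utility to the single-stage setting. Since an adversary heading to a destination is either detected or reaches $\D$, we have $p_R = 1 - p_T$, so from Eq.~\eqref{eq:Ud},
\[
U^{\sD} = \sum_{s_i \in \S}\Big({\bf p}^1_{\sD}(s_i)\,\C_{\sD}(s_i) + {\bf p}^2_{\sD}(s_i)\,\W_{\sD}(s_i) + \sum_{r=1}^{N}{\bf p}^{2+r}_{\sD}(s_i)\,\gamma_r\Big) + p_T(\alpha^{\sD} - \beta^{\sD}) + \beta^{\sD}.
\]
Because $\alpha^{\sD} > 0 > \beta^{\sD}$, the coefficient $\alpha^{\sD}-\beta^{\sD}$ is positive and $\beta^{\sD}$ is constant; hence maximizing $U^{\sD}$ amounts to jointly maximizing the detection probability $p_T$ and minimizing the tagging, trapping, and rule-selection costs. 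Writing $x_k := {\bf p}^1_{\sD}(s_k){\bf p}^2_{\sD}(s_k)\prod_{r=3}^{2+N}{\bf p}^r_{\sD}(s_k)$ for the per-node detection probability (as in the proof of Proposition~\ref{prop:A_BR_opt}), the probability that the flow survives along $\omega$ is $\prod_{s_k \in \omega}(1-x_k)$, and ``tagging a node of $\omega$'' means $x_k > 0$. The equalization hypothesis states that $\prod_{s_k \in \omega}(1-x_k) = q$ for a common $q \in [0,1)$ across all $\omega \in \Omega_{\D}$.

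Next I would establish the two halves of ``exactly one'' separately. For the lower bound, if some $\omega \in \Omega_{\D}$ carried no tagged node, its detection probability would be $0$, and equalization would force $p_T = 0$ on every path; the defender could then strictly raise $U^{\sD}$ by tagging and trapping the nodes of a min-cut $\hat{\S}^\*$, since in the nontrivial detecting regime the gain $p_T(\alpha^{\sD}-\beta^{\sD})$ dominates the incurred cost. Hence at a best response every $\omega$ carries at least one tagged node. For the upper bound I would argue by contradiction: suppose a best response tags two nodes $s_i, s_r$ lying on a common path $\omega$, with $x_i, x_r > 0$. I would then perturb the policy by lowering one of the positive factors defining $x_r$ (say ${\bf p}^1_{\sD}(s_r)$) by a small $\epsilon > 0$. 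This strictly increases $U^{\sD}$ through the cost term by $\epsilon\,|\C_{\sD}(s_r)|$, while decreasing $p_T$ only through paths passing through $s_r$; since $s_i$ already supplies detection on $\omega$, and equalization shows every path retains strictly positive detection through its remaining tagged nodes, the marginal detection loss is outweighed, so the net change in $U^{\sD}$ is positive, contradicting best response.

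The main obstacle is the coupling introduced by nodes shared among several paths: lowering $x_r$ to remove the redundancy on $\omega$ also lowers detection on every other path through $s_r$, so the exchange cannot be analysed path-by-path in isolation. To control this I would invoke the min-cut structure exactly as in Lemma~\ref{lem:NEdisj}: because $s_i$ and $s_r$ both carry tag/trap mass and $\hat{\S}^\*$ is a cut, there exist paths in $\Omega_{\D}$ through $s_r$ that avoid $s_i$ and conversely, and the constraint $\prod_{s_k\in\omega}(1-x_k)=q$ ties the detection of these single-crossing paths together. Comparing a doubly-covered path with its two associated single-crossing paths shows that the extra node on $\omega$ supplies no detection beyond the common level $q$ and therefore only adds cost, so removing that redundant mass strictly improves $U^{\sD}$. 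Carrying this comparison through the product form of $x_k$—rather than treating $x_k$ as a free variable—is the delicate step, but once it is done the best response is forced to place tag-and-trap mass so that each $\omega \in \Omega_{\D}$ meets exactly one tagged node, which is precisely the cut structure needed to complete the proof of Theorem~\ref{thm:mincutNE}.
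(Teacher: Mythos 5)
Your setup (the decomposition of $U^{\sD}$ into cost terms plus a detection term with positive coefficient $\alpha^{\sD}-\beta^{\sD}$, and the per-node detection probability $x_k$) matches the paper's, and your ``at least one tagged node per path'' half is a reasonable addition that the paper leaves implicit. The gap is in the ``at most one'' half. Your perturbation --- lowering ${\bf p}^1_{\sD}(s_r)$ by $\epsilon$ --- saves cost $\epsilon\,|\C_{\sD}(s_r)|$ but also lowers $p_T$ on \emph{every} path through $s_r$, in particular on the single-crossing paths through $s_r$ that avoid $s_i$, where $s_r$ is not redundant at all. The net change in $U^{\sD}$ is the cost saving minus $(\alpha^{\sD}-\beta^{\sD})$ times the detection loss, and nothing in your argument bounds the latter; for $\alpha^{\sD}-\beta^{\sD}$ large relative to $|\C_{\sD}(s_r)|$ the perturbation strictly \emph{decreases} $U^{\sD}$, so the claimed contradiction does not follow. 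You identify this coupling as ``the delicate step,'' but the proposed fix (comparing a doubly-covered path with its single-crossing companions) is not carried out, and as sketched it proves something different: if $\omega$, $\omega_i$, $\omega_r$ all have survival probability $q$ and only $s_i,s_r$ carry mass, then $q=(1-x_i)(1-x_r)=(1-x_i)=(1-x_r)$ forces $q=q^2$, i.e.\ the configuration is degenerate --- an observation you would still need to convert into a statement about best responses. Moreover, the single-crossing companions are guaranteed to exist only when $s_i$ and $s_r$ both belong to a minimal cut, which this lemma does not assume.

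The paper avoids all of this by never perturbing the detection profile: it compares two policies that induce the \emph{same} (equalized) detection probability $1-q$ on every path in $\Omega_{\D}$, one placing the tag/trap mass on a single node per path and one spreading it over several. The reward terms involving $\alpha^{\sD}$ and $\beta^{\sD}$ then cancel identically, and the comparison reduces to the cost terms alone: achieving $1-(1-x_i)(1-x_r)=1-q$ with two nodes requires $x_i+x_r=(1-q)+x_ix_r\ge 1-q$, strictly more total probability mass than the single-node value $x=1-q$ (the paper phrases this as the dependence of the detection events), hence a strictly more negative cost contribution. That ``hold the per-path detection probability fixed and compare only costs'' step is the missing idea; without it your exchange argument cannot be closed.
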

\begin{proof}
Given $(1-p(\omega))$'s are same for all $\omega \in \Omega_{\D}$. Consider a defender's policy ${\bf p}_{\sD}$  in which exactly one node in every $\omega \in \Omega_{\D}$ is chosen as the tag source and tag sink. Assume that the defender policy is modified to ${\bf p}'_{\sD}$ such that more than one node in some path has nonzero tag and trap probability. This variation updates the probabilities of nodes in a set of paths in $\Omega_{\D}$. For ${\bf p}'_{\sD}$  to be  a best response, $U_{\sD}({\bf p}'_{\sD}, {\bf p}_{\sA}) \geqslant U_{\sD}({\bf p}_{\sD}, {\bf p}_{\sA}) $.  The defender's payoff is given by
\begin{multline*}
U_{\sD}({\bf p}_{\sD}, {\bf p}_{\sA}) = \sum_{\omega \in \Omega_{\D}}\pi(\omega)\Big[(1-p(\omega))\alpha^{\sD} + p(\omega)\beta^{\sD} +\\
 \sum_{s_i \in \omega} [{\bf p}^1_{\sD}(s_i)\C_{\sD}(s_i)+  {\bf p}^2_{\sD}(s_i)\W_{\sD}(s_i) + \sum_{r=1}^{N} {\bf p}^{2+r}_{\sD}(s_i) \gamma_{r}]\Big]
\end{multline*}
The terms in $U_{\sD}$ that correspond to $\alpha^{\sD}$ and $\beta^{\sD}$ are the same in both the cases as $p(\omega)$'s are equal for all possible paths. Hence the terms in $U_{\sD}$ differ  in the terms corresponding to $\C_{\sD}$, $\W_{\sD}$, and $\gamma_r$'s.  Note that defender's probabilities (policy) at two nodes in  a path are dependent due to the constraint on $p(\omega)$. Hence for every path  whose probabilities are modified,  $\sum_{s_i \in \omega} [{\bf p}^{'1}_{\sD}(s_i)\C_{\sD}(s_i)+  {\bf p}^{'2}_{\sD}(s_i)\W_{\sD}(s_i) + \sum_{r=1}^{N} {\bf p}^{'r}_{\sD}(s_i) \gamma_{r}] < \sum_{s_i \in \omega} [{\bf p}^1_{\sD}(s_i)\C_{\sD}(s_i)+  {\bf p}^2_{\sD}(s_i)\W_{\sD}(s_i) + \sum_{r=1}^{N} {\bf p}^r_{\sD}(s_i) \gamma_{r}]$, as the probability in the single node case is less than the sum of the probabilities of more than one node case as the events are dependent and the $\C_{\sD}$ and $\W_{\sD}$ values are also the least possible ($\gamma_r$'s are equal at all nodes in the information flow graph). This implies  $U_{\sD}({\bf p}'_{\sD}, {\bf p}_{\sA}) < U_{\sD}({\bf p}_{\sD}, {\bf p}_{\sA})$. Therefore, there exists no best response for the defender which has more than one node with nonzero tag and trap probability in a path, if $p(\omega)$'s are equal for all $\omega \in \Omega_{\D}$. 
\end{proof}

The result below deduces a property of the best response of the adversary which along with Lemma~\ref{lem:NEBR} establishes the final main argument to prove Theorem~\ref{thm:mincutNE}. 
\begin{lemma}\label{lem:detection_equal}
Let $\Omega_{\D}$ denote the set of all paths from $s_0$ to some node in $\D$ under any adversarial policy. Let the defender's policy is such that the probability of detecting the adversary is the same for all  $\omega \in \Omega_{\D}$. Then the best response of the defender is to tag and trap the flows at the min-cut of the flow-network $\F$ constructed in Algorithm~\ref{alg:NE}.
\end{lemma}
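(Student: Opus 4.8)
The plan is to build directly on Lemma~\ref{lem:NEBR} together with the node-splitting structure of the flow-network $\F$ from Algorithm~\ref{alg:NE}. The hypothesis that the detection probability $1-p(\omega)$ is identical for every $\omega \in \Omega_{\D}$ is precisely the hypothesis of Lemma~\ref{lem:NEBR}, so I would first invoke that lemma to conclude that any best response of the defender tags and traps exactly one node on each path in $\Omega_{\D}$. The consequence I would then extract is that the set $T$ of nodes carrying nonzero tag-and-trap probability must intersect every directed path from $s_0$ to $\D$; that is, $T$ is a vertex cut separating $s_0$ from $\D$ in $\G$. A full cut is forced rather than a partial one because if some path $\omega$ avoided $T$ entirely, the adversary would route along $\omega$ with probability one and evade detection, strictly lowering $U_{\sD}$ through the forgone reward $\alpha^{\sD}$; such a policy cannot be a best response.

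Next I would reduce the optimization to a pure cost-minimization. Writing $U_{\sD}$ in the form used in the proof of Lemma~\ref{lem:NEBR}, the detection and intermediate-penalty terms depend only on $p(\omega)$, which is fixed and equal across all paths by hypothesis; hence these terms contribute a constant to $U_{\sD}$ that is independent of which cut $T$ is selected. Maximizing $U_{\sD}$ over the admissible best responses therefore becomes equivalent to minimizing the aggregate tag-and-trap cost $\sum_{s_i \in T}\big(\C_{\sD}(s_i)+\W_{\sD}(s_i)\big)$, since the security-rule costs $\gamma_r$ are uniform across nodes and add only a fixed amount per selected node.

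The final step is to identify this minimum-weight vertex cut with the min-cut $\hat{\S}^\*$ of $\F$. Here I would use the node-splitting reduction encoded in Algorithm~\ref{alg:NE}: each node $s_i$ of $\G$ is split into $s_i$ and $s_i'$ joined by the edge $(s_i,s_i')\in\hat{E}_{\G}$ of capacity $\C_{\sD}(s_i)+\W_{\sD}(s_i)$, while every other edge of $\F$ carries infinite capacity. Since any finite-capacity cut of $\F$ must lie entirely within $\hat{E}_{\G}$ (as already observed, $\sum_{e\in\hat{E}_{\G}}c_{\sF}(e)<\infty$), each such cut corresponds, via Eq.~\eqref{eq:mincut}, to a set of nodes of $\G$ whose removal disconnects $s_0$ from $\D$, and its capacity equals exactly the aggregate tag-and-trap cost of that node set. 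Consequently the minimum-cost vertex cut coincides with $\hat{\S}^\*$, so the defender's best response is to tag and trap the flows at $\hat{\S}^\*$, as claimed.

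The step I expect to be the main obstacle is rigorously justifying that the best-response node set is a \emph{full} vertex cut rather than merely a set hitting the paths chosen with positive probability under one particular adversary policy. The cleanest way to handle this is to quantify over ``any adversarial policy,'' exactly as $\Omega_{\D}$ is defined, and to close the gap by the no-evasion contradiction above, which shows that leaving any path uncovered is strictly suboptimal for the defender. Once the cut property is secured, the remaining equivalence with $\hat{\S}^\*$ is the routine node-capacitated max-flow/min-cut correspondence supplied by Algorithm~\ref{alg:NE}.
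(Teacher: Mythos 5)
Your proposal is correct and follows essentially the same route as the paper: invoke Lemma~\ref{lem:NEBR} to restrict attention to policies tagging and trapping one node per path, observe that the resulting node set must hit every $s_0$-to-$\D$ path (i.e., be a cut), and then appeal to the optimality of the min-cut $\hat{\S}^\*$ in the node-split network $\F$ to conclude no cheaper covering set exists (the paper phrases this last step as a contradiction with min-cut optimality rather than a direct cost-minimization, but the content is identical). Your write-up is somewhat more explicit about why an uncovered path is strictly suboptimal and about the capacity correspondence in Algorithm~\ref{alg:NE}, but there is no substantive difference in approach.
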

\begin{proof}
By Lemma~\ref{lem:NEBR} the best response of the defender is to tag and trap at   one node in every attack path. Note that all attack paths chosen under  ${\bf p}_{\sA}$ passes though some node in the min-cut.   Assigning nonzero probability of tag and trap at the nodes in $\hat{\S}^\*$, all possible attack paths have some nonzero probability of getting detected. We  prove  the result using a contradiction argument. Suppose that it is not the best response of the defender to tag and trap the nodes in $\hat{\S}^\*$. Then, there exists a subset of nodes $\hat{\S} \subset \S$ such that $\sum_{s_i \in \hat{\S}} \C_{\sD}(s_i) + \W_{\sD}(s_i) < \sum_{s_r \in \hat{\S}^\*} \C_{\sD}(s_r) + \W_{\sD}(s_r) $ and all possible paths from $s_0$ to nodes in $\D$ pass through some node in $\hat{\S}$. Then,  $\hat{\S}$ is a (source-sink)-cut-set and let $\kappa(\hat{\S}) := \{(s_i, s'_i): s_i \in \hat{\S}\}$. Then, $\kappa(\hat{\S})$ is a cut set and $c_{\sF}(\kappa(\hat{\S})) < c_{\sF} (\kappa(\hat{\S}^\*))$. This contradicts the fact that $\kappa(\hat{\S}^\*)$ is an optimal solution to the (source-sink)-min-cut problem.  Hence the best response of the defender is to tag and trap only the nodes in $\hat{\S}^\*$. 
\end{proof}

\noindent  Now we present the proof of Theorem~\ref{thm:mincutNE}.\\
{{\bf {\em Proof~of~Theorem~\ref{thm:mincutNE}}}:
Lemma~\ref{lem:NEdisj}  proves that the best response of the adversary is any sequence of transitions that gives a path (or set of paths if mixed policy) that passes through exactly one node that is a tag source and a trap, if the defender's policy is to tag at the min-cut nodes.
Lemma~\ref{lem:detection_equal}  concludes that the best response of the defender is to tag and trap the adversary at the nodes in the min-cut of the flow-network $\F$, provided the probability of detecting the adversary in all $\omega \in \Omega_{\D}$ are equal.  This implies that, if the detection probability ($(1-p(\omega))$'s) are equal at  NE, then  the defender's policy at NE will tag and trap the nodes in the min-cut   and the adversary will choose an attack path such that it passes through exactly one node that is tagged and also a trap.  Now we show that the detection probability are indeed equal at NE.
 
Consider any unilateral deviation in the policy of the adversary. Let $\pi(\omega)$'s for $\omega \in \Omega$ are modified due to change in transition probabilities of the adversary such that the  updated  probabilities of the attack paths are $\pi(\omega_i)+\epsilon_i$, for $i=1, \ldots, |\Omega|$. Here, $\epsilon_i$'s can take positive values, negative values or zero such that $\sum_{i=1}^{|\Omega|}\epsilon_i = 0$.  Consider two arbitrary paths, say $\omega_1$ and $\omega_2$, such that a unilateral change in the adversary policy changes $\pi(\omega_1)$ and $\pi(\omega_2)$ and the probabilities of the other paths remain unchanged. Without loss of generality, assume that $\pi(\omega_1)$ increases by $\epsilon$ while   $\pi(\omega_2)$ decreases by $\epsilon$ and all other $\pi(\omega)$'s remain same. As $({\bf p}_{\sD}, {\bf p}_{\sA})$ is a Nash equilibrium $(\pi(\omega_1) + \epsilon)\Big( p(\omega_1)(\beta^{\sA} - \alpha^{\sA}) + \alpha^{\sA} \Big) + (\pi(\omega_2) - \epsilon)\Big( p(\omega_2)(\beta^{\sA} - \alpha^{\sA}) + \alpha^{\sA} \Big) \leqslant$ $\pi(\omega_1) \Big( p(\omega_1)(\beta^{\sA} - \alpha^{\sA}) + \alpha^{\sA} \Big) + \pi(\omega_2) \Big( p(\omega_2)(\beta^{\sA} - \alpha^{\sA}) + \alpha^{\sA} \Big) $. This implies
$(p(\omega_1)-p(\omega_2))(\beta^{\sA} - \alpha^{\sA}) \leqslant 0$. As $(\beta^{\sA} - \alpha^{\sA}) \geqslant 0$, this implies $(p(\omega_1)-p(\omega_2)) \leqslant 0$. By exchanging the roles of $\omega_1$ and $\omega_2$ and using the same argument one can also show that  $(p(\omega_1)-p(\omega_2)) \geqslant 0$. This implies $(p(\omega_1)-p(\omega_2)) =0$. 
Since $\omega_1$ and $\omega_2$ are arbitrary, one can show that for the general case 
\begin{equation}\label{eq:GlobalUa}
\sum_{i=1}^{|\Omega|}\epsilon_ip(\omega_i) = 0.
\end{equation}
Eq.~\eqref{eq:GlobalUa} should hold for all possible values of $\epsilon_i$'s satisfying 
$\sum_{i=1}^{|\Omega|}\epsilon_i = 0.$ This gives $p(\omega_i) = p(\omega_j)$ for all $i, j \in \{1, \ldots , |\Omega|\}$  at Nash equilibrium. This completes the proof.
\qed

The NE of the game is characterized by a solution of the min-cut problem and the set of transitions of the adversary such that all attack paths have exactly one node which is tagged and also a trap. Moreover, the tag and trap probability of these nodes are equal. Note that, the solution to the min-cut problem is not unique in a general flow graph and hence the NE of the game may not be unique. Results in this subsection conclude that  at any NE the defender's policy will tag and trap the nodes in the min-cut with equal detection probability and the adversary chooses its transitions such that in every attack path exactly one node is a tag source and a tag sink.
\subsection{Matrix Game Analysis for Nash Equilibrium}\label{subsec:NEpolicy}
\begin{table*}[h]\caption{Matrix game for single-stage case with disjoint attack paths}\label{tb:matrix}
\begin{center}
{\scshape
\begin{tabular}{@{}c@{}}
\authorname 
\end{tabular}}
\begin{tabular}{|p{20mm}|c|c|c|c|}
\hline
\diaghead{\theadfont Diag ColumnmnHea}%
{Defender}{Adversary} & $\hat{s}_1$& $\hat{s}_2$ & \ldots & $\hat{s}_a$ \\
\hline
Not detected & \phantom{$9^{9^9}$}$(\beta^{\sD}, \beta^{\sA})$ & $(\beta^{\sD}, \beta^{\sA})$ & \ldots & $(\beta^{\sD}, \beta^{\sA})$ \\
\hline
Detected &  \phantom{$9^{9^9}$}$(\alpha^{\sD} +\mbox{cost}(\hat{s}_1), \alpha^{\sA})$  & $(\alpha^{\sD} +\mbox{cost}(\hat{s}_2), \alpha^{\sA})$& \ldots & $(\alpha^{\sD} +\mbox{cost}(\hat{s}_a), \alpha^{\sA})$\\
\hline
\end{tabular}
\end{center}
\end{table*}
%
In this subsection, we discuss the matrix-game formulation  of the single-stage case give in Table~\ref{tb:matrix}. We first solve the (source-sink)-min-cut problem on $\F$. Let an optimal solution be $\kappa(\hat{\S}^\*)$.  Let the vertex set corresponding to $\kappa(\hat{\S}^\*)$ be $\hat{\S^\*} = \{\hat{s}_{1}, \ldots, \hat{s}_a  \}$, where $\hat{\S^\*} := \{s_i: (s_i, s'_i) \in \kappa(\hat{\S}^\*)\}$.  
By Theorem~\ref{thm:mincutNE}, at NE the defender only tag and trap the nodes in $\hat{\S^\*}$ and adversary chooses transitions such that it passes through only one tag and trap. The  attack paths  chosen by the adversary are therefore characterized by $\{\hat{s}_{1}, \ldots, \hat{s}_a  \}$.
  We denote the probability of selecting an attack path  corresponding to the node $\hat{s}_{i}$ as $\pi(\hat{s}_{i})$. Further, let $ \mbox{cost}(\hat{s}_i)$ denote the total cost of selecting node $\hat{s}_{i}$ for conducting  security analysis.
  
\begin{rem}
At NE, since any attack path pass through exactly one node in $\hat{\S^\*} = \{\hat{s}_{1}, \ldots, \hat{s}_a  \}$ which indeed has equal probability of tag and trap, without loss of generality, one can consider the action space of the adversary as the set of disjoint paths through $\hat{\S^\*}$ and the adversary strategize over this set of disjoint paths. Thus in the bimatrix formulation, the strategy of the adversary is to select a path which  is uniquely defined by a node in $\hat{\S^\*} $.
\end{rem}

Now, we present a result that characterizes the set of NE of the single-stage attack case of the game given in Section~\ref{sec:formulation}.
\begin{theorem}\label{th:NEmatrix}
Solution to the matrix-game given in Table~\ref{tb:matrix} gives Nash equilibrium for the single-stage flow tracking game.
\end{theorem}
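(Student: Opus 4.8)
The plan is to show that the bimatrix game of Table~\ref{tb:matrix} is strategically equivalent to the single-stage sequential game once the strategy spaces are restricted to the form dictated by Theorem~\ref{thm:mincutNE}, and then to invoke the fact that a finite bimatrix game always admits a Nash equilibrium. First I would use Theorem~\ref{thm:mincutNE} to argue that it suffices to consider strategy profiles in which the defender tags and traps only the min-cut nodes $\hat{\S}^\* = \{\hat{s}_1, \ldots, \hat{s}_a\}$ with a common detection probability, and the adversary plays a distribution $\{\pi(\hat{s}_i)\}_{i=1}^a$ over the disjoint attack paths, each passing through exactly one $\hat{s}_i$. This reduction collapses the defender's continuous tag/trap/rule-selection strategy to the single effective parameter of detection probability (the row variable of the bimatrix) and collapses the adversary's transition probabilities to the column distribution $\pi(\cdot)$.

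Next I would set up the explicit correspondence between strategies and verify that payoffs are preserved. On the column side the adversary's mixed strategy maps directly; on the row side the defender's equal tag-and-trap probability on $\hat{\S}^\*$ induces a detection probability $q$, which I identify with the mass the defender places on the \emph{Detected} row. Using Lemma~\ref{lem:s_0} to pass from local utilities to the global payoffs $U^{\sA}, U^{\sD}$ of Eqs.~\eqref{eq:Ua} and~\eqref{eq:Ud}, I would compute the expected payoff of each player under the corresponding profile and check, entry by entry, that it coincides with the bilinear payoff of the bimatrix game: the adversary's expectation reduces to $\sum_i \pi(\hat{s}_i)\big(p(\omega_i)\beta^{\sA} + (1-p(\omega_i))\alpha^{\sA}\big)$ as in Lemma~\ref{lemma:adversary_BR}, matching the columns, and the defender's expectation reproduces the two-outcome $\beta^{\sD}/(\alpha^{\sD}+\mathrm{cost}(\hat{s}_i))$ structure of the rows.

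Finally I would argue that a Nash equilibrium of the bimatrix game is a Nash equilibrium of the full game in the sense of Definition~\ref{def:NE}. Since payoffs agree under the correspondence, no player can improve by deviating \emph{within} the restricted space. The remaining task is to rule out profitable deviations to strategies \emph{outside} the restricted form: a defender deviation that tags or traps a node off the min-cut, or that uses unequal probabilities, is excluded by the min-cut optimality argument of Lemma~\ref{lem:detection_equal} together with the equal-detection-probability conclusion established in the proof of Theorem~\ref{thm:mincutNE}; an adversary deviation to a path crossing two min-cut nodes, or none, is excluded by Lemma~\ref{lem:NEdisj}. I expect this last step, certifying that the finite bimatrix equilibrium is robust against all deviations in the original continuous, sequential strategy space, to be the main obstacle, since it requires combining the structural lemmas to show that every full-game best response already lies inside the bimatrix's action sets.
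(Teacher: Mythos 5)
Your overall strategy---reduce to the bimatrix game via Theorem~\ref{thm:mincutNE}, check that payoffs correspond, then certify that no deviation outside the restricted strategy space is profitable---is the same high-level ``equivalence'' framing the paper uses, and your column-side (adversary) correspondence matches the paper's: the adversary mixes over the disjoint paths indexed by $\hat{\S}^{\ast}$, and the mixture $\pi(\hat{s}_i)$ is pinned down by the defender's indifference between the two rows (the paper's Eq.~\eqref{eq:matrixD}). However, there is a genuine gap on the defender side. You collapse the defender's strategy to ``a single effective parameter of detection probability $q$,'' identified with the mass on the \emph{Detected} row. This loses essential structure: at each min-cut node the defender's action is the $(2+N)$-tuple $({\bf p}^1_{\sD}(\hat{s}_i),\ldots,{\bf p}^{2+N}_{\sD}(\hat{s}_i))$, the detection probability depends on the \emph{product} $\prod_g {\bf p}^g_{\sD}(\hat{s}_i)$, while the cost component of $U_{\sD}$ is \emph{linear} in the individual factors ${\bf p}^1_{\sD}\C_{\sD} + {\bf p}^2_{\sD}\W_{\sD} + \sum_r \gamma_r {\bf p}^{2+r}_{\sD}$. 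A given product can be factored in many ways with different costs, so a single $q$ neither determines the defender's payoff nor specifies an actual DIFT policy; your claimed entry-by-entry payoff preservation therefore does not go through as stated, and the correspondence from bimatrix rows back to full-game defender strategies is not well defined.

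The paper's proof resolves exactly this point, and it is the substantive content of the theorem: it imposes the equilibrium (no-unilateral-improvement) condition separately on each component ${\bf p}^1_{\sD}(\hat{s}_i)$, ${\bf p}^2_{\sD}(\hat{s}_i)$, and ${\bf p}^{2+r}_{\sD}(\hat{s}_i)$, obtaining the indifference relations $\varphi_1(\hat{s}_i)=\C_{\sD}(\hat{s}_i)/(\beta^{\sD}-\alpha^{\sD})$, $\varphi_2(\hat{s}_i)=\W_{\sD}(\hat{s}_i)/(\beta^{\sD}-\alpha^{\sD})$, and $\varphi_{2+r}(\hat{s}_i)=\gamma_r/(\beta^{\sD}-\alpha^{\sD})$, where each $\varphi_g$ is the product of the remaining $1+N$ probabilities at $\hat{s}_i$. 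Taking logarithms turns these into $2+N$ independent linear equations in the $2+N$ unknowns $\log {\bf p}^g_{\sD}(\hat{s}_i)$, whose unique solution is the defender's equilibrium policy. Your proposal would need this (or an equivalent) argument to recover the per-node probabilities; without it, the reduction proves at best that the bimatrix game has an equilibrium detection level, not that its solution yields a Nash equilibrium of the single-stage flow tracking game in the sense of Definition~\ref{def:NE}. Your final step (ruling out deviations off the min-cut or paths crossing zero or two cut nodes via Lemmas~\ref{lem:NEdisj} and~\ref{lem:detection_equal}) is consistent with how the paper uses those lemmas, but it is carried by Theorem~\ref{thm:mincutNE} rather than being the missing piece here.
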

\begin{proof}
 The defender's payoffs  for the two cases in Table~\ref{tb:matrix} is 
\begin{eqnarray}
U_{\sD}(\mbox{Not detected}) &=& \sum_{i=1}^{a} \pi(\hat{s}_i) \Big( \beta^{\sD} \Big)\label{eq:NEUd_1} \\
U_{\sD}(\mbox{Detected}) &=& \sum_{i=1}^{a} \pi(\hat{s}_i) \Big( \alpha^{\sD} + \mbox{cost}(\hat{s}_i) \Big)\label{eq:NEUd_2}
\end{eqnarray}
If  $U_{\sD}(\mbox{Not detected}) > U_{\sD}(\mbox{Detected})$, then the defender will never tag and trap at any node and the adversary will reach the destination without getting detected by the defender. On the other hand, if  $U_{\sD}(\mbox{Detected}) > U_{\sD}(\mbox{Not detected})$, then the adversary will always tag and trap all the nodes in $\hat{\S}^\*$ with probability one. The defender will randomly choose between to detect the adversary or not if Eqs.~\eqref{eq:NEUd_1} and~\eqref{eq:NEUd_2}  are equal.  This gives 
\begin{equation}\label{eq:matrixD}
 \sum_{i=1}^{a} \pi(\hat{s}_i) \Big(\beta^{\sD}-\alpha^{\sD} - \mbox{cost}(\hat{s}_i) \Big) =0.
 \end{equation}
 There are many possible values of $\pi(\hat{s}_i)$'s for $i=1, \ldots, a$, that satisfy Eq.~\eqref{eq:matrixD}. Each of those solution will give a probability mixture, i.e., $\pi(\hat{s}_i)$'s, for the adversary at a Nash equilibrium. 
 
 In order to obtain the probability mixture of the defender, we consider the following in Table~\ref{tb:matrix}. For every $\hat{s}_{i} \in \hat{\S^\*} $, one can find the set of nodes that belong to the set $\lambda$ that has directed path to the node $\hat{s}_{i}$ using a depth first search (DFS) algorithm \cite{CorLeiRivSte:01}. Let this set be denoted by $\lambda(\hat{s}_{i})$. Then, cost($\hat{s}_{i}$) $= \C_{\sD}(\hat{s}_{i})+\W_{\sD}(\hat{s}_{i})+\sum_{r \in \lambda(\hat{s}_{i})}\gamma_r$ for all  $\hat{s}_{i} \in \hat{\S^\*} $.  Then the probability of not detecting the adversary in an attack path with min-cut node $\hat{s}_{i}$ in it is given by $(1 - {\bf p}^1_{\sD}(\hat{s}_{i}) {\bf p}^2_{\sD}(\hat{s}_{i})\prod_{r \in \lambda(\hat{s}_{i})} {\bf p}^{2+r}_{\sD}(\hat{s}_{i}))$.
\begin{eqnarray}
 U_{\sA}(\hat{s}_1) &=& \Big(  1- {\bf p}^1_{\sD}(\hat{s}_1)\,{\bf p}^2_{\sD}(\hat{s}_1)\,\prod_{r \in \lambda(\hat{s}_1)} {\bf p}^{2+r}_{\sD}(\hat{s}_1) \Big)\beta^{\sA} + \nonumber \\
 && {\bf p}^1_{\sD}(\hat{s}_1)\,{\bf p}^2_{\sD}(\hat{s}_1)\,\prod_{r \in \lambda(\hat{s}_1)} {\bf p}^{2+r}_{\sD}(\hat{s}_1) \alpha^{\sA}\label{eq:NEUa_1} \\
U_{\sA}(\hat{s}_2) &=& \Big(  1- {\bf p}^1_{\sD}(\hat{s}_2)\,{\bf p}^2_{\sD}(\hat{s}_2)\,\prod_{r \in \lambda(\hat{s}_2)} {\bf p}^{2+r}_{\sD}(\hat{s}_2) \Big)\beta^{\sA} +  \nonumber\\
&& {\bf p}^1_{\sD}(\hat{s}_1)\,{\bf p}^2_{\sD}(\hat{s}_1)\,\prod_{r \in \lambda(\hat{s}_2)} {\bf p}^{2+r}_{\sD}(\hat{s}_2) \alpha^{\sA}\label{eq:NEUa_2}  \\
  \vdots & & \vdots \nonumber\\
U_{\sA}(\hat{s}_a) &=& \Big(  1- {\bf p}^1_{\sD}(\hat{s}_a)\,{\bf p}^2_{\sD}(\hat{s}_a)\,\prod_{r \in \lambda(\hat{s}_a)} {\bf p}^{2+r}_{\sD}(\hat{s}_a) \Big)\beta^{\sA} +  \nonumber\\
&& {\bf p}^1_{\sD}(\hat{s}_1)\,{\bf p}^2_{\sD}(\hat{s}_1)\,\prod_{r \in \lambda(\hat{s}_a)} {\bf p}^{2+r}_{\sD}(\hat{s}_a) \alpha^{\sA}\label{eq:NEUa_n} 
\end{eqnarray}

If the adversary's payoff with respect to one of the node in $\hat{\S}^\*$ is greater than the rest, then the adversary will always select the attack path with respect to that node. The adversary will randomly choose between attack paths that correspond to nodes  $\hat{s}_1, \hat{s}_2,  \ldots, \hat{s}_a$ only when $ U_{\sA}(\hat{s}_1) =  U_{\sA}(\hat{s}_2) = \ldots=  U_{\sA}(\hat{s}_a)$ which means Eqs.~\eqref{eq:NEUa_1} to~\eqref{eq:NEUa_n} must be equal. Theorem~\ref{thm:mincutNE} says that $p(\omega) = (1 - {\bf p}^1_{\sD}(\hat{s}_{1}) {\bf p}^2_{\sD}(\hat{s}_{1})\prod_{r \in \lambda(\hat{s}_{1})} {\bf p}^{2+r}_{\sD}(\hat{s}_{1})) = \ldots = (1 - {\bf p}^1_{\sD}(\hat{s}_{a}) {\bf p}^2_{\sD}(\hat{s}_{a})\prod_{r \in \lambda(\hat{s}_{a})} {\bf p}^{2+r}_{\sD}(\hat{s}_{a}))$. Thus Eqs.~\eqref{eq:NEUa_1} to~\eqref{eq:NEUa_n} are the same.

The defender's payoff is given by $$U_{\sD}({\bf p}_{\sD}, {\bf p}_{\sA}) = \sum_{\omega \in \Omega}\pi(\omega)\Big[(1-p(\omega))\alpha^{\sD} + p(\omega)\beta^{\sD} +$$ $$ \sum_{s_i \in \omega} [{\bf p}^1_{\sD}(s_i)\C_{\sD}(s_i)+  {\bf p}^2_{\sD}(s_i)\W_{\sD}(s_i) + \sum_{r=1}^{N} {\bf p}^{2+r}_{\sD}(s_i) \gamma_{r}]\Big]$$
At Nash equilibrium with respect to the solution $\hat{\S}^\*$ of the (source-sink)-min-cut game, changing  ${\bf p}^{2+r}_{\sD}(\hat{s}_i)$ for any value of $r \in \{1, \ldots, N\}$ and for one node, say $\hat{s}_i \in \hat{\S}^\*$, will not improve the payoff $U_{\sD}$.  Firstly, let us assume that the tagging probability at $\hat{s}_i$ changes from ${\bf p}^1_{\sD}(\hat{s}_i)$ to ${\bf p'}^1_{\sD}(\hat{s}_i)$. By equilibrium condition $U_{\sD}({\bf p}_{\sD}, {\bf p}_{\sA}) \geqslant U_{\sD}({\bf p'}_{\sD}, {\bf p}_{\sA})$. Note that by Lemma~\ref{lem:NEdisj} each node with nonzero defender's probability will lie in exactly one chosen path by the adversary.  Hence $$ \pi(\hat{s}_i ) \Big[p(\omega)(\beta^{\sD} - \alpha^{\sD}) + \alpha^{\sD} + {\bf p}^1_{\sD}(\hat{s}_i ) \C_{\sD}(\hat{s}_i )\Big] $$ $$\geqslant \pi(\hat{s}_i )\Big[p'(\omega)(\beta^{\sD} - \alpha^{\sD}) + \alpha^{\sD} + {\bf p'}^1_{\sD}(\hat{s}_i ) \C_{\sD}(\hat{s}_i )\Big], \mbox{~and~}$$ 
$$ \pi(\hat{s}_i )\Big[(p(\omega) - p'(\omega))(\beta^{\sD} - \alpha^{\sD}) + ( {\bf p}^1_{\sD}(\hat{s}_i ) - {\bf p'}^1_{\sD}(\hat{s}_i )) \C_{\sD}(\hat{s}_i )\Big] \geqslant 0.$$
Here $$p(\omega) - p'(\omega) =\underbrace{ \Big( {\bf p}^2_{\sD}(\hat{s}_i) \prod_{\substack{r \in \lambda(\hat{s}_i)}} {\bf p}^{2+r}_{\sD}(\hat{s}_i) \Big)}_{\varphi_1(\hat{s}_i)}\Big({\bf p'}^1_{\sD}(\hat{s}_i) - {\bf p}^1_{\sD}(\hat{s}_i) \Big).$$
This gives $$\Big[\pi(\hat{s}_i)\Big(\varphi_1(\hat{s}_i)(\beta^{\sD} - \alpha^{\sD}) - \C_{\sD}(\hat{s}_i) \Big) \Big]( {\bf p'}^1_{\sD}(\hat{s}_i) - {\bf p}^1_{\sD}(\hat{s}_i)) \geqslant 0.$$
The term $\pi(\hat{s}_i)\Big[\varphi_1(\hat{s}_i)(\beta^{\sD} - \alpha^{\sD}) - \C_{\sD}(s_i) \Big]$ is independent of the change in the tagging probability at $\hat{s}_i$ and the value is either positive or negative. By the equilibrium assumption,  $ \pi(\hat{s}_i)\Big[\varphi_1(\hat{s}_i)(\beta^{\sD} - \alpha^{\sD}) - \C_{\sD}(\hat{s}_i) \Big] = 0$ (since $( {\bf p'}^1_{\sD}(\hat{s}_i) - {\bf p}^1_{\sD}(\hat{s}_i) )$ can be made positive or negative and the inequality must hold for both cases). As $\pi(\hat{s}_i) \neq 0$, this implies $$ \varphi_1(\hat{s}_i) =  \frac{\C_{\sD}(\hat{s}_i) }{(\beta^{\sD} - \alpha^{\sD})} .$$ By varying the tag sink selection probability, i.e., changing ${\bf p}_{\sD}^2(\hat{s}_i)$ to ${\bf p'}_{\sD}^2(\hat{s}_i)$, we get $$ \varphi_2(\hat{s}_i) =  \frac{\W_{\sD}(\hat{s}_i) }{(\beta^{\sD} - \alpha^{\sD})}, \mbox{~where~} \varphi_2(\hat{s}_i) :=   \Big( {\bf p}^1_{\sD}(\hat{s}_i) \prod_{\substack{r \in \lambda(\hat{s}_i)}} {\bf p}^{2+r}_{\sD}(\hat{s}_i) \Big) .$$ Similarly, by varying the probability of selection each of the rules at $\hat{s}_i$,  for  $r\in \{1, \ldots, N\}$, $$ \varphi_{r+2}(\hat{s}_i) =  \frac{\gamma_r }{(\beta^{\sD} - \alpha^{\sD})}.$$ Taking logarithms of $\varphi_1(\hat{s}_i), \ldots, \varphi_{2+N}(\hat{s}_i),$ for a node $\hat{s}_i \in \S$, we get $2+N$ independent linear equations with $2+N$ unknowns. Thus there exists a unique solution to this set of equations which indeed gives the defender's policy at Nash equilibrium. Thus solution to the matrix-game in Table~\ref{tb:matrix} gives a NE to the single-stage case.
\end{proof}
This completes the discussion on the NE for the flow tracking game when the attack consists of a single stage. In the next section, we analyze the equilibrium of multi-stage attack.
\section{Results: Solution to Flow Tracking Game for Multi-Stage Attacks}
\label{sec:results2}
Solving for Nash equilibrium in non-zero sum, imperfect information game settings is generally known to be computationally difficult. In this section, we present an efficient algorithm to compute a locally optimal correlated equilibrium~\cite{papadimitriou2008computing,aumann1987correlated} of the game introduced in the Section~\ref{sec:formulation}.

Formal definitions for the correlated equilibrium and the local correlated equilibrium are given in the Definitions~\ref{def:correlated} and~\ref{def:local_correlated}, respectively. Intuitively correlated equilibrium can be viewed as a general distribution over set of strategy profiles such that if an impartial mediator privately recommends action to each player from this distribution, then no player has an incentive to choose a different strategy. The correlated equilibrium has several  advantages~\cite{papadimitriou2008computing,aumann1987correlated} : (1) it is guaranteed to always exist and (2) it can be solved in polynomial time (i.e. computing a correlated equilibrium only requires solving a linear program whereas solving a Nash equilibrium requires finding its fixed point).

In order to find locally optimal correlated equilibrium solutions, we map our two-player game model into a game with $(M+2)N+ \Lambda + 1$ players. Where $\Lambda = \sum_{i=1}^{N}\Lambda(s_i)$ with $s_i \in V_{\G}$ and $\Lambda(s_i)$ stands for the total number of security rules associated with the node $s_i \in V_{\G}$ in the information flow graph. Then the adversary's strategy is represented by $MN + 1$ players, $MN$ of which represents the adversary's actions at every node $s_i \in V_{\G}$ and for a specified stage $j \in \{1,\hdots, M\}$ and one adversarial player acting on the pseudo node, $s_0$,  whose strategy decides the entry point chosen by the adversary into the system. Similarly, the defender's strategy is represented by $\sum_{i=1}^{N}\Lambda(s_i) + 2N$ players, each one of the $\Lambda(s_i) + 2$ defender player represents the defender's strategy (tag selection, trap selection, and selection of  security rules) at a single node $s_i$.

Formally, we consider a set of players $\{\mathcal{P}^{A_{ij}} : i=1,\ldots,N, j=1,\ldots,M\} \cup \{\mathcal{P}^{D_{i}} : i=1,\ldots,2N+ \Lambda\} \cup \{\mathcal{P}^{s_0}\}$. Each of the players in $\mathcal{P}^{A_{ij}}$ has action space $\mathcal{A}^{A_{ij}} = \cN(s_{i})$, each player in $\mathcal{P}^{D_{i}}$ has action space $\{0,1\}$, depending on the type of defender player representing whether or not to tag or trap/no trap or selecting or not selecting a specific tag check rule. The player  $\mathcal{P}^{s_0}$ has action space $\lambda$. We let $\mathbf{a}^{D}$ denote the set of actions chosen by the players $\{\mathcal{P}^{D_{i}} : i=1,\ldots,2N+ \Lambda\} $ and $\mathbf{a}^{A}$ denote the set of actions chosen by the players $\{\mathcal{P}^{A_{ij}} : i=1,\ldots,N, j=1,\ldots,M\} \cup \{  \mathcal{P}^{s_0} \}$.

The payoffs of the players from a particular action set are given by 
\begin{eqnarray*}
	U^{A_{ij}}(\mathbf{a}_{A}, \mathbf{a}_{D}) &=& U^{A}(\mathbf{a}_{A}, \mathbf{a}_{D}), \\
	U^{D_{i}}(\mathbf{a}_{A}, \mathbf{a}_{D}) &=& U^{D}(\mathbf{a}_{A}, \mathbf{a}_{D}),
\end{eqnarray*}
where $U^{A}$ and $U^{D}$ are as defined in Section~\ref{sec:formulation}. Hence, all adversarial players receive the same utility $U^{A}$, while all defender players receive the utility $U^{D}$. Equivalently, the adversarial players $U^{A_{ij}}$ cooperate in order to maximize the adversary's utility, while the defender players $U^{D_{i}}$ attempt to maximize the defender utility. 

Under the solution algorithm, the game is played repeatedly, with each player choosing its action from a probability distribution (mixed strategy) over the set of possible actions. After observing their utilities, the players update their strategies according to an \textit{internal regret minimization} learning algorithm \cite{CesaLug-06}. A pseudo-code of the proposed algorithm for computing correlated equilibrium strategies for both defender and adversary players is given in Algorithm \ref{algo:correlated}.
\begin{algorithm}[h]
	\caption{Pseudo-code of the algorithm for computing  correlated equilibrium\label{algo:correlated}}
	\begin{algorithmic}[1]
		\State Initialize $t \leftarrow 0$
		\For{$n=1,\ldots,(M+2)N+ \Lambda + 1$}
		\State $\mathbf{p}_{t,n} \leftarrow$ uniform distribution over set of actions
		\EndFor
		\While{$||\mathbf{p}_{t} - \mathbf{p}_{t-1}|| > \epsilon$}
		\For{$n=1,\ldots,(M+2)N+ \Lambda + 1$}
		\State $a_{t,n} \leftarrow $ action chosen from distribution $\mathbf{p}_{t,n}$
		\EndFor
		\For{$n=1,\ldots,(M+2)N+ \Lambda + 1$}
		\State $a_{t,-n} \leftarrow (a_{t,l} : l \neq n)$
		\For{all $(r,s)$ actions of player $n$}
		\State $\mathbf{p}_{t,n}^{r \rightarrow s} \leftarrow \mathbf{p}_{t,n}$
		\State $\mathbf{p}_{t,n}^{r \rightarrow s}(r) \leftarrow 0$
		\State $\mathbf{p}_{t,n}^{r \rightarrow s}(s) \leftarrow \mathbf{p}_{t,n}(r) + \mathbf{p}_{t,n}(s)$
		\State $\Delta_{(r,s),t,n} \leftarrow \frac{\exp{\left(\eta\sum_{u=1}^{t-1}{\mathbf{E}(U^{n}(\mathbf{p}_{u,n}^{r \rightarrow s}, a_{u,-n}))}\right)}}{\sum_{(x,y): x \neq y}{\exp{\left(\eta\sum_{u=1}^{t-1}{{\mathbf E}(U^{n}(\mathbf{p}_{u,n}^{x \rightarrow y}, a_{u,-n}))}\right)}}}$
		\State $\mathbf{p}_{t,n} \leftarrow$ fixed point of equation $\mathbf{p}_{t,n} = \sum_{(i,j): i \neq j}{\mathbf{p}_{t,n}^{r \rightarrow s}\Delta_{(r,s),t,n}}$
		\EndFor
		\EndFor
		\State $t \leftarrow t+1$
		\EndWhile
	\end{algorithmic}
\end{algorithm}

The algorithm initializes the strategies at each node of the information flow graph to be uniformly random. At each iteration $t$, an action  is chosen for each player according to the probability distribution $\mathbf{p}_{t,n}$ of player $n$. After observing the actions from other players, the probability distribution $\mathbf{p}_{t,n}$ is updated as follows. For each pair of actions $r$ and $s$, the  new probability distribution $\mathbf{p}_{t,n}^{r \rightarrow s}$ is generated, in which all of the probability mass allocated to action $r$ is instead allocated to action $s$. The expected utility arising from $\mathbf{p}_{t,n}^{r \rightarrow s}$ can be interpreted as the expected benefit from playing action $s$ instead of $r$ at previous iterations of the algorithm. 

For each pair $(r,s)$, a weight $\Delta_{(r,s),t,n}$ is computed that consists of the relative benefit of each distribution $\mathbf{p}_{t,n}^{r \rightarrow s}$, i.e., pairs $(r,s)$ such that allocating probability mass from $r$ to $s$ produces a larger expected utility will receive higher weight. A new distribution $\mathbf{p}_{t,n}$ is then computed based on the weights $\Delta_{(r,s),t,n}$, so that actions that produced a higher utility for the player will be chosen with increased probability. The algorithm continues until the distributions converge.  The convergence of the algorithm is described by the following proposition.

\begin{prop}
	\label{prop:algo-convergence}
	Algorithm \ref{algo:correlated} converges to a local correlated equilibrium of the game introduced in Section~\ref{sec:formulation}.
\end{prop}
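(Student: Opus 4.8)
The plan is to recognize Algorithm~\ref{algo:correlated} as an instance of an \emph{internal} (swap) \emph{regret-minimization} dynamic, and then to invoke the classical bridge between vanishing internal regret and correlated equilibrium. First I would verify that each player's per-round update is well defined. For player $n$ with finite action set $\A^n$, the quantities $\Delta_{(r,s),t,n}$ form a nonnegative, normalized array that can be read as the off-diagonal entries of a stochastic matrix on $\A^n$; the subsequent fixed-point update sets $\mathbf{p}_{t,n}$ to a stationary distribution of this matrix. Such a distribution exists for any finite stochastic matrix by the Perron--Frobenius theorem, so the fixed-point step is always solvable. This is precisely the Hart--Mas-Colell / Blum--Mansour reduction that converts an external-regret minimizer---here, exponential weights (Hedge) over the finitely many ``swap'' strategies $\mathbf{p}_{t,n}^{r\to s}$---into an internal-regret minimizer.

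Second, I would bound the internal regret of each player. Because the cumulative exponential weights in the $\Delta_{(r,s),t,n}$-update are exactly the Hedge weights over the family $\{(r,s): r\neq s\}$, the standard regret bound (see \cite{CesaLug-06}) yields, for each player $n$ and each pair $(r,s)$, a cumulative internal regret of order $\sqrt{T\log|\A^n|}$, hence a time-averaged internal regret that vanishes as $T\to\infty$. Concretely, no single player can retrospectively improve its average realized utility by committing to always play $s$ whenever the algorithm recommended $r$. For this bound to apply I would also check that the expected utilities $\mathbf{E}(U^{n}(\mathbf{p}_{u,n}^{r\to s}, a_{u,-n}))$ are uniformly bounded in $t$; this holds because $U^{\sA}$ and $U^{\sD}$ in~\eqref{eq:Ua} and~\eqref{eq:Ud} involve finitely many stages with bounded rewards and costs.

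Third, I would assemble the per-player guarantees into the equilibrium statement. The key structural feature of the reduction in Section~\ref{sec:results2} is that every player controls exactly one coordinate of a joint strategy: player $\mathcal{P}^{A_{ij}}$ controls $\mathbf{p}_{\sA}(s_i^j,\cdot)$, player $\mathcal{P}^{s_0}$ controls the entry distribution, and each $\mathcal{P}^{D_i}$ controls one of the $2+N$ defender components at a single node. Consequently the admissible unilateral deviations of any single player are exactly the one-coordinate deviations $p_{\sD}^{\prime}(s_i)$ and $p_{\sA}^{\prime}(s_i^j,\cdot)$ appearing in Definition~\ref{def:local_correlated}. By the standard theorem on regret dynamics (\cite{CesaLug-06}), when all players have asymptotically vanishing internal regret the empirical distribution of the joint action profiles converges to the set of correlated equilibria of the induced multi-player game; since all adversarial players share the payoff $U^{\sA}$ and all defender players share $U^{\sD}$, the per-coordinate no-regret inequalities are exactly the expectations over this empirical distribution demanded by Definition~\ref{def:local_correlated}, identifying the limit as a local correlated equilibrium of the original game.

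I expect the main obstacle to be reconciling the two notions of convergence. The learning-theoretic theorem delivers convergence of the \emph{empirical (time-averaged) distribution of play} to the correlated-equilibrium set, whereas the while-loop in Algorithm~\ref{algo:correlated} terminates on convergence of the \emph{current} strategy iterates $\mathbf{p}_{t}$. I would therefore either argue that the stationary fixed-point updates drive the per-round strategies themselves to a limit (so that the stopping test $\|\mathbf{p}_{t}-\mathbf{p}_{t-1}\|>\epsilon$ is meaningful), or restate the conclusion in terms of the empirical distribution and show the two agree in the limit. A secondary subtlety is that ``local'' correlated equilibrium restricts deviations to a single coordinate, so the argument must only use the per-player (single-coordinate) regret guarantees and must not silently invoke multi-coordinate deviations; fortunately the one-player-per-coordinate construction makes this restriction automatic.
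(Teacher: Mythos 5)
Your proposal is correct and follows essentially the same route as the paper: invoke the internal-regret-minimization result of \cite{CesaLug-06} to get a correlated equilibrium of the $((M+2)N+\Lambda+1)$-player game, then use the facts that all adversarial (resp.\ defender) players share the utility $U^{\sA}$ (resp.\ $U^{\sD}$) and that each player controls exactly one coordinate to conclude that the resulting distribution satisfies Definition~\ref{def:local_correlated}. The extra detail you supply (existence of the stationary fixed point, boundedness of the utilities, and the distinction between convergence of the empirical distribution of play versus the iterates $\mathbf{p}_t$ tested in the while-loop) is a genuine refinement of points the paper leaves implicit, but it does not change the argument's structure.
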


\begin{proof}
	By \cite{CesaLug-06}, Algorithm~\ref{algo:correlated} converges to a correlated equilibrium of the $((M+2)N+ \Lambda + 1)$-player game. Equivalently, by Definition~\ref{def:correlated}, for any $s_{i}$ and $p_{D_{i}}^{\prime} \in [0,1]$,  the joint distribution $P$ returned by the algorithm satisfies 
	\begin{equation}
		\label{eq:correlated-1}
		\mathbf{E}(U^{D_{i}}(p_{D_{i}}, \mathbf{p}_{D_{-i}}, \mathbf{p}_{A})) \geq \mathbf{E}(U^{D_{i}}(p_{D_{i}}^{\prime}, \mathbf{p}_{D_{-i}}, \mathbf{p}_{A})).
	\end{equation}
	Since the utility $U^{D_{i}}$ is equal to $U^{D}$ for all $i \in \{1,\ldots,N\}$, Eq. (\ref{eq:correlated-1}) is equivalent to 
	\begin{equation}
		\label{eq:correlated-2}
		\mathbf{E}(U^{D}(p_{D_{i}}, \mathbf{p}_{D_{-i}}, \mathbf{p}_{A})) \geq \mathbf{E}(U^{D}(p_{D_{i}}^{\prime}, \mathbf{p}_{D_{-i}}, \mathbf{p}_{A})).
	\end{equation}
	Similarly, for any $s_{i}^j \in \{\S\times \{1,\ldots,M\}\}\cup\{ s_0^1 \}$ and any $p_{A_{ij}}^{\prime}$, we have 
	\begin{equation}
		\label{eq:correlated-3}
		\mathbf{E}(U^{A_{ij}}(\mathbf{p}_{D}, \mathbf{p}_{A_{-ij}},p_{A_{ij}})) \geq \mathbf{E}(U^{A_{ij}}(\mathbf{p}_{D}, \mathbf{p}_{A_{-ij}},p_{A_{ij}}^{\prime}))
	\end{equation}
	which is equivalent to 
	\begin{equation}
		\label{eq:correlated-4}
		\mathbf{E}(U^{A}(\mathbf{p}_{D}, \mathbf{p}_{A_{-ij}},p_{A_{ij}})) \geq \mathbf{E}(U^{A}(\mathbf{p}_{D}, \mathbf{p}_{A_{-ij}},p_{A_{ij}}^{\prime}))
	\end{equation}
	Equations (\ref{eq:correlated-2}) and (\ref{eq:correlated-4}) imply that the output of Algorithm \ref{algo:correlated} satisfies the conditions of Definition \ref{def:local_correlated} and hence is a local correlated equilibrium.
\end{proof}

\noindent The following Proposition provides the complexity analysis for the proposed algorithm.

\begin{prop}
	\label{prop:complexity}
	With probability $(1-\zeta)$, Algorithm \ref{algo:correlated} returns an $\epsilon$-correlated equilibrium using $O\left(\frac{N^{2}(M+2)+N(\Lambda + 1)}{\epsilon^{2}}\ln{\left(\frac{N^{2}(M+1)+N(\Lambda + 1)}{\zeta}\right)}\right)$ evaluations of the utility function.
\end{prop}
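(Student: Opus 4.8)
The plan is to turn the qualitative convergence statement of Proposition~\ref{prop:algo-convergence} into a quantitative rate by invoking the internal-regret bound of \cite{CesaLug-06} and then charging the cost of the utility evaluations per round. First I would note that the empirical distribution of play over $T$ rounds is an $\epsilon$-correlated equilibrium (Definitions~\ref{def:correlated} and~\ref{def:local_correlated}) once the time-averaged internal regret of \emph{every} player is at most $\epsilon$. For a player $n$ with action set of size $K_n=|\mathcal{A}^{n}|$, the exponential-weights forecaster run over the $K_n(K_n-1)$ swap rules $\mathbf{p}_{t,n}^{r\rightarrow s}$ of Algorithm~\ref{algo:correlated} attains cumulative internal regret $O(\sqrt{T\ln K_n})$ by the standard bound in \cite{CesaLug-06}. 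Since the adversarial players act on the neighborhoods $\cN(s_i)$ or on $\lambda$ and the defender players have binary action sets, $K_n\le N+1$ uniformly, so each player's expected average internal regret is $O(\sqrt{(\ln N)/T})$.

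Second, because each player samples a single action from its mixed strategy instead of committing its whole distribution, I would pass from the expected regret to the realized, sampled regret that actually determines the empirical play. The per-round differences between realized losses and their conditional expectations form a bounded martingale-difference sequence, so a Hoeffding--Azuma inequality controls the deviation of any fixed player's realized average regret from its expectation by $O(\sqrt{\ln(1/\zeta')/T})$ with probability at least $1-\zeta'$. Taking a union bound over all $(M+2)N+\Lambda+1$ players and their actions, i.e. over $O(N^2(M+1)+N(\Lambda+1))$ regret events, and choosing $\zeta'$ so that the total failure probability is at most $\zeta$, the concentration term becomes $O(\sqrt{\ln((N^2(M+1)+N(\Lambda+1))/\zeta)/T})$; note that, since the logarithm absorbs polynomial factors, this estimate is insensitive to whether one unions over actions or over the $O(K_n^2)$ swap pairs.

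Third, I would force both the regret term $\sqrt{(\ln N)/T}$ and the concentration term to drop below $\epsilon$. The concentration term dominates, so solving for $T$ gives the iteration count $T=O(\epsilon^{-2}\ln((N^2(M+1)+N(\Lambda+1))/\zeta))$. Finally I would account for the per-round work: the outer loop of Algorithm~\ref{algo:correlated} ranges over all $(M+2)N+\Lambda+1$ players, and for each player $n$ the weights $\Delta_{(r,s),t,n}$ depend only on the $K_n$ conditional utilities $U^{n}(a,a_{t,-n})$, $a\in\mathcal{A}^{n}$, which are computed once and reused across all swap pairs. Hence one iteration costs $O(\sum_n K_n)=O(N((M+2)N+\Lambda+1))=O(N^2(M+2)+N(\Lambda+1))$ utility evaluations, and multiplying this by $T$ yields the claimed bound.

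I expect the decisive step to be the high-probability argument: the regret guarantee of \cite{CesaLug-06} is most naturally stated for expected (full-information) losses, whereas Algorithm~\ref{algo:correlated} is driven by sampled actions, so the crux is to bound the gap between the realized and the expected average regret uniformly over all players via martingale concentration and a union bound. Pinning the union bound down to exactly the $\ln((N^2(M+1)+N(\Lambda+1))/\zeta)$ factor, and matching the reuse-based per-iteration evaluation count to the leading factor $N^2(M+2)+N(\Lambda+1)$, is where the bookkeeping has to be done with care.
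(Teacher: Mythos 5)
Your proposal is correct and follows essentially the same route as the paper: the paper's proof simply cites the high-probability internal-regret guarantee of Cesa-Bianchi and Lugosi (Chapter~7, Section~7.4), which gives $\max_n \frac{16}{\epsilon^2}\ln\frac{N_nK}{\zeta}$ iterations and $\frac{16N_nK}{\epsilon^2}\ln\frac{N_nK}{\zeta}$ utility evaluations, and then substitutes $N_n \le N$ and $K = (M+2)N + \Lambda + 1$ exactly as you do. Your exponential-weights regret bound, Hoeffding--Azuma concentration, union bound over players, and per-iteration evaluation accounting are just the internals of that cited theorem unpacked, arriving at the same bound (and your observation that the $(M+1)$ versus $(M+2)$ mismatch inside the logarithm is immaterial is consistent with it being a typo in the statement).
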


\begin{proof}
	By \cite[Chapter~7, Section~7.4]{CesaLug-06},  learning-based algorithms return an $\epsilon$-correlated equilibrium with probability $(1-\zeta)$ within $\max_{n}{\frac{16}{\epsilon^{2}}\ln{\frac{N_{n}K}{\zeta}}}$ iterations, where $N_{n}$ is the number of actions for player $n$ and $K$ is the number of players, incurring a total of $\frac{16N_{n}K}{\epsilon^{2}}\ln{\frac{N_{n}K}{\zeta}}$ evaluations of the utility function. In this case, $N_{n} \leq N$ and $K=(M+2)N+ \Lambda + 1$, resulting in the desired complexity bounds. 
\end{proof}
Proposition \ref{prop:complexity} shows that convergence of the algorithm is sublinear in the number of nodes, with a total complexity that is quadratic in the number of nodes and linear in the number of stages.
\section{Experimental Analysis}\label{sec:sim}
In this section, we  provide the experimental validation of our model and  results using  real-world attack data set obtained using  Refinable Attack INvestigation system (RAIN) \cite{JiLeeDowWanFazKimOrsLee-17}, \cite{JiLeeFazAllDowKimOrsLee-18}  for a three day nation state attack.  We implement our model and run Algorithm~\ref{algo:correlated} on the information flow graph generated using the system log data obtained using the RAIN system for the day one of the nation state attack.  Using the results obtained we verify the correctness of the proposed algorithm and also perform sensitivity analysis by varying the cost of defense (i.e., tagging costs, trapping costs and individual security rule selection costs at each nodes in the underlying information flow graph) for the defender. 
This analysis enable us to infer the optimal strategies of the players and the sensitivity of the model with respect to cost parameters for a given attack data set (information flow graph with specified destinations for each attack stage). In order to apply the proposed analysis on any real-time data attack data set, we first construct the information flow graph for the system under consideration and then run Algorithm~\ref{algo:correlated} on this graph to obtain the defender's policy, i.e., tagging locations, trapping locations and selection of appropriate security rules,  at a local equilibrium of the multi-stage game.
\begin{figure*}[t]
	\centering
	$\begin{array}{cc}
	\includegraphics[width=3in]{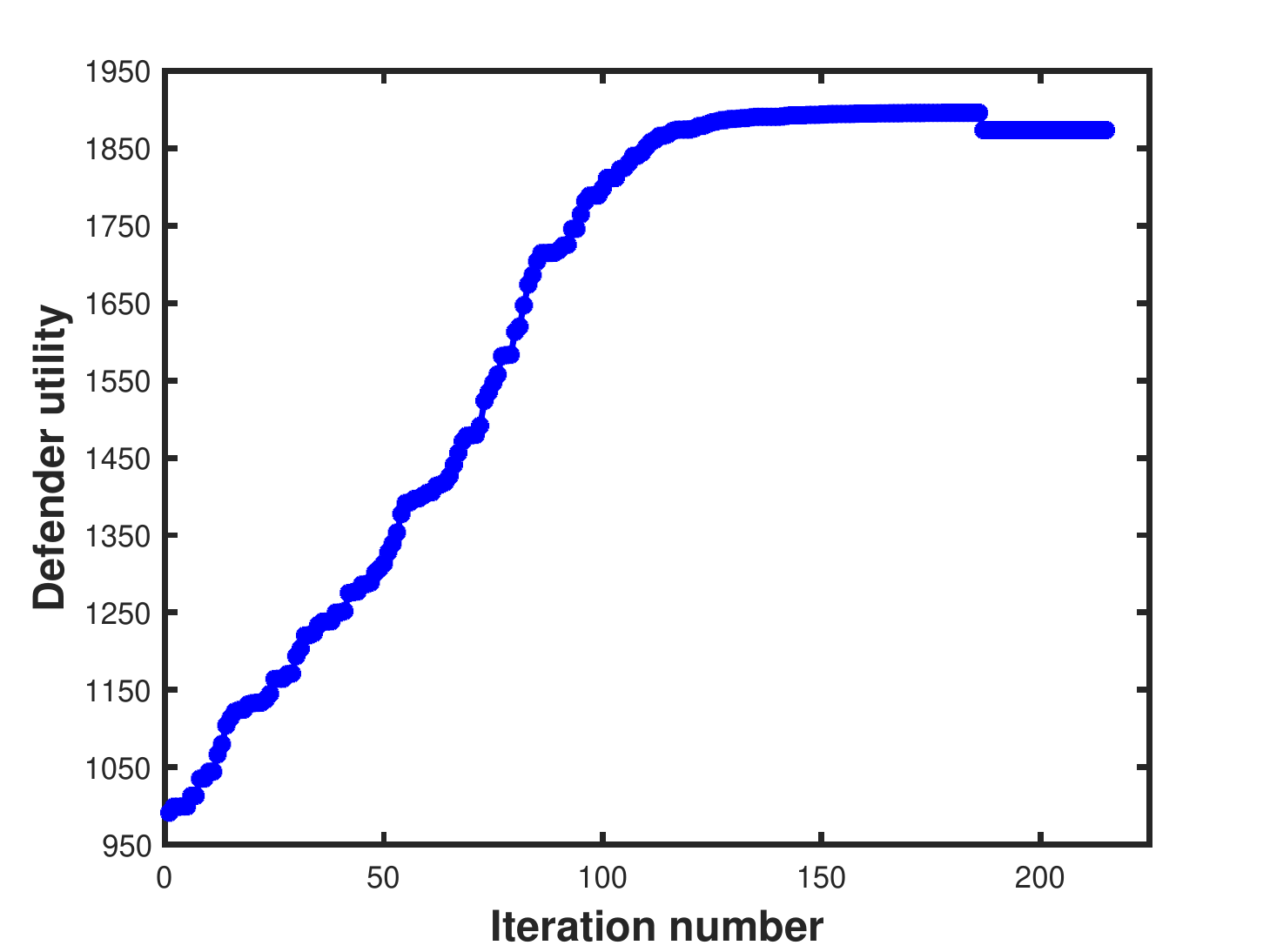} &
	\includegraphics[width=3in]{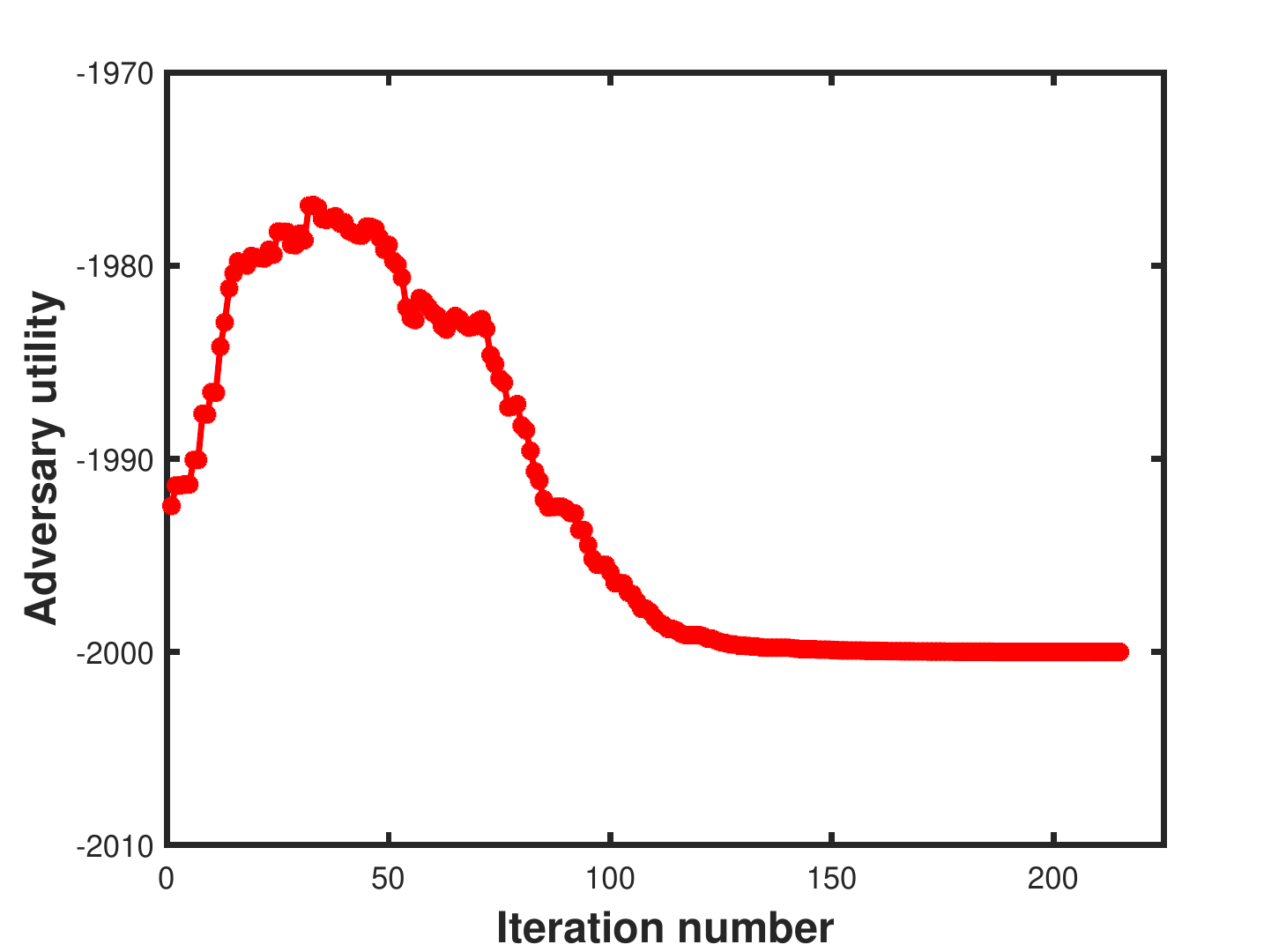}\\
	\mbox{(a)} & \mbox{(b)}
	\end{array}$
	\caption{(a) Average utility of the defender and (b) Average utility of the adversary, at each iteration of  Algorithm~\ref{algo:correlated} with cost parameters of the game  set as follows: $\beta^A_1 = 100, \beta^A_2 = 200, \beta^A_3 = 500, \beta^A_4 = 1200,$ $\alpha^A = -2000, \alpha^D  = 2000$, $\beta^D_1 = -100, \beta^D_2 = -200, \beta^D_3 = -500, \beta^D_4 = -1200$. The costs of tagging and trapping  at each node are set with fixed cost value of $c_1=-50$ and $c_2 =-50$, respectively, multiplied by the fraction of flows through each node in the information graph. The cost of selecting the security rules are set as $\gamma_1 = \ldots = \gamma_N =-50$. For simulation purposes we assume that the fraction of the flows through each node in the information flow graph is uniform.} \label{fig:1}
\end{figure*}

We present below  the details of the attack we consider and the steps involved in the construction of the information flow graph for that attack.

\subsection{Attack Description}
The evaluation was completed on a nation state attack (i.e., state-of-the-art APT (Advanced Persistent Threat) attack) orchestrated by a red-team during an adversarial engagement. The engagement was organized by a US government agency  (US DARPA). During the engagement, we leveraged RAIN~\cite{JiLeeDowWanFazKimOrsLee-17} to record the whole-system  log. At a high-level the goal of the adversaries' campaign was to steal sensitive proprietary and personal information from the targeted company. The attack is designed to run through three days. We only consider the day one log data collected via RAIN for our evaluation purposes. Through our extensive analysis,  we partitioned the attack in day 01 into four key stages: initial compromise, internal reconnaissance,  foothold establishment, and data exfiltration. The initial compromise leveraged a spear-phishing attack,  which lead the victim to a website that was hosting ads from a malicious web server. The victim navigated to the website, which exploited a vulnerability in the Firefox browser. Once the attackers had compromised the machine, the next stage of the APT leveraged common utilities to do internal reconnaissance. The goal of this stage was to fingerprint the compromised system to detect running processes and network information. Next, the attackers established a foothold by writing a malicious program to disk. The malicious program was eventually executed, and established a backdoor, which was used to  continuously exfiltrate the companies sensitive data. 

The system log data for the day 01 of the nation state attack is collected with the annotated entry points of the attack and the attack destinations corresponding to each stage of the attack. Initial conversion of the attack  data into an information flow graph resulted in a coarse-grain graph with $\approx$ 1,32,000 nodes and $\approx$ 2 million edges. Coarse graph captures the whole system data during the recording time which includes the attack related data and lots of data related to system's background processes (noise). Hence coarse graph provides very little security sensitive (attack related)  information about the underlying system and it is computationally intensive to run our algorithm on such coarse graph. Without loss of any relevant  information, now we prune the coarse graph to extract the security sensitive information about the system from the data \cite{JiLeeDowWanFazKimOrsLee-17}.  The resulting  information flow graph is called as {\em security sensitive information sub-graph} \cite{JiLeeDowWanFazKimOrsLee-17} and we run our experimental analysis on this graph. The pruning includes the following two major steps:
\begin{enumerate}
	\item Starting from the provided attack destinations, perform up stream, down stream and point to point stream techniques discussed in \cite{JiLeeDowWanFazKimOrsLee-17} to prune the coarse information flow graph \label{step1}
	\item Further prune the resulting subgraph from Step~\ref{step1}) by combining object nodes (e.g. files, net-flow objects) that belong to the same directories or that  use the same network sockets. 
\end{enumerate}

\begin{figure}[h]
	\centering
	{\includegraphics[width=0.5\textwidth]{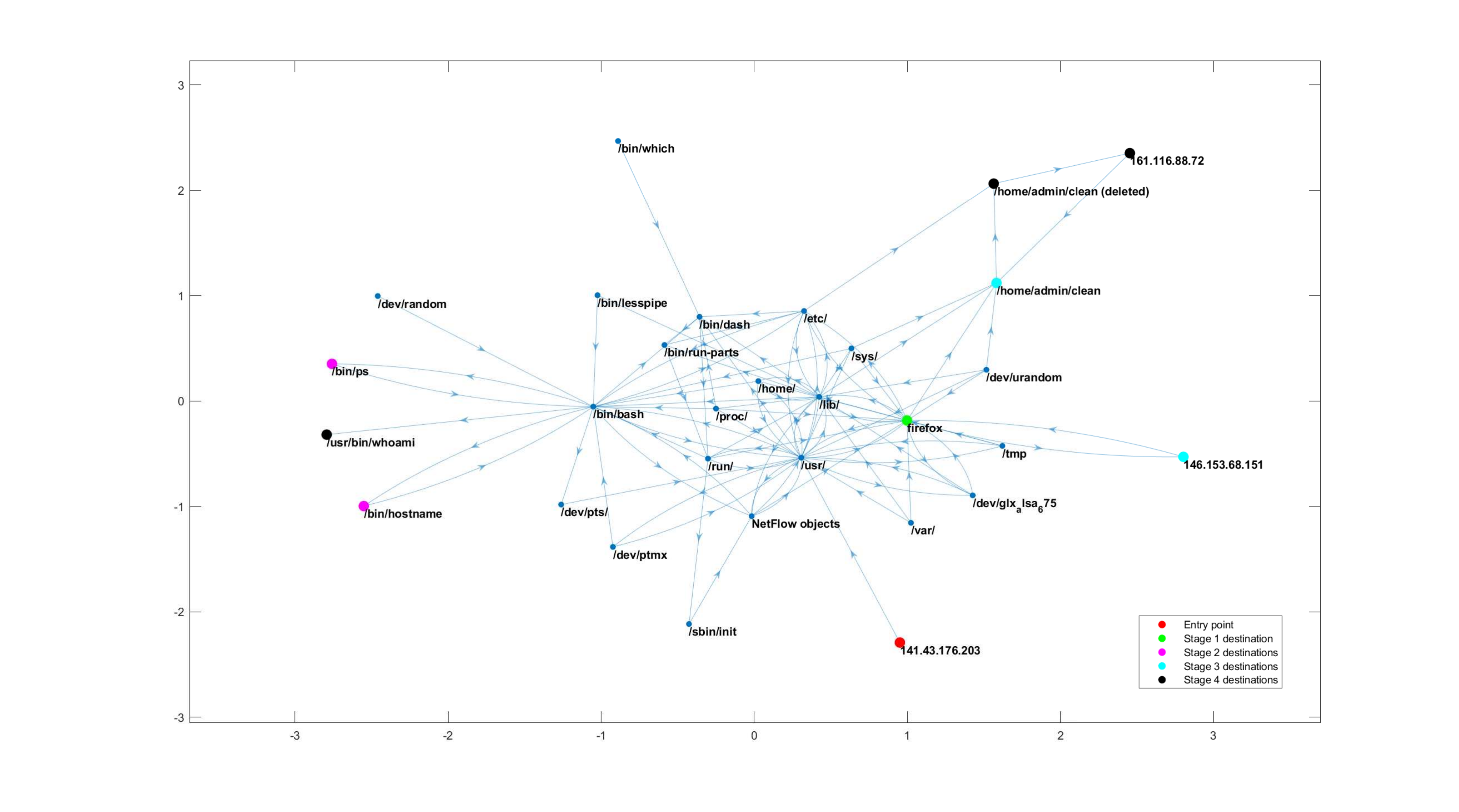}}
	\hspace{\parindent}
	\caption{Security sensitive information flow sub-graph for the nation state attack}\label{fig:3}
\end{figure}

The resulting pruned information flow graph consists of $30$ nodes ($N=30$) out of which $8$ nodes are identified as attack destination nodes corresponding to each of the $4$ stages ($M=4$) of the day 01 nation state attack. One node related to a net-flow object has been identified as an entry point used for the attack ($|\lambda|=1$). Note that, even when the sensitive locations in the system are known it may not be feasible to do tagging, trapping, and security analysis at that location (entry point of attack) which is captured in our model by the costs for tagging, trapping, and performing authenticity analysis using a security rule.
\subsection{Case Study 1: Convergence of the algorithm}\label{study1}
In this section, we provide a case study that validates the convergence of the proposed algorithm. The game parameters (rewards, penalties and costs)  of the players are set as follows: $\beta^A_1 = 100, \beta^A_2 = 200, \beta^A_3 = 500, \beta^A_4 = 1200$, $\alpha^A = -2000$, $\alpha^D  = 2000$, $\beta^D_1 = -100, \beta^D_2 = -200, \beta^D_3 = -500, \beta^D_4 = -1200$. We set costs of tagging and  trapping at each node to have fixed cost value of $c_1=-50, c_2 = -50$, respectively, multiplied by the fraction of flows through each node in the information graph and the costs for selecting security rules as $\gamma_1=\ldots=\gamma_N=-50$. For simulation purposes we assume fraction of the flows through each node is uniformly distributed. One can estimate the distribution for the fraction of flows through each node by counting number of events associated with each node in the RAIN whole system provenance and normalizing those values with the total number of events occurred during the recording. Figure~\ref{fig:1} plots the utility values for both players at each iteration of Algorithm~\ref{algo:correlated} for the above-mentioned parameters. It shows that both defender and adversary utilities converges within finite number of iterations.

\subsection{Case Study 2: Utility of the defender vs. defense cost}
This case study is used to analyze the effect of the cost of defense on the defender's utility. We use same game parameters as used in the Case Study~\ref{study1}. Then at each experiment round we scale all three defense cost components of the defender by scaling factors 0.01, 0.1, 0.5, 1, 3, 6 and 10. Figure~\ref{fig:2} shows that the expected utility of the defender starts decreasing exponentially when the cost of defense increases to higher values. When the defense cost increases defender incurs more resource cost to maintain the same level of security (maintain low attack success probability by setting up tag sources and traps) in the system or higher costs keeps defender away from frequently deploying tag sources and traps in the system and hence attack successes rate increases and defender's utility decreases. This results hence implies that an optimal selection of the locations for tagging and trapping is critical for a resource-efficient implementation of detection system against APTs.
\begin{figure}[h]
	\centering
	{\includegraphics[width=0.4\textwidth]{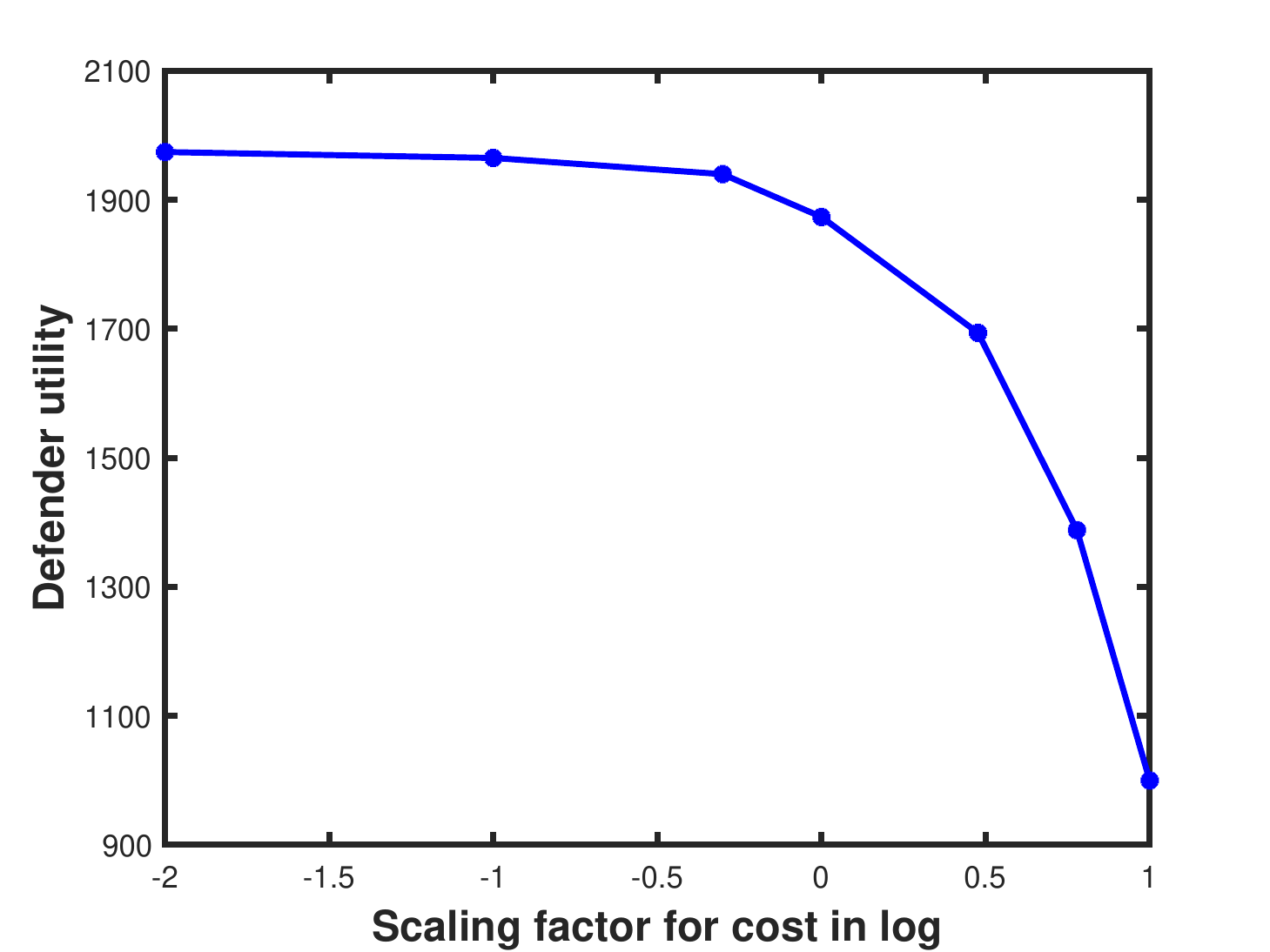}}
	\hspace{\parindent}
	\caption{Average utility of defender as a function of the cost for defense. In each realization of the experiment we scale the components of the cost for defense (i.e. costs for tag, trap and tag check rule selection) by a increasing scaler factors 0.01, 0.1, 0.5, 1, 3, 6 and 10. All the other game parameters have been fixed to constant values used in the Case Study~\ref{study1}.}\label{fig:2}
\end{figure}
\section{Conclusions}\label{sec:conclu}
This paper proposed a game theoretic framework for cost-effective real-time detection of Advanced Persistent Threats (APTs) that perform a multi-stage attack. We used an information flow tracking-based detection mechanism as APTs continuously introduce information flows in the system. As the defense mechanism is unaware of the stage of the attack and also can not distinguish whether an information flow is malicious or not, the game considered in this paper has asymmetric information structure.  Further, the resource costs of the attacker and the defender are not same resulting in a multi-stage nonzero-sum imperfect information game. We first computed the best responses of both the players. For the adversary, we showed that the best response can be obtained using a shortest path algorithm in polynomial time. For the defender, we showed that the utility function of the defender is submodular  for a given adversary strategy and hence a $1/2$-optimal\footnote{An $\epsilon$-optimal solution is a feasible solution whose value is at most $\epsilon$ times that of the actual optimum value.} solution to the best response can be found in polynomial time. For solving the game in order to obtain an optimal policy for the defender, we first considered the single-stage attack case and characterized the set of Nash equilibria by proving the equivalence of the sequential game to a bimatrix-game formulation. Then, we considered the multi-stage case and provided a polynomial time algorithm to find an $\epsilon$-correlated equilibrium, for any given $\epsilon$. We performed experimental analysis of our model on real-data for a three day nation state attack obtained using Refinable Attack INvestigation (RAIN) system.
\bibliographystyle{myIEEEtran}      
\bibliography{MURI_references}
\end{document}